\documentclass[a4paper,UKenglish]{article}
\usepackage[utf8]{inputenc}
\usepackage[T1]{fontenc}
 
\usepackage{microtype}
\usepackage{xspace}
\usepackage{tikz}
\usepackage{multicol}
\usepackage{booktabs}

\graphicspath{{./img/}}

\usepackage[left=2.5cm,right=2.5cm,top=2.5cm,bottom=3cm]{geometry}
\bibliographystyle{plainurl}
\usepackage{amsmath,amsthm,amssymb}
\usepackage[pdfborder={0 0 0}]{hyperref}
\usepackage{colortbl}
\usepackage{booktabs}

\title{Computing Kernels in Parallel: Lower~and~Upper~Bounds}
\author{Max Bannach \and Till Tantau}
\date{%
  Institute for Theoretical Computer Science,\\
  Universit\"at zu L\"ubeck\\
  L\"ubeck, Germany \\
  \texttt{\{bannach,tantau\}@tcs.uni-luebeck.de}
}

\newtheorem{theorem}{Theorem}[section]
\newtheorem{corollary}[theorem]{Corollary}
\newtheorem{lemma}[theorem]{Lemma}

\theoremstyle{plain}
\newtheorem{fact}[theorem]{Fact}
\newtheorem{bremark}[theorem]{Remark}

\newtheorem{definition}[theorem]{Definition}

{%
  \leavevmode\nobreak\par
  \begin{list}%
    {}%
    {%
      \def\labelstyle{\itshape}
      \setlength{\topsep}{0pt}%
      \settowidth{\labelwidth}{\labelstyle Parameter:}%
      \setlength{\leftmargin}{\labelwidth}%
      \addtolength{\leftmargin}{\labelsep}%
      \setlength{\itemsep}{0pt}%
      \setlength{\parsep}{0pt}%
    }%
    }%
    {%
  \end{list}%
}

\newcommand\Class[1]{%
  \mathchoice%
  {\text{\normalfont\fontsize{9pt}{10pt}\selectfont$\mathrm{#1}$}}%
  {\text{\normalfont\fontsize{9pt}{10pt}\selectfont$\mathrm{#1}$}}%
  {\text{\normalfont$\mathrm{#1}$}}%
  {\text{\normalfont$\mathrm{#1}$}}%
}

\newcommand\Para{\mathrm{para\text-}}

\newcommand\Quasi{\mathrm{quasi\text-}}

\newcommand{\Lang}[1]{%
  \ifmmode{%
    \text{\normalfont\textsc{#1}}%
  }%
  \else
  {\normalfont\textsc{#1}}%
  \fi}

\def\tw{\mathrm{tw}}
\def\pw{\mathrm{pw}}
\def\td{\mathrm{td}}

\newcommand\PLang[1][]{p\def\test{#1}\ifx\test\stockhustantauempty\else_{\mathrm{#1}}\fi\text-\penalty15\Lang}
\newcommand\PLangText[1][]{$p\def\test{#1}\ifx\test\empty\else_{\mathrm{#1}}\penalty15\fi$-\penalty15\hskip0pt\textsc}

\begin{document}
\maketitle

\begin{abstract}
  Parallel fixed-parameter tractability studies how parameterized
  problems can be solved in parallel. A surprisingly large number of
  parameterized problems admit a high level of parallelization, but
  this does not mean that we can also efficiently compute
  small problem kernels in parallel: known kernelization
  algorithms are typically highly sequential. In the present paper,
  we establish a number of upper and lower bounds concerning the sizes
  of kernels that can be computed in parallel. An intriguing finding
  is that there are complex trade-offs between kernel size and the
  depth of the circuits needed to compute them: For the
  vertex cover problem, an exponential kernel can be computed by
  AC$^0$-circuits, a quadratic kernel by
  TC$^0$-circuits, and a linear kernel by randomized NC-circuits with
  derandomization being possible only if it is also possible for
  the matching problem. Other natural problems for which similar (but
  quantitatively different) 
  effects can be observed include tree decomposition problems parameterized
  by the vertex cover number, the undirected feedback vertex set
  problem, the matching problem, or the point line cover problem.  We also
  present natural problems for which computing kernels is inherently sequential.
\end{abstract}

\section{Introduction}
The core objective of parameterized complexity has classically been to
determine which problems can be solved in ``$\Class{FPT}$ time,''
meaning time $f(k)\cdot n^c$ for instances of size~$n$, where $c$~is a 
constant, $f$~is an arbitrary computable function (usually at least 
exponential), and $k$~is a hopefully small instance parameter. Over
the last 25 years, theoreticians in the field  
have been very successful at determining which problems admit
algorithms of this kind and practitioners have been very successful at
implementing them. In both cases, the focus has traditionally been
on finding \emph{sequential} algorithms, but in recent years interest
in \emph{parallel} algorithms has sparked, leading to the new field of
parallel fixed parameter tractability. 

In classical sequential $\Class{FPT}$ algorithms, \emph{kernelization
  algorithms} play a key role. They shrink the 
input to a small but difficult core (called the \emph{kernel}),
leading to the following design 
principle of modern parameterized algorithms:
Firstly, in polynomial time, a
kernelization algorithm computes a kernel that is, secondly, solved
using an exponential (or worse) time algorithm -- yielding a total running time
of the form $f(k)+n^c$. Regarding the parallelization of these two
algorithmic steps, it turns out that the second one is usually the
easier one: the kernel is typically processed using the search
tree technique or just by ``brute force,'' both of which allow natural
parallelizations. In contrast, kernelization algorithms are typically
described in a very sequential way, namely ``apply these reduction
rules over and over again.'' This means that designing parallel
fixed-parameter algorithms effectively means designing parallel
kernelization algorithms -- which is exactly what this paper
addresses.  

\subparagraph*{Our Contributions.}

We start our systematic investigation of parallel kernelization by
linking the parameterized analogues of the $\Class{NC}$-hierarchy to
kernel computation using $\Class{NC}$-circuits. Such a
link is already known for $\Class{FPT}$ and kernels computed in
polynomial time. We establish a circuit version of
the well-known result that all algorithms running in time $f(k)\cdot
n^c$ can also be implemented with running time $g(k)+n^c$: We can turn
any circuit family of size $f(k)\cdot n^c$ and depth $f(k)+c\log^i n$
into one of size $g(k)+n^{c'}$ and depth $c'\log^i n$ (note that we
can remove the parameter dependence from the depth).

The bulk of the paper consists of a series of lower and upper bounds
on the size of kernels that can be computed by circuits of certain depths.
We show that for natural problems like the vertex cover
problem intriguing trade-offs arise: the faster our algorithm,
the worse our kernel. For $\PLang{vertex-cover}$ we show that a simple
exponential kernel can be computed in $\Class{AC}^0$, a quadratic
kernel can be computed in $\Class{TC}^0$, and a linear kernel can be
computed in randomized $\Class{NC}$. Other problems for which we establish similar 
results include the  tree width, path width, and tree depth problems
parameterized by the vertex cover number of the input graph.

On the negative side, we also establish a number of lower bounds for
the parallel computation of small kernels. We show that a classical $2k$
kernel for the vertex cover problem can only be computed in parallel
if the maximum matching problem for bipartite graphs is in
$\Class{NC}$, for which $\Class{RNC}^2$ and $\Class{quasi\text-NC}^2$
are the best known upper bounds; that classic reduction rules for
feedback vertex set are $\Class{P}$-complete (but an exponential
kernel can be computed in $\Class{AC}^2$); that for the point line cover problem
we cannot (absolutely, without any assumptions) compute any
kernel in $\Class{AC}^0$ (but we can compute a quadratic one in $\Class{TC}^0$); and
that kernels for generalized versions of Horn satisfiability, linear
programming, and maximum flow cannot be computed in polylogarithmic
time unless $\Class{NC} = \Class P$. The later results in fact
presents three \emph{natural} $\Class{FPT}$-complete problems, which
demonstrate the limits of fixed parameter parallelization.

Table~\ref{table:overview} summarizes which trade-offs are established
in this paper between the parallel time needed to compute kernels and
their sizes. 

\begin{table}[htpb]
  \caption{An overview of problems studied in this paper,
    showing which kernel size can be achieved in certain layers of the
    $\Class{NC}$-hierarchy. An explicit function represents the best
    bound the authors are aware of, pointed out in this work or (for
    the $\Class{P}$-column) in cited
    works; $f(k)$ corresponds to kernels originating from
    Theorem~\ref{theorem:kernel}; and ``--'' means that there is no kernel
    of any size
    (either absolutely or unless $\Class{TC}^0=\Class{L}$ for~--$^1$,
    unless $\Class{TC}^0=\Class{NL}$ for~--$^2$,
    unless $\Class{TC}^0=\Class{P}$  for~--$^3$, unless
    $\Class{NC}=\Class{P}$ for~--$^4$, unless
    $\Class{P}\subseteq\Class{RNC}$ for~--$^5$, and unless
    $\Class{NC^1}=\Class{P}$ for~--$^6$). For problems
    parameterized by the vertex cover number, $S$ is the given vertex cover; the $\delta$ in the first column can be any
    fixed positive integer.
  }   
  \label{table:overview}
  \begin{center}
  \begin{tabular}{lccccc}
    \toprule
    Problem & &&\hbox to0pt{\hss Kernel size achievable in\hss}\\
    &$\Class{AC}^0$ & $\Class{TC}^0$ & $\Class{NC}$ & $\Class{RNC}$ & $\Class{P}$\\
    \cmidrule(rl){1-6}
    $\PLang{vertex-cover}$     & $2^{\sqrt[\delta]{k}}$   & $k^2+2k$ & $k^2+2k$ & $2k$ & $2k-c\log k$\\
    $\PLang{matching}$ & $2^{\sqrt[\delta]{k}}$ & $6k^2$ & $6k^2$ & $1$ & $1$\\
    $\PLang[vc]{tree-width}$   & $2^{\sqrt[\delta]{|S|}}$   & $|S|^3$    & $|S|^3$ & $|S|^3$ &  $|S|^3$\\
    $\PLang[vc]{path-width}$   & $2^{\sqrt[\delta]{|S|}}$   & $|S|^3$    & $|S|^3$ & $|S|^3$ &  $|S|^3$\\
    $\PLang[vc]{tree-depth}$   & $2^{\sqrt[\delta]{|S|}}$   & $|S|^3$    & $|S|^3$ & $|S|^3$ &  $|S|^3$\\
    $\PLang{point-line-cover}$ & --      & $k^2$    & $k^2$ & $k^2$ &  $k^2$\\
    $\PLang{feedback-vertex-set}$ & -- & --$^1$ & $f(k)$ & $f(k)$ & $2k^2+k$ \\
    $\PLang{strong-backdoor-2cnf-sat}$   & -- & --$^2$ & $f(k)$ &  $f(k)$ & $f(k)$\\
    $\PLang{strong-backdoor-horn-sat}$   & -- & --$^3$ & --$^3$ & --$^5$ & $f(k)$\\
    $\PLang{mixed-integer-programming}$   & -- & --$^6$ & --$^4$ & --$^5$ & $f(k)$\\
    $\PLang{max-flow-quantities}$   & -- & --$^6$ & --$^4$ & --$^5$ & $f(k)$\\
    \bottomrule
  \end{tabular}
  \end{center}
\end{table}

\subparagraph*{Related Work.}  Parameterized complexity is a rapidly
growing field, see \cite{Cygan:2015fr, DowneyF13, FlumG06} for an
introduction, in which parallelization is a recent
research direction. Early research in the late 1990s was done by Cai,
Chen, Downey, and  Fellows~\cite{CaiCDF97} who studied 
parameterized logaritmic space. A structural study of parameterized
logspace and parameterized circuit classes was started around 2015 by
Elberfeld et al.~\cite{ElberfeldST15}; see also the 
references therein. The 
parameterized version of the $\Class{NC}$-hierarchy we use in this
paper was introduced in~\cite{BannachST15}. Chen and
Flum studied lower bounds in this context and especially provide some
details and alternative characterizations for parameterized
$\Class{AC}^0$. There is a huge body of literature on polynomial-time
algorithms for computing small kernels, but the authors are not aware
of results concerning how quickly these kernels can be computed in
parallel. 

\subparagraph*{Organization of This Paper.}  We review basic
terminology in Section~\ref{section:terminology}, where we also
establish the link between parameterized parallel 
complexity and parallel kernel computation. Each of the following
sections studies a different well-known parameterized problem and
establishes trade-offs between kernel size and speed. We start with
the vertex cover and the matching problem in Section~\ref{section:vc},
followed by the feedback vertex set problem in
Section~\ref{section:fvs}, structural parameterizations for tree
width, path width, and tree depth in Section~\ref{section:structural}, the 
$\PLang{point-line-cover}$ problem in Section~\ref{section:tc0}, and
finally generalized versions of Horn satisfiability, linear
programming, and maximum flow in Section~\ref{section:seq}.

\section{Parameterized Parallel Complexity Classes and Kernelization}\label{section:terminology}
We use standard terminology of parameterized complexity theory, see
for instance~\cite{FlumG06}. A \emph{parameterized problem} is a tuple
$(Q,\kappa)$ consisting of a \emph{language} $Q\subseteq \Sigma^*$ and a
\emph{parameterization}
$\kappa\colon\Sigma^*\rightarrow\mathbb{N}$. The complexity of
$\kappa$ should not exceed the power of the classes that we consider, and
since we study small parameterized circuit classes, we require
$\kappa$ to be computable by \textsc{dlogtime}-uniform constant-depth
$\Class{AC}$-circuits or, equivalently, to be first-order computable.
 We denote parameterized problems by a leading
``$p$-'' as in $\PLang{vertex-cover}$, and, whenever the
parameterization~$\kappa$ is not clear from the context, we add it as an
index as in $\PLang[vc]{tree-width}$.
A parameterized problem $(Q,\kappa)$ is \emph{fixed-parameter
  tractable} (or in $\Class{FPT}$) if there is a computable function
$f\colon\mathbb{N}\rightarrow\mathbb{N}$ and a constant $c$ such that we
can decide $x\in Q$ in time $f(\kappa(x))\cdot |x|^c$
for all $x\in\Sigma^*$. In this paper we study the parallel
complexity of parameterized problems, that is, the parameterized
counter part of the $\Class{NC}$-hierarchy. Formally we study the
following classes, see for instance~\cite{BannachST15, ChenF16} for a
detailed discussion:
\begin{definition}\label{definition:paraac}
  For each $i > 0$, a parameterized problem $(Q,\kappa)$ is in
  \textsc{dlogtime}-uniform $\Para\Class{AC}^i$ if there exists a
  computable function $f\colon\mathbb{N}\rightarrow\mathbb{N}$, a
  constant $c\in\mathbb{N}$, and a family of $\Class{AC}$-circuits
  $(C_{n,k})_{n,k\in\mathbb{N}}$ such that:
\begin{enumerate}
\item For all $x\in\Sigma^*$ we have $C_{|x|,\kappa(x)}(x)=1\iff x\in Q$.
\item The depth of each $C_{n,k}$ is at most $f(k)+c\log^i n$. 
\item The size of each $C_{n,k}$ is at most $f(k)\cdot n^c$.
\item There is a deterministic Turing machine that on input of 
  $\operatorname{bin}(i)\#\operatorname{bin}(k)\#\operatorname{bin}(n)$,
  where $\operatorname{bin}(x)$ is the binary encoding of~$x$, outputs
  the $i$th bit of a suitable encoding of $C_{n,k}$ in at most $f(k)+c\log n$ steps. 
\end{enumerate}
\end{definition}
The class $\Para\Class{AC}^{0}$ is defined as above, but with circuits
of \emph{constant depth}. Additionally, we define for all $i\geq 0$
the class $\Para\Class{AC}^{i\uparrow}$ with
circuits of depth $f(k)\cdot\log^i n$. In particular, 
$\Para\Class{AC}^{0\uparrow}$-circuits have depth $f(k)$.
Recall that $\Class{AC}$-circuits are defined over the standard base
of \Lang{not}-, \Lang{or}-, and \Lang{and}-gates and that the last two may
have unlimited fan-in. The same definition works for
$\Class{NC}$-circuits (all gates have bounded fan-in) and
$\Class{TC}$-circuits (additional threshold gates are allowed).
It is known that the parameterized classes inherit their inclusion structure from
their classical counterparts~\cite{BannachST15}:
\[
  \Para\Class{AC}^0\subsetneq\Para\Class{TC}^0\subseteq\Para\Class{NC}^1\subseteq\Para\Class{AC}^1\subseteq\Para\Class{TC}^1
  \subseteq\dots\subseteq\Para\Class{NC}\subseteq\Class{FPT}.
\]

\subparagraph*{A Parallel Analogue of ``FPT = Kernels Computable in
  Polynomial Time''.}
One of the most fruitful aspects of parameterized complexity is the
concept of \emph{kernelization}. Let $f\colon\mathbb{N}\rightarrow\mathbb{N}$ be a
computable function. A \emph{kernelization} of a parameterized problem
$(Q,\kappa)$ is a self-reduction $K\colon\Sigma^*\rightarrow\Sigma^*$
such that for every $x\in\Sigma^*$ we have $x\in Q \iff
K(x)\in Q$ and $|K(x)|\leq f(\kappa(x))$. 
The images of $K$ are called \emph{kernels} and as they later need to
be processed by at least exponential-time algorithms, we are
interested in kernels that are as small as possible~-- while they still
need to be efficiently computable, meaning in polynomial time  from
the view point of $\Class{FPT}$ theory. The
following result is well-known and gives  
a deep connection between parameterized complexity and kernelization:
\begin{fact}[for instance~\cite{FlumG06}]
  A decidable parameterized problem $(Q,\kappa)$ is in $\Class{FPT}$ if, and only
  if, it admits a polynomial-time computable kernelization.
\end{fact}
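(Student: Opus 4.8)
The plan is to prove the two implications separately. The backward direction (a polynomial-time kernelization yields an $\Class{FPT}$ algorithm) is a ``shrink, then solve by brute force'' argument that crucially invokes decidability, and the forward direction ($\Class{FPT}$ yields a kernelization) is a ``solve or shrink'' argument that exploits the multiplicative form $f(k)\cdot n^c$ of the running time.

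For the backward direction, suppose $K$ is a polynomial-time computable kernelization with $|K(x)|\leq f(\kappa(x))$ for a computable~$f$. On input~$x$ I would first compute $K(x)$ in polynomial time, and then---since $(Q,\kappa)$ is decidable---run any total decision procedure for~$Q$ on the kernel $K(x)$. As $K(x)\in Q\iff x\in Q$ by definition of a kernelization, this decides membership correctly. The brute-force step runs in time depending only on $|K(x)|\leq f(\kappa(x))$; letting $g(k)$ be the maximum number of steps the $Q$-decider uses on any input of length at most $f(k)$ (a computable function, obtained by monotonizing the decider's running time over the finitely many short inputs), the total time is bounded by $p(|x|)+g(\kappa(x))$ for the polynomial~$p$ that bounds the kernelization. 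This is $\Class{FPT}$ time. Note that this direction genuinely needs decidability: a kernelization alone says nothing about how to solve the small kernel.

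For the forward direction, assume an algorithm~$A$ decides $x\in Q$ in time $f(\kappa(x))\cdot|x|^c$. Given~$x$ with $n=|x|$ and $k=\kappa(x)$, I would simulate~$A$ on~$x$ for $n^{c+1}$ steps, a polynomial budget. If~$A$ halts within this budget, membership is already decided, so I output a fixed constant-size yes-instance $x_{\mathrm{yes}}\in Q$ when $x\in Q$, and a fixed constant-size no-instance $x_{\mathrm{no}}\notin Q$ otherwise. If~$A$ does \emph{not} halt within $n^{c+1}$ steps, then its true running time, being at most $f(k)\cdot n^c$, exceeds $n^{c+1}$, which forces $n<f(k)$; in this case I output~$x$ unchanged. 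In both cases $x\in Q\iff K(x)\in Q$, and $|K(x)|$ is bounded by $\max\bigl(|x_{\mathrm{yes}}|,|x_{\mathrm{no}}|,f(k)\bigr)$, a computable function of the parameter, while the whole computation is plainly polynomial-time.

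The main obstacle---really the only place needing care---is the handling of degenerate languages in the forward direction: the fixed strings $x_{\mathrm{yes}}$ and $x_{\mathrm{no}}$ exist only when~$Q$ is neither empty nor all of~$\Sigma^*$. I would therefore dispose of the cases $Q=\emptyset$ and $Q=\Sigma^*$ first, where a constant map to a single fixed string is trivially a kernelization, and assume in the main argument that both a yes- and a no-instance exist.
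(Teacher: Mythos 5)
Your proof is correct and is the standard argument: the paper states this fact without proof (citing the literature), but your two directions---``shrink then brute-force the kernel'' and ``simulate for a polynomial budget, else the input is already small''---are exactly the ideas the paper itself adapts to circuits in its proof of Theorem~\ref{theorem:kernel}. Your explicit handling of the degenerate cases $Q=\emptyset$ and $Q=\Sigma^*$ is a welcome bit of care that the folklore statement usually glosses over.
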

The following theorem shows that the same relation also connects the
$\Class{AC}$-hierarchy with its parameterized counterpart. Note that
in the theorem the $\Class{AC}^i$-circuits are really ``normal
$\Class{AC}^i$-circuits,'' meaning that their size is just polynomial in
the input length.
\clearpage
\begin{theorem}\label{theorem:kernel}
  A decidable parameterized problem $(Q,\kappa)$ is in $\Para\Class{AC}^i$ if,
  and only if, it admits a kernelization computable by a
  \textsc{dlogtime}-uniform family of $\Class{AC}^i$-circuits.
\end{theorem}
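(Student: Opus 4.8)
The plan is to prove both implications of the equivalence, the direction from $\Para\Class{AC}^i$ to a kernelization being the substantial one. It is the circuit analogue of the classical argument that an $f(k)\cdot n^c$-time decision procedure yields a kernel via the dichotomy ``either $n$ is large enough that the procedure is already fast, or $n$ is so small that the instance is itself a kernel.''

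For the easy direction, suppose $K$ is computed by a \textsc{dlogtime}-uniform $\Class{AC}^i$-family $(E_n)_n$ with $|K(x)|\le f(\kappa(x))$. I would decide $x\in Q$ by first running $E_{|x|}$ to produce $K(x)$ and then testing $K(x)\in Q$ with a ``brute-force'' lookup circuit: since $Q$ is decidable and $K(x)$ has length at most $f(k)$, the set $Q_k$ of yes-instances of length at most $f(k)$ is a finite set computable from $k$ alone, and membership is one unbounded \Lang{or} over its (at most $2^{O(f(k))}$) elements, each compared against $K(x)$ by an \Lang{and} of bit-equalities. This part has depth $O(1)$ and size $g(k)=2^{O(f(k))}$, so the whole circuit has depth $c\log^i n+O(1)$ and size $n^{O(1)}+g(k)\le g(k)\cdot n^{O(1)}$, placing $(Q,\kappa)$ in $\Para\Class{AC}^i$. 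Uniformity is inherited from $(E_n)_n$ together with the fact that $Q_k$ can be tabulated within the generous $f(k)+c\log n$ time budget of $\Para$-uniformity---this is exactly where decidability of $Q$ enters.

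For the hard direction, let $(C_{n,k})$ witness $(Q,\kappa)\in\Para\Class{AC}^i$ with depth $f(k)+c\log^i n$ and size $f(k)\cdot n^c$; I may assume $f$ is monotone with $f(k)\ge k$. I would define the kernelization by the threshold $f(\kappa(x))\le\log|x|$: if it holds, then on length-$n$ inputs with parameter $k$ the circuit $C_{n,k}$ already has depth $\log n+c\log^i n=O(\log^i n)$ and size $\log n\cdot n^c=n^{O(1)}$, so I can evaluate it to decide $x\in Q$ and emit a fixed yes- or no-instance accordingly; if it fails, then $n<2^{f(k)}$, so I simply output $x$ itself, whose length is bounded by $g(k)=2^{f(k)}$. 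In both cases $K(x)\in Q\iff x\in Q$ and $|K(x)|\le g(\kappa(x))$, so $K$ is a genuine kernelization; the only remaining task is to realise it by a single \textsc{dlogtime}-uniform $\Class{AC}^i$-family.

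The main obstacle is uniformity: for a fixed length $n$ the parameter $k=\kappa(x)$ varies with the input, so $D_n$ must contain the decision circuits $C_{n,k'}$ for every ``small'' parameter $k'$ and select the correct one via $\kappa$, yet the threshold ``$f(k')\le\log n$'' cannot be evaluated in \textsc{dlogtime} by computing the arbitrary function $f$. I would resolve this exactly as in the classical proof, replacing the explicit threshold by a clocked simulation: to decide whether block $k'$ should hold $C_{n,k'}$, run the $\Para$-uniformity machine of $C_{n,k'}$ with a clock of $C'\log n$ steps---if it completes, then $f(k')=O(\log n)$ and the (necessarily polynomial-size, $O(\log^i n)$-depth) circuit $C_{n,k'}$ is generated within the \textsc{dlogtime} budget, and otherwise $k'$ is treated as a large parameter. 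Laying out $D_n$ as $\kappa$-circuitry followed by $O(\log n)$ fixed-width blocks (each padded to the common bound $n^{c+1}$, so that a bit position decodes to a pair $(k',\text{offset})$ by $O(\log n)$ arithmetic) and letting each block emit a hardwired inclusion flag $\mathrm{inc}_{k'}$ from the same clocked test, the output multiplexer can compute $\bigvee_{k'}(\kappa(x)=k')\wedge\mathrm{inc}_{k'}$ to choose between emitting the trivial instance (decided by the selected $C_{n,k'}$) and emitting $x$. This gives depth $O(\log^i n)$, polynomial size, and $O(\log n)$ uniformity per output bit; read the other way around, it is precisely the promised transformation turning depth $f(k)+c\log^i n$, size $f(k)\cdot n^c$ circuits into depth $c'\log^i n$, size $g(k)+n^{c'}$ ones.
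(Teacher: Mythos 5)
Your easy direction coincides with the paper's (kernel circuit followed by a brute-force comparison against the finitely many short yes-instances, with decidability of $Q$ invoked for uniformity), and your hard direction has the same architecture as the paper's: parallel blocks $C_{n,k'}$ for the ``small'' parameter values, selection via $\kappa$, a trivial instance when a block decides, and the identity map otherwise. The gap is in how you certify that a block may be included. You run the $\Para$-uniformity machine of $C_{n,k'}$ with a clock of $C'\log n$ steps and assert that ``if it completes, then $f(k')=O(\log n)$,'' so that the included circuit is ``necessarily polynomial-size, $O(\log^i n)$-depth.'' This is affirming the consequent: Definition~\ref{definition:paraac} only guarantees the machine halts \emph{within} $f(k')+c\log n$ steps, so early termination tells you nothing about $f(k')$. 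The machine may answer within the clock for a $k'$ with $f(k')$ enormous, and then the block you include is a circuit whose depth may be as large as $f(k')+c\log^i n$ and whose size as large as $f(k')\cdot n^c$; it neither fits your padded block of width $n^{c+1}$ nor respects the $O(\log^i n)$ depth bound, so $D_n$ is not an $\Class{AC}^i$-circuit. (The converse half of your test is sound: if the clock runs out, then $f(k')>(C'-c)\log n$, so outputting $x$ itself is a legitimate kernel.) The reason the ``clock the machine'' trick works in the classical $\Class{FPT}$ argument but not here is that there one clocks the \emph{decision procedure}, whose early termination already yields the answer; here one clocks only the \emph{generator} of a circuit, and its early termination does not bound the depth or size of the object it generates, which is what must be controlled.

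The missing ingredient is the paper's normalization of $f$: since $f$ is computable, replace it by $g(x)=2^{T(x)}$, where $T$ is a monotone bound exceeding $x$ on the running time of a machine computing $f$; then $g\ge f$ still witnesses membership in $\Para\Class{AC}^i$, and $g(x)$ is computable in time $O(\log g(x))$, so a clocked simulation of the machine for $g$ (rather than of the uniformity machine) for $O(\log\log n)$ steps genuinely decides whether $g(k')$ lies below the threshold; the paper then finds the cutoff $\tilde k$ by binary search. With that test replacing yours, your construction goes through, including the observation that for included blocks the $\Para$-uniformity machine now provably runs within the plain dlogtime budget. A final small point: the case $i=0$ needs separate wording, since ``depth $\log n+c\log^0 n$'' is not constant; the paper bounds the admissible parameters by circuit size rather than depth there.
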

\begin{proof}  
  Let $f\colon\mathbb{N}\rightarrow\mathbb{N}$ be a computable
  function and $c\in\mathbb{N}$ be a constant as in Definition~\ref{definition:paraac},
  let furthermore $(Q,\kappa)$ be a parameterized problem. Assume for the first
  direction that a kernelization $K$ of $(Q,\kappa)$ can be computed
  by a \textsc{dlogtime}-uniform family $(C_n)_{n\in\mathbb{N}}$ of
  $\Class{AC}^i$-circuits. Then we construct a family
  $(C_{n,k})_{n,k\in\mathbb{N}}$ of $\Para\Class{AC}^i$-circuits as
  follows: Circuit $C_{n,k}$ uses $C_n$ as a first black box, which is
  possible due to the depth and size definitions, and reduces the
  input to an instance of size at most $f(\kappa(x))$. Then the circuit
  essentially applies naive ``brute force'' in the form of a big
  \Lang{or}-gate that checks if any element of~$Q$ of length at most
  $f(\kappa(x))$ equals the computed kernel (we need the decidability
  of~$Q$ at this point to ensure that the circuit family is uniform). 

  For the other direction let us assume $(Q,\kappa)\in\Para\Class{AC}^i$
  witnessed by a \textsc{dlogtime}-uniform family
  $(C_{n,k})_{n,k\in\mathbb{N}}$ of $\Para\Class{AC}^i$-circuits, and
  let us first assume $i>0$.
  We may further assume that the computable function $f$ used in the definition
  of $(C_{n,k})_{n,k\in\mathbb{N}}$ is a monotone increasing function with
  $f(x)>x$ for all $x\in\mathbb{N}$, and that there is a Turing
  machine $M_f$ that computes $f(x)$ on input $\operatorname{bin}(x)$ in time
  $O(\log f(x))$. To see this, observe that, since $f$ is computable, there is a Turing machine
  $M'_f$ that computes $f(x)$ on input $\operatorname{bin}(x)$ in some
  time $T(x)$ such that $T$ is a monotone increasing function and such
  that $T(x)>x$ for all $x\in\mathbb{N}$. We may replace $f$ with $g(x)=2^{T(x)}$, which
  fulfils the above requirements. The resulting family
  $(C'_{n,k})_{n,k\in\mathbb{N}}$ is still a
  family of $\Para\Class{AC}^i$-circuits that accepts the same language.
  
  For every $n\in\mathbb{N}$
  we define $\tilde k\in\mathbb{N}$ to be the maximum $k$ such that
  $f(k)\leq c\log^i n$. We will use $\tilde k$ in the following
  construction and, hence, $\tilde k$ must be computable by a Turing
  machine in time $O(\log n)$ on input $\operatorname{bin}(n)$ to ensure
  uniformity. This is the case, as an appropriate Turing machine can
  first compute the value $c\log^i n$ (this is possible since $\log n$
  is a $\log\log n$-bit number, $i$ is a constant, and $c\log^i n$ is
  thus a $2^i\log\log n\in O(\log\log n)$-bit number) and can then
  perform binary search to find $\tilde k$. The later is possible
  since $f$ is monotone increasing and since $\tilde k \leq c\log^i n$
  as we have $f(x)>x$. Therefore, the Turing machine has to test only
  $\log\big(c\log^i n\big)\in O(\log\log n)$ possible $k$. Finally, for a fixed
  $k\leq c\log^i n$ the Turing machine can simulate $M_f$ on input
  $\operatorname{bin}(k)$ for $\log\big(c\log^i n\big)\in O(\log\log n)$
  steps and either obtains the value $f(k)$ or, if $M_f$ does not
  finish, can conclude that $f(k)>c\log^i n$.
  
  We now construct a family of $\Class{AC}^i$-circuits that compute a
  kernelization of $(Q,\kappa)$. Each circuit $C_n$ consists of
  $\tilde k$ subcircuits $C_n^0,\dots,C_n^{\tilde k}$ that are
  evaluated in parallel. The circuit $C_n^j$ first checks on input $x$
  whether or not $\kappa(x)=j$, which is possible since $\kappa$ can
  be computed by $\Class{AC}^0$-circuits (by definition). If this test
  is affirmative, the circuit uses $C_{n,j}$ to solve the problem and
  outputs a trivial kernel, that is, a trivial yes- or no-instance of
  $Q$. Otherwise $C_n^j$ just sets a flag that states that it is not
  responsible for this instance. Note that there is a constant $c'$
  such that $C_{n,j}$ has, by definition, depth at most
  $f(j)+c\log^i n\leq (c+1)\log^i n\leq c'\log^i n$ and size at most
  $f(j)\cdot n^c\leq c\log^i n \cdot n^c\leq n^{c'}$. If any $C_n^i$
  produces a kernel, then $C_n$ just presents this kernel as
  result. If, otherwise, all $C_n^j$ state that they are not
  responsible, we have $\kappa(x)>\tilde k$ and
  $f(\kappa(x))>c\log^i n$ and, thus, we already deal with a kernel,
  that is, $C_n$ can just present the input as output.

  For the remaining case, $i=0$, we perform the same
  construction, but choose $\tilde k$ such that $f(\tilde k)\leq n^c$,
  that is, we bound the subcircuits by size and not by depth.
\end{proof}

The theorem also holds if we replace $\Class{AC}^i$ with
$\Class{NC}^i$ or $\Class{TC}^i$. The only exception is
$\Class{NC}^0$, as this class may not be powerful enough to compute
$\kappa$. 

\subparagraph*{Application: Improve the Work of Parallel Algorithms.}
When we study the performance of parallel algorithms, we usually do
not only measure the time of the algorithm (as we would in the
sequential case), but also its \emph{work} (the total number
of computational steps performed by the algorithm). This is important
as a parallel algorithm may need polynomially many processors to reach
its promised runtime: For instance, an algorithm that runs in time $O(\log n)$
with $O(n^2)$ work will need at least time $O(n^2/p)$ on a machine
with $p$ processors~--~which is bad if there exists a linear time
sequential algorithm and $p < n$. In the circuit model the parallel time of an
algorithm corresponds to the depth of the circuit, and the work to its
size. While the layers of the $\Class{AC}$- and
$\Para\Class{AC}$-hierarchy measure the time of parallel algorithms
quite precisely, they only require the size of 
the circuits to be polynomial or to be bounded by $f(k)\cdot
n^c$, respectively. Using Theorem~\ref{theorem:kernel}, we can improve the work of any
parameterized parallel algorithm from $f(k)\cdot n^c$ to
$g(k)+n^{c'}$ while, at the same time, reducing the depth of the circuit
from $f(k)+c\log^i n$ to $c'\log^i n$.
\begin{lemma}\label{lemma:work}
  Let $(Q,\kappa)$ be a parameterized problem with
  $(Q,\kappa)\in\Para\Class{AC}^i$. Then there are a computable
  function $g\colon \mathbb{N}\rightarrow\mathbb{N}$ and a constant
  $c'$ such that there is a \textsc{dlogtime}-uniform family
  $(C'_{n,k})_{n,k\in\mathbb{N}}$ of $\Para\Class{AC}^i$-circuits that
  decides $(Q,\kappa)$ and in which every $C'_{n,k}$ has depth at most
  $c'\log^in$ and size at most $g(k)+n^{c'}$.
\end{lemma}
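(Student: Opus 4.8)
The plan is to reduce the claim to Theorem~\ref{theorem:kernel} together with a ``brute-force'' solver for the resulting kernel whose cost depends only on the parameter. Since $(Q,\kappa)\in\Para\Class{AC}^i$, Theorem~\ref{theorem:kernel} provides a computable function $f$ and a \textsc{dlogtime}-uniform family $(D_n)_{n\in\mathbb{N}}$ of \emph{ordinary} $\Class{AC}^i$-circuits that computes a kernelization $K$ of $(Q,\kappa)$; by the definition of $\Class{AC}^i$ there is a constant $c_1$ such that each $D_n$ has depth at most $c_1\log^i n$ and size at most $n^{c_1}$, and each kernel satisfies $|K(x)|\le f(\kappa(x))$. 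The crucial feature I would exploit is that the depth of the $D_n$ carries no parameter dependence at all, which is exactly what lets us drop the $f(k)$ term from the depth.

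To build $C'_{n,k}$, I would feed the input into $D_n$ to obtain the kernel and then attach a decision gadget that tests whether this kernel lies in $Q$. Because the kernel has length at most $f(k)$ and $Q$ is decidable, this test can be realised by a single big \Lang{or}-gate ranging over all yes-instances of $Q$ of length at most $f(k)$, each branch comparing the computed kernel against one fixed such string for equality. This gadget has constant depth, and its size is bounded by a computable function $g$ of $k$ alone (essentially the number of strings of length at most $f(k)$ times the cost of an equality test). Composing the two pieces, the depth of $C'_{n,k}$ is at most $c_1\log^i n + O(1)$, which is at most $c'\log^i n$ for a suitable constant $c'$ (absorbing the additive constant into the leading term for all but finitely many $n$, and enlarging $c'$ to handle those finitely many), and its size is at most $n^{c_1}+g(k)\le g(k)+n^{c'}$, as required. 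For $i=0$ the same construction applies verbatim, with $\log^0 n=1$ so that the depth bound $c'$ is simply a constant.

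The one point that needs care is uniformity. The kernelization part is \textsc{dlogtime}-uniform by Theorem~\ref{theorem:kernel}, so it suffices to argue that the decision gadget is describable within the uniformity budget. The defining Turing machine for $\Para\Class{AC}^i$ is permitted $f(k)+c\log n$ steps, which is ample: using the decidability of $Q$, it can enumerate all yes-instances of length at most $f(k)$ and thereby output any requested bit of the encoding of the big \Lang{or}-gate, all in time depending only on $k$ plus the $O(\log n)$ overhead for addressing gates. I expect this uniformity bookkeeping~-- rather than the depth and size accounting~-- to be the main obstacle, since one must verify that the enumeration of short yes-instances fits the \textsc{dlogtime}-uniform machine model and that the gate indices of the combined circuit can still be addressed in logarithmic time in~$n$.
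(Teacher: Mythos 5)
Your proposal is correct and follows essentially the same route as the paper's own proof: invoke Theorem~\ref{theorem:kernel} to obtain a \textsc{dlogtime}-uniform $\Class{AC}^i$ kernelization family with parameter-free depth and polynomial size, then append a constant-depth ``brute force'' gadget that compares the kernel against all strings of $Q$ of length at most $f(k)$, giving depth $c'\log^i n$ and size $g(k)+n^{c'}$. Your additional remarks on the $i=0$ case and on the uniformity of the enumeration gadget (resting on the decidability of $Q$) match the considerations the paper handles inside the proof of Theorem~\ref{theorem:kernel} itself.
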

\begin{proof}
  Since $(Q,\kappa)\in\Para\Class{AC}^i$, there is a
  \textsc{dlogtime}-uniform family $(C_{n,k})_{n,k\in\mathbb{N}}$ of
  $\Para\penalty0\Class{AC}^i$-circuits that decides $(Q,\kappa)$. Let
  $f\colon\mathbb{N}\rightarrow\mathbb{N}$ and $c\in\mathbb{N}$ be as
  in Definition~\ref{definition:paraac}. By Theorem~\ref{theorem:kernel} there is a
  constant $c'$ and a \textsc{dlogtime}-uniform family
  $(C_n)_{n\in\mathbb{N}}$ of $\Class{AC}^i$-circuits such that every
  $C_n$ has depth at most $c'\log^i n$ and size at most $n^{c'}$ and produces a kernel of size at
  most $f(\kappa(x))$. 
  We construct the desired family $(C'_{n,k})_{n,k\in\mathbb{N}}$ as
  follows: The circuit $C'_{n,k}$ first applies the
  circuit~$C_n$ to an input~$x$ and obtains an instance~$x'$ of size at most
  $f(\kappa(x))$, then the circuit uses a constant number of
  $\Class{AC}$ layers to check $x'\in Q$ by testing in parallel for all $w\in Q$
  with $|w|\leq f(\kappa(x))$ whether $w=x'$ holds.

  Therefore, the depth of $C'_{n,k}$ equals (up
  to a constant) the depth of $C_n$, and the size of $C'_{n,k}$ is the
  sum of the size of $C_n$ and the size of the ``brute force'' circuit
  applied at the end, that is, there is a computable function
  $g\colon\mathbb{N}\rightarrow\mathbb{N}$ such that size of
  $C'_{n,k}$ can be bounded by $g(\kappa(x))+n^{c'}$.
\end{proof}

Note that the function $g$ from the lemma may grow exponentially
faster then $f$, as the circuit from the lemma internally solves an
instance $x'$ with $|x'|\leq f(\kappa(x))$ and
$\kappa(x')\leq f(\kappa(x))$. A direct application of
Lemma~\ref{lemma:work} is therefore only of theoretical interest. It
shows, however, that we can always search for parameterized parallel
algorithms that run in polylogarithmic time
and whose work is
polynomial plus an \emph{additive} term depending only on the parameter.

\section{Parallel Kernels for Vertex Cover and Matching}\label{section:vc}

The parameterized vertex cover problem is a prime example used to
demonstrate many different kernelization techniques, and an outrider in
the race for small kernels. In this section we
revisit the problem from the point of view of circuit complexity and
establish a link between circuit complexity and kernel size. 
An early result in this context is due to Cai et al.~\cite{CaiCDF97}
which, translated into the terminology of the present paper, implies
that a kernel for $\PLang{vertex-cover}$ can be computed in
logarithmic space and, hence, in $\Class{AC}^1$. Elberfeld et
al.~\cite{ElberfeldST15} later noticed that the kernel of size
$k^2+2k$ computed by Cai et al.\ can actually also be computed in
$\Class{TC}^0$. This result was later once more refined by showing that
the same kernel can be computed in
$\Para\Class{AC}^0$~\cite{BannachST15}. Together with Theorem~\ref{theorem:kernel}
this implies that a kernel of size $f(k)$ can be computed in
$\Class{AC^0}$ for some computable function $f$. In fact, we can
improve the bound in this case to $2^{\sqrt[\delta]{k}}$ for any fixed $\delta>0$:

\begin{lemma}\label{lemma:vcexp}
  For every $\delta\in\mathbb{N}$ there is a \textsc{dlogtime}-uniform
  family of $\Class{AC^0}$-circuits that, on input of a tuple $(G,k)$,
  outputs a $\PLang{vertex-cover}$ kernel with at most
  $2^{\sqrt[\delta]{k}}$ vertices.
\end{lemma}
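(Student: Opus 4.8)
The plan is to revisit the $\Para\Class{AC}^0$ kernelization that produces the Buss kernel of size $k^2+2k$ (the one from~\cite{BannachST15} mentioned above) and to make it run in \emph{plain} $\Class{AC}^0$ at the price of a larger, but still sub-exponential, kernel. Everything hinges on a regime split in the spirit of the proof of Theorem~\ref{theorem:kernel}: on input $(G,k)$ with $N$ vertices, the circuit first compares $N$ with $2^{\sqrt[\delta]{k}}$, equivalently $k$ with $(\log N)^{\delta}$, which is possible in $\Class{AC}^0$ by essentially the binary-search argument used for $\tilde k$ in Theorem~\ref{theorem:kernel}. If $N\le 2^{\sqrt[\delta]{k}}$, the input is already a kernel of the desired size and we simply output $(G,k)$. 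The interesting case is $N>2^{\sqrt[\delta]{k}}$, that is $\sqrt[\delta]{k}<\log N$, where we will actually compute the Buss kernel.

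In that second case I would run the standard Buss kernelization: mark every vertex of degree greater than $k$ as forced (these lie in every vertex cover of size at most $k$), reject if more than $k$ vertices are forced, otherwise delete the forced vertices together with the now isolated vertices and decrease the budget accordingly. The surviving graph then has maximum degree at most $k$ and budget at most $k$, so the standard analysis bounds it by $k^2+2k$ vertices; for the finitely many $k$ below a constant $k_0(\delta)$ where $k^2+2k>2^{\sqrt[\delta]{k}}$ we instead solve the instance by brute force and output a trivial kernel. The only step of this procedure that is not obviously in plain $\Class{AC}^0$ is the detection of the high-degree vertices, i.e.\ deciding for each $v$ whether it has at least $k+1$ neighbours; all remaining bookkeeping (counting the forced vertices up to $k$ and deleting isolated vertices) fits into the same budget.

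The crux, and the step I expect to be the main obstacle, is therefore a depth/size trade-off for this threshold detection. A single layer of color coding detects $k+1$ distinct neighbours using a $(k{+}1)$-perfect hash family of size $2^{O(k)}\log N$, which is polynomial in $N$ only when $k=O(\log N)$ and would merely yield the bound $2^{O(k)}$. To reach $2^{\sqrt[\delta]{k}}$ I would instead use $\delta$ nested levels of hashing: split the potential neighbours into roughly $\sqrt[\delta]{k}$ buckets, recurse inside each bucket, and combine the partial answers by one \Lang{and}/\Lang{or} layer per level. Choosing the families at the individual levels of size $2^{O(\sqrt[\delta]{k})}\operatorname{polylog}N$ makes the whole detection circuit have depth $O(\delta)$ and size $2^{O(\sqrt[\delta]{k})}\cdot N^{O(1)}$. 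In the regime $\sqrt[\delta]{k}<\log N$ this is genuinely polynomial in $N$ and of constant depth for fixed $\delta$, which is exactly what is needed; verifying that such splitter families admit a \textsc{dlogtime}-uniform description is the technical heart of the argument. Plugging this detection into the Buss procedure gives, for $N>2^{\sqrt[\delta]{k}}$, a kernel with at most $k^2+2k\le 2^{\sqrt[\delta]{k}}$ vertices, and together with the trivial first case this yields the claimed \textsc{dlogtime}-uniform $\Class{AC}^0$ family.
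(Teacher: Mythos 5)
Your overall strategy coincides with the paper's proof: the same regime split (compare $k$ with $\log^{\delta} n$; if $k$ is larger, the input itself is the desired kernel), and in the low-parameter regime the Buss-type quadratic kernelization carried out in constant depth, whose only non-trivial ingredient is deciding, per vertex, whether the degree exceeds $k$ -- i.e.\ simulating threshold gates whose threshold is polylogarithmic in the input length. The paper discharges exactly this ingredient by citation: the $\Class{TC}^0$ kernelization of Elberfeld et al.~\cite{ElberfeldST15} only uses threshold gates that count up to~$k$, and $\Class{AC}^0$-circuits can simulate polylogarithmic thresholds~\cite{NewmanRW1990}. You instead attempt to construct the threshold detection from scratch, and that is precisely where your argument has a gap.

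The problem is the quantitative claim at your self-declared technical heart. Your recursion needs, at the top level, a family of functions that splits every $(k{+}1)$-element neighbour set evenly into $b=\sqrt[\delta]{k}$ buckets, and you assert such families of size $2^{O(\sqrt[\delta]{k})}\operatorname{polylog} N$. For $\delta\geq 2$ this is not available with that bucket count: a random function splits a fixed $k$-set evenly into $b$ buckets with probability $2^{-\Theta(b\log(k/b))}=2^{-\Theta(\sqrt[\delta]{k}\,\log k)}$, so the probabilistic bound (and the known explicit splitter constructions, which match it) gives families of size $2^{\Theta(\sqrt[\delta]{k}\,\log k)}\log N$. The extra $\log k$ factor in the exponent is fatal exactly in the critical regime $k\approx\log^{\delta}N$, where it makes your detection circuit of size $N^{\Theta(\log\log N)}$ -- quasipolynomial, hence not an $\Class{AC}^0$ family. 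The repair is not hard but has to be made explicit: either take $b=\sqrt[\delta]{k}/\log k$ buckets per level, so that $2^{O(b\log k)}=2^{O(\sqrt[\delta]{k})}$ while the number of levels needed to drive the threshold to a constant remains $O(\delta)$; or relax to approximate splitting (every bucket receives at least half its average share), for which polynomial-size families exist, and accept that the recursion then only computes a constant-factor gap threshold -- which is still sound for the high-degree rule and merely degrades the kernel to $O(k^2)$ vertices, well below $2^{\sqrt[\delta]{k}}$ for $k\geq k_0(\delta)$; or simply invoke the known simulation result, as the paper does. The rest of your write-up is sound; indeed, your explicit brute-force handling of the constantly many $k$ with $k^2+2k>2^{\sqrt[\delta]{k}}$ is a corner case that the paper's own proof silently skips.
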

\begin{proof}
  Let $I$ be the input instance and let $n=|I|$ be the size of its
  encoding. The circuit first checks if we have $k\leq\log^{\delta}(n)$. If
  not, we have $2^{\sqrt[\delta]{k}}>n$ and the instance  is already the desired kernel.
  Otherwise the circuit can simulate threshold gates up to
  $k$ using standard hashing techniques, as $\Class{AC^0}$-circuits
  can simulate polylogarithmic threshold
  gates~\cite{NewmanRW1990}. Since the $\Class{TC^0}$-circuit from
  Elberfeld et al.~\cite{ElberfeldST15} only uses threshold gates up
  to $k$, it follows that the $\Class{AC}^0$-circuit under construction
  can simulate this $\Class{TC^0}$-circuit, which completes the proof.
\end{proof}
The central observation in the proof of Lemma~\ref{lemma:vcexp} is
that the threshold-gates in the corresponding family of
$\Class{TC^0}$-circuits only ``count up to the parameter.'' We will
use exactly the same trick for other $\Class{TC}^0$-kernelizations,
but will then only formulate it as corollary.
Summarizing the statements from above, we can compute an exponential
kernel for $\PLang{vertex-cover}$ in $\Class{AC^0}$ and a quadratic
kernel in $\Class{TC^0}$. However, the best known kernelizations for
$\PLang{vertex-cover}$ are able to produce \emph{linear
  kernels}~--~and a reasonable next step is to implement them in
parallel as well. Unfortunately,  
this is
a way more challenging task, as both the classical $3k$ kernel based on
crown decomposition~\cite{Cygan:2015fr} and the $2k$ kernel due to Chen et
al.~\cite{ChenKJ01} require the computation of sufficiently large
matchings.  We can state this more precisely for the
latter  observation, by showing that the core part of the kernelization is
$\Class{NC}$-equivalent to computing maximum matchings in bipartite
graphs. The kernelization of Chen et al. is based on the following fact, known as the
Nemhauser--Trotter Theorem:
\begin{fact}[\cite{NemhauserT74}]\label{fact:nemhauser}
Let $G=(V,E)$ be a graph and $I=\{\,x_v\mid v\in V\,\}$ be a set of
variables. For every optimal solution $\beta\colon
I\rightarrow\mathbb{R}$ for the following linear program (\Lang{LPVC})
  \begin{align*}
    \min &\textstyle\sum\nolimits_{v\in V} x_v\\
    x_u+x_v&\geq 1 \quad \text{for all $\{u,v\}\in E$}\\
    x_v&\geq 0 \quad \text{for all $v\in V$}
  \end{align*}
  let
  $V_0=\{\,v\mid\beta(x_v)<1/2\,\}$,
  $V_{1/2}=\{\,v\mid\beta(x_v)=1/2\,\}$,
  $V_1=\{\,v\mid\beta(x_v)>1/2\,\}$
  be a partition of $V$. There is a minimum vertex cover $S$ of $G$ that
  satisfies $V_1\subseteq S\subseteq V_1\cup V_{1/2}$.
\end{fact}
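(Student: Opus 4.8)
The plan is to start from an arbitrary minimum (integral) vertex cover $S$ of $G$ and to ``repair'' it into a cover $S'$ that already respects the desired containment, namely $S'=(S\setminus V_0)\cup V_1$. Before touching $S$, I would record one structural consequence of LP feasibility: for every $u\in V_0$ and every edge $\{u,v\}\in E$, the constraint $\beta(x_u)+\beta(x_v)\ge 1$ together with $\beta(x_u)<1/2$ forces $\beta(x_v)>1/2$, so $v\in V_1$. In other words $N(V_0)\subseteq V_1$: every edge touching $V_0$ has its other endpoint in $V_1$. This single observation drives both remaining steps.

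First I would check that $S'$ is indeed a vertex cover. Take any edge $\{u,v\}$; since $S$ is a cover, one endpoint, say $u$, lies in $S$. If $u\notin V_0$ then $u\in S\setminus V_0\subseteq S'$ and the edge is covered. If $u\in V_0$, then by the structural fact $v\in V_1\subseteq S'$, so again the edge is covered. Hence $S'$ covers every edge, and by construction $S'\cap V_0=\emptyset$ and $V_1\subseteq S'$, which gives the desired containment $V_1\subseteq S'\subseteq V_1\cup V_{1/2}$ as soon as we know that $S'$ is minimum.

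The crux~--~and the step I expect to be the main obstacle~--~is to show $|S'|\le|S|$, equivalently $|V_1\setminus S|\le|S\cap V_0|$, and this is where the optimality of $\beta$ (rather than mere feasibility) must enter. The idea is an infinitesimal perturbation of $\beta$ supported on $A:=S\cap V_0$ and $B:=V_1\setminus S$: for a small $\epsilon>0$, increase $\beta$ by $\epsilon$ on $A$ and decrease it by $\epsilon$ on $B$, leaving all other coordinates fixed, and call the result $\beta'$. The real work is to verify that $\beta'$ stays feasible for all sufficiently small $\epsilon$. Nonnegativity is immediate since $B\subseteq V_1$ has values above $1/2$. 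For any edge whose left-hand side could decrease, some endpoint must lie in $B$, say $u\in B=V_1\setminus S$; then $u\notin S$ forces $v\in S$ by the cover property, and I would check the possibilities: if $v\in A$ the two changes cancel and the sum is unchanged; if $v\in S\cap(V_{1/2}\cup V_1)$ then $\beta(x_u)>1/2$ yields strict slack $\beta(x_u)+\beta(x_v)>1$ that absorbs the $\epsilon$; and both endpoints cannot lie in $B$, as $S$ would then fail to cover the edge. Thus $\beta'$ is feasible, and its objective value is that of $\beta$ plus $\epsilon\,(|A|-|B|)$. Since $\beta$ minimizes the objective, this change cannot be negative, so $|A|\ge|B|$, i.e. $|V_1\setminus S|\le|S\cap V_0|$.

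Combining the two steps, $|S'|=|S|-|S\cap V_0|+|V_1\setminus S|\le|S|$, so the cover $S'$ is in fact minimum, and it satisfies $V_1\subseteq S'\subseteq V_1\cup V_{1/2}$, as required. The only genuinely delicate point is the feasibility bookkeeping for the perturbation; everything else follows directly from $N(V_0)\subseteq V_1$ and the minimality of $S$.
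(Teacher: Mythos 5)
The paper does not prove this statement at all: it is imported as a black-box \emph{Fact} with a citation to Nemhauser and Trotter, so there is no in-paper argument to compare against. Your blind proof is correct and self-contained. The observation $N(V_0)\subseteq V_1$ (forced by feasibility, since $\beta(x_u)<1/2$ on an edge $\{u,v\}$ pushes $\beta(x_v)>1/2$) does exactly the work you want: it makes $S'=(S\setminus V_0)\cup V_1$ a cover, and it is not needed again afterwards. The perturbation step is also sound: the only edges whose slack could shrink have exactly one endpoint in $B=V_1\setminus S$, whose partner lies in $S$ and hence either in $A=S\cap V_0$ (net change zero) or in $V_{1/2}\cup V_1$ (strict slack $\beta(x_u)+\beta(x_v)>1/2+1/2$), and a single $\epsilon$ below the minimum positive slack and below $\min_{v\in B}\beta(x_v)$ works for all finitely many constraints; optimality then yields $|A|\ge|B|$ and thus $|S'|\le|S|$. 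This is essentially the classical local-exchange proof of the Nemhauser--Trotter theorem. It is worth noting that the paper does use the very same device~--~perturbing an optimal LP solution by $\pm\epsilon$ on two vertex sets and invoking optimality to conclude the sets have equal size~--~in its proof of Theorem~\ref{theorem:lpvc} to establish half-integrality, so your argument fits naturally alongside the techniques the authors rely on.
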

Chen et al.~have shown that one can obtain the desired kernel from a
solution of \Lang{LPVC} by discarding the vertices of $V_0$ and by
taking the vertices of $V_1$ into the solution. The remaining $2k$
vertices of $V_{1/2}$ constitute the
kernel~\cite{ChenKJ01}. The following theorem shows that solving
\Lang{LPVC} is tightly linked to the maximum matching problems for bipartite graphs.
\begin{theorem}\label{theorem:lpvc}
  Computing a solution for \Lang{LPVC} is $\Class{NC}$-equivalent to
  computing a maximum matching in bipartite graphs.
\end{theorem}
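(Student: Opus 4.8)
The plan is to establish the two directions of the $\Class{NC}$-equivalence separately, reducing $\textsc{LPVC}$ to bipartite matching and vice versa, each via an $\Class{NC}$ (in fact mostly $\Class{AC}^0$ or $\Class{NC}^1$) reduction. The key structural fact I would exploit is the tight connection between the linear program $\textsc{LPVC}$ and the bipartite structure obtained by taking two copies of the vertex set. Specifically, given $G = (V,E)$, I would form the bipartite graph $H = (V' \cup V'', E_H)$ where $V'$ and $V''$ are two copies of $V$ and we put an edge between $u' \in V'$ and $v'' \in V''$ whenever $\{u,v\} \in E$. This doubling is the classical device (going back to the Nemhauser--Trotter analysis) that turns the half-integral LP relaxation of vertex cover into an integral vertex cover / matching problem on a bipartite graph, and it is computable in $\Class{AC}^0$.

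**For the direction ``$\textsc{LPVC} \leq_{\Class{NC}}$ bipartite matching,''** I would argue as follows. It is a classical result that the LP relaxation $\textsc{LPVC}$ always has a \emph{half-integral} optimal solution, i.e.\ one in which every $\beta(x_v) \in \{0, \tfrac12, 1\}$, and moreover such a solution corresponds exactly to a minimum vertex cover of the bipartite double cover $H$. So the plan is: (i) build $H$ in $\Class{AC}^0$; (ii) compute a maximum matching $M$ of $H$ using the bipartite matching oracle; (iii) from $M$ extract a minimum vertex cover of $H$ via the König duality; and (iv) translate that vertex cover of $H$ back into the half-integral assignment $\beta$ on $V$, setting $\beta(x_v)$ according to how many of the two copies $v', v''$ lie in the cover. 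The one subtle point is step (iii): given a maximum matching, producing a minimum vertex cover of a bipartite graph (the algorithmic König's theorem) requires an alternating-path/reachability computation, which is itself in $\Class{NC}$ but not in $\Class{AC}^0$; since we are proving $\Class{NC}$-equivalence this is unproblematic, and one can also fold this reachability step into a second matching query if a cleaner reduction is desired.

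**For the reverse direction ``bipartite matching $\leq_{\Class{NC}} \textsc{LPVC}$,''** the observation is that an optimal $\textsc{LPVC}$ solution on the double cover $H$ directly encodes a König-type min-cover/max-matching certificate, but to recover an actual maximum \emph{matching} (not just its value) I would need more care, since the LP solver hands back a fractional/half-integral vertex assignment rather than edges. The natural route is to use the self-reducibility of matching: with an $\textsc{LPVC}$ oracle one can, in $\Class{NC}$, determine for each edge whether it lies in some maximum matching (by testing how the optimal LP value of $H$ changes, or how the $V_0/V_{1/2}/V_1$ partition behaves, when that edge or its endpoints are manipulated), and assemble a maximum matching from these decisions. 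This is the step I expect to be the main obstacle: querying the LP oracle on polynomially many modified instances \emph{in parallel} and stitching the answers into a single consistent matching requires that the local membership decisions be mutually compatible, which is exactly where one must invoke the isolation/uniqueness machinery or a careful inductive argument rather than a one-shot reduction.

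**The hard part will be** making the second direction a genuine \emph{parallel} (constant-round) reduction rather than an adaptive sequence of oracle calls: bipartite matching is $\Class{NC}$-reducible to its own decision/optimal-value problem, but extracting the matching edges themselves from vertex-valued LP solutions in parallel is precisely the delicate ingredient, and I would lean on the structural correspondence between half-integral LP optima and the König min-vertex-cover of the double cover $H$ to keep every query answer locally verifiable and globally consistent.
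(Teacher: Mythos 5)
Your first direction matches the paper's: the same bipartite double cover $H$, a maximum-matching query on $H$, and a conversion of the answer into a half-integral \Lang{LPVC} solution. You realize the König step via alternating-path reachability, whereas the paper instead solves the complementary-slackness linear system (whose solution is integral by total unimodularity and Cramer's rule), but both are legitimately in $\Class{NC}$, so this direction is essentially the paper's argument.

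The second direction, however, has a genuine gap that you yourself flag but do not close. Deciding for each edge whether it lies in \emph{some} maximum matching does not let you assemble one: the subgraph of such ``allowed'' edges is in general far from being a matching, and extracting a maximum matching from it is essentially the original problem again. The adaptive self-reduction (fix an edge, recurse) needs $\Omega(m)$ sequential oracle rounds, which is not $\Class{NC}$, and the isolation-lemma route is randomized, so at best it would yield $\Class{RNC}$-equivalence. The paper avoids all of this with a single non-adaptive use of the \Lang{LPVC} oracle: round the returned fractional optimum to a half-integral one (justified by an $\epsilon$-perturbation argument showing the rounding preserves optimality), delete the vertices with value $1$ and the resulting isolated value-$0$ vertices, and observe that on the remaining all-$\frac{1}{2}$ bipartite graph the two shores must have equal size (otherwise the smaller shore would beat the LP optimum), so taking one shore yields an \emph{integral} optimal \Lang{LPVC} solution, i.e.\ a minimum vertex cover; this is then converted into a maximum matching deterministically by solving the complementary-slackness system, again integral by total unimodularity. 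You would need to supply an argument of this kind (or some other deterministic, non-adaptive edge-extraction step) for your reduction to go through.
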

\begin{proof}
The first direction is essentially the standard way of efficiently
solving \Lang{LPVC}: Given an instance of \Lang{LPVC} we construct a
bipartite graph $H=\bigl(\{\,v_1, v_2\mid v\in V\,\}, \bigl\{\{u_1, v_2\},\penalty0 \{u_2,
  v_1\}\mid\{u,v\}\in E\bigr\}\bigr)$ and compute a minimum vertex cover~$S$
  of it. One can show that the following assignment is an optimal
  solution for \Lang{LPVC}~\cite{Cygan:2015fr}:
  \[
    \beta(x_v)=\begin{cases}
      0           & \text{for } |\{v_1,v_2\}\cap S|=0,\\
      1/2         & \text{for } |\{v_1,v_2\}\cap S|=1, \text{ and}\\
      1           & \text{for } |\{v_1,v_2\}\cap S|=2.
    \end{cases}
  \]
  Since $H$ is bipartite, computing a minimum vertex cover is
  equivalent to computing a maximum matching due to König's
  Theorem~\cite{Koenig1916}. More precisely: To obtain the vertex
  cover~$S$, we compute a maximum matching in~$H$ and this matching
  constitutes 
  an optimal solution to the dual program of \Lang{LPVC}. Due to the
  Complementary Slackness Theorem, we can derive an optimal solution
  for the primal program from an optimal solution of the dual program
  by solving a linear system of equations, which is possible in
  $\Class{NC}$~\cite{JaJa92}. Note that the matrices of both
  \Lang{LPVC} and its dual are totally unimodular, as the incidence
  matrix of a bipartite graph is totally unimodular, and since the
  transpose of a totally unimodular matrix is so as well. Therefore,
  Cramer's Rule states that the solution that we obtain for the dual
  program with the algorithm from above is integral as
  well. This completes this part of the proof.

  For the other direction the input is a bipartite graph $G=(V,E)$ in
  which we search for a maximum matching. Let $\beta$ be an optimal real solution of
  \Lang{LPVC} for $G$. We can transform $\beta$ into a (still optimal)
  half-integral solution $\beta'$ by simple rounding:
  \[
    \beta'(x_v)=\begin{cases}
      0 & \text{if $\beta(x_v)<1/2$,}\\
      1/2 & \text{if $\beta(x_v)=1/2$, and}\\
      1 & \text{if $\beta(x_v)>1/2$.}\\
    \end{cases}
  \]
  This well-known fact is based on~\cite{NemhauserT74}, and can be
  shown by the following procedure that
  successively transforms $\beta$ into refined optimal solutions,
  ending at $\beta'$. To refine $\beta$ we define the sets
  $V_+=\{\,x_v\mid 0<\beta(x_v)<1/2\,\}$ and
  $V_-=\{\,x_v\mid 1/2<\beta(x_v)<1\,\}$. We now define for a
  suitable small $\epsilon>0$ the two
  assignments
  \[
    \beta_+(x_v)=\begin{cases}
      \beta(x_v) & \text{if $x_v\not\in V_+\cup V_-$,}\\
      \beta(x_v)+\epsilon & \text{if $x_v\in V_+$, and}\\
      \beta(x_v)-\epsilon & \text{if $x_v\in V_-$,}\\
    \end{cases}\quad\text{and}\quad
        \beta_-(x_v)=\begin{cases}
      \beta(x_v) & \text{if $x_v\not\in V_+\cup V_-$,}\\
      \beta(x_v)-\epsilon & \text{if $x_v\in V_+$, and}\\
      \beta(x_v)+\epsilon & \text{if $x_v\in V_-$.}\\
    \end{cases}
  \]
  Observe that both, $\beta_+$ and $\beta_-$, are still feasible
  solutions, as for any edge $\{u,v\}$ the constraint $x_u+x_v\geq 1$
  is still satisfied (either one of the variables is already $1$, or
  they are both $1/2$, or we add $\epsilon$ to at least one of
  them). Further observe that, compared to $\beta$, the value of the
  target function changes by $\epsilon|V_+|-\epsilon|V_-|$ and
  $\epsilon|V_-|-\epsilon|V_+|$, respectively. Since $\beta$ is
  optimal, neither $\beta_+$ nor $\beta_-$ may reduce the value of the
  target function compared to $\beta$; consequently we have
  $|V_+|=|V_-|$ and $\beta_+$ and $\beta_-$ are both optimal
  solutions. Conclusively observe that, by repeating this process
  successively, we will end up at $\beta'$.
  
  To conclude this part of the proof, we will now turn $\beta'$ into
  an integral solution. To achieve this, we construct
  an auxiliary graph $G'$ by deleting all vertices with value~$1$
  in~$G$ (as these must be in the vertex cover). Since all vertices
  with value~$0$ are now isolated, we may remove them too. We end up with a bipartite graph $G'$ with $n'$
  vertices, which are all assigned with the value $1/2$ by
  $\beta'$. We claim $\beta'$ is an optimal solution for \Lang{LPVC}
  on $G'$. For a contradiction assume otherwise, that is, assume there is
  an assignment $\gamma$ with $\sum_{v\in
    V(G')}\gamma(x_v)<\sum_{v\in V(G')}\beta'(x_v)$. We can infer a
  new assignment $\beta''$ for $G$ by ``plugging'' $\gamma$ into
  $\beta'$:
  \[
    \beta''(x_v)=\begin{cases}
      \beta'(x_v) & \text{if $x_v\not\in V(G')$;}\\
      \gamma(x_v) & \text{if $x_v\in V(G')$.}
    \end{cases}
  \]
  Observe that this is a feasible solution for \Lang{LPVC} on $G$,
  since for all edges $\{u,v\}$ we have:
  \[
    \beta''(x_u)+\beta''(x_v)=\begin{cases}
      \gamma(x_u)+\gamma(x_v)\geq 1 & \text{if $u,v\in V(G')$;}\\
      \beta'(x_u)+\beta'(x_v)\geq 1 & \text{if $u,v\not\in V(G')$;}\\
      \beta'(x_u)+\gamma(x_v)\geq 1 & \text{if $u\not\in V(G')$ and
        $v\in V(G')$.}\\
    \end{cases}
  \]
  The first two lines follow by the fact that $\gamma$ and $\beta'$
  are feasible; the last line follows by the construction of $G'$, as
  an edge $\{u,v\}$ with $u\not\in V(G')$ and $v\in V(G')$ only appears
  if we have $\beta'(x_u)=1$ (we have only deleted isolated vertices
  and vertices with value $1$, and here $u$ was deleted and is
  not isolated). By the construction of $\beta''$, we end up with $\sum_{v\in
    V(G)}\beta''(x_v)<\sum_{v\in V(G)}\beta'(x_v)$, which is a
  contradiction as $\beta'$ is an optimal solution for \Lang{LPVC} on
  $G$. Consequently, $\beta'$ must be an optimal solution for
  $\Lang{LPVC}$ on $G'$ as well.

  Since $\beta'$ assigns $1/2$ to all vertices in $G'$, a minimal
  vertex cover of $G'$ has size at least $n'/2$. Therefore, $G'$
  has to consist of two equally sized shores, as otherwise
  the smaller one would be a vertex cover of size smaller than
  $n'/2$. We can, thus, greedily select one shore into the vertex
  cover, that is, we set $\beta'$ for one shore to $1$ and for the
  other to $0$. The obtained optimal integral solution of
  $\Lang{LPVC}$ can be turned,
  as in the first direction, into a solution for the dual program in
  $\Class{NC}$, i.\,e., into a maximum matching of $G$.
\end{proof}
The parallel complexity of the maximum matching problem is still not
fully resolved. The currently best parallel algorithms run in
$\Class{RNC^2}$~\cite{MulmuleyVV87} or
$\Quasi\Class{NC^2}$~\cite{FennerGT16}. From the theorem we can deduce
that we can compute the Nemhauser--Trotter-based $2k$-vertex kernel for
$\PLang{vertex-cover}$ in $\Class{RNC}$ and $\Quasi\Class{NC}$; and we
can deduce
that we cannot compute this kernel in $\Class{NC}$ without
improving the parallel complexity of the maximum matching
problem~--~which is a longstanding open problem.
\begin{corollary}\label{corollary:vc2k}
  There is a \textsc{dlogtime}-uniform family of
  $\Class{NC}$-circuits of polylogarithmic depth that, on input of a
  graph $G=(V,E)$ and an integer $k$, outputs a kernel of
  $\PLang{vertex-cover}$ with at most $2k$ vertices. The circuits of
  the family either use randomness and have size $|V|^c$, or are
  deterministic and of size $|V|^{c\log |V|}$.
\end{corollary}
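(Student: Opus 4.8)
The plan is to implement the Nemhauser--Trotter kernelization of Chen et al.\ described above by using Theorem~\ref{theorem:lpvc} to reduce the computation of the half-integral LP solution to a bipartite matching computation, and then to invoke a parallel matching algorithm as a black box.

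First I would, on input $(G,k)$, use the first direction of Theorem~\ref{theorem:lpvc} to compute an optimal half-integral solution $\beta\colon I\to\{0,\tfrac12,1\}$ of \Lang{LPVC} for $G$: that construction builds the auxiliary bipartite graph $H$, computes a minimum vertex cover of $H$ from a maximum matching, and reads off exactly such a half-integral assignment. From $\beta$ one obtains the partition $V_0,V_{1/2},V_1$ of Fact~\ref{fact:nemhauser} in $\Class{AC}^0$. The kernel is then the instance $\bigl(G[V_{1/2}],\,k-|V_1|\bigr)$: by Fact~\ref{fact:nemhauser} there is a minimum vertex cover $S$ with $V_1\subseteq S\subseteq V_1\cup V_{1/2}$, so discarding $V_0$ and committing $V_1$ to the cover is safe, and $(G,k)$ is a yes-instance if and only if $G[V_{1/2}]$ has a vertex cover of size $k-|V_1|$. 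If $|V_1|>k$, the circuit instead outputs a fixed trivial no-instance.

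For the size bound I would use that the objective value of the half-integral solution equals $|V_1|+\tfrac12|V_{1/2}|$ and lower-bounds the vertex cover number of $G$; hence on any instance admitting a cover of size at most $k$ we get $|V_1|+\tfrac12|V_{1/2}|\le k$ and therefore $|V_{1/2}|\le 2k$, so the kernel has at most $2k$ vertices as required.

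It remains to bound the circuit resources, which is where the two stated variants come from. By Theorem~\ref{theorem:lpvc} the computation of $\beta$ is an $\Class{NC}$-reduction to maximum matching in bipartite graphs, and all surrounding bookkeeping (building $H$, reading off $\beta$, forming $V_0,V_{1/2},V_1$, comparing $|V_1|$ with $k$, and extracting the induced subgraph $G[V_{1/2}]$) lives in $\Class{TC}^0$ and thus contributes only polylogarithmic depth and polynomial size. Plugging in the $\Class{RNC}^2$ matching algorithm of Mulmuley, Vazirani, and Vazirani~\cite{MulmuleyVV87} yields randomized circuits of polylogarithmic depth and size $|V|^c$, while plugging in the $\Quasi\Class{NC}^2$ algorithm of Fenner, Gurjar, and Thierauf~\cite{FennerGT16} yields deterministic circuits of polylogarithmic depth and size $|V|^{c\log|V|}$. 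In both cases I rely on the fact that $\Class{RNC}$ and $\Quasi\Class{NC}$ are closed under $\Class{NC}$ reductions, so that the reduction of Theorem~\ref{theorem:lpvc} can be composed with the matching routine without leaving the respective class. The main point to get right is precisely this composition and the accompanying resource accounting; there is no genuine obstacle beyond it, since Theorem~\ref{theorem:lpvc} already supplies the only nontrivial ingredient.
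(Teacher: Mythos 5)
Your proposal is correct and follows exactly the route the paper intends: the paper gives no separate proof of Corollary~\ref{corollary:vc2k} but derives it, precisely as you do, by running the first direction of Theorem~\ref{theorem:lpvc} to get a half-integral optimal \Lang{LPVC} solution from a bipartite maximum matching, applying the Nemhauser--Trotter partition to output $\bigl(G[V_{1/2}],k-|V_1|\bigr)$, and instantiating the matching oracle with either the $\Class{RNC}^2$ algorithm of Mulmuley, Vazirani, and Vazirani or the $\Quasi\Class{NC}^2$ algorithm of Fenner, Gurjar, and Thierauf. One small fix: your guard ``output a trivial no-instance if $|V_1|>k$'' is too weak to guarantee the $2k$ size bound on no-instances, since one can have $|V_1|\le k$ yet $|V_{1/2}|>2k$; the correct test is whether the LP optimum $|V_1|+\tfrac12|V_{1/2}|$ exceeds~$k$ (in which case the instance is a no-instance because the LP value lower-bounds the vertex cover number, an argument you already invoke for the size bound), and only when it does not do you get $|V_{1/2}|\le 2k$.
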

Note that other kernels that are based on the Nemhauser--Trotter
Theorem, such as the one by Soleimanfallah and
Yeo~\cite{SoleimanfallahY11}, or the one by Lampis~\cite{Lampis11}, also do not bypass
Theorem~\ref{theorem:lpvc}. A
natural goal is, thus, to compute linear kernels for  $\PLang{vertex-cover}$ in
$\Class{NC}$~--~most likely using an algorithm that does not rely on
a \Lang{LPVC} relaxation. Table~\ref{table:overview} summarizes the complexity
of computing kernels of certain size for $\PLang{vertex-cover}$.

Since $\PLang{matching}$ turns out to be an obstruction for parallel
kernelization, it is a natural question in the light of this paper,
whether or not we are able to compute polynomial kernels for the
matching problem in $\Class{NC}$. Note that the problem is in
$\Para\Class{AC}^0$, and hence we can compute a size-$f(k)$ kernel in
$\Class{AC^0}$; and since $\Lang{matching}\in\Class{RNC}$ we can
compute a size-1 kernel in $\Class{RNC}$.
\begin{lemma}\label{lemma:matchingTC}
  There is a \textsc{dlogtime}-uniform family of
  $\Class{TC^0}$-circuits that, on
  input of a tuple $(G,k)$, outputs a $\PLang{matching}$
  kernel with at most $O(k^2)$ vertices.
\end{lemma}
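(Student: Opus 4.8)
The plan is to produce a \emph{subgraph} of~$G$ that preserves the matching number, using two reduction rules that a $\Class{TC}^0$-circuit can apply in parallel; whenever a rule already certifies that the matching number is at least~$k$, the circuit instead outputs a trivial yes-instance. Write $\nu(G)$ for the size of a maximum matching, discard isolated vertices at the start, and call a vertex \emph{heavy} if its degree is at least~$2k$; let $H$ be the set of heavy vertices. The first rule concerns~$H$: I would argue that if $|H|\ge 2k$, then $\nu(G)\ge k$, so the circuit may output a trivial yes-instance. This follows from a greedy maximal matching that always pairs a heavy vertex with a free neighbour: if such a matching had fewer than $k$ edges, it would saturate fewer than $2k$ vertices, yet every unmatched heavy vertex would have all of its $\ge 2k$ neighbours among these $<2k$ saturated vertices, which is impossible. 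Hence every heavy vertex is saturated and $|H|<2k$.

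The second rule bounds the rest of the graph. After deleting~$H$, every remaining vertex has degree below~$2k$. Here I would use the folklore bound that a graph with no isolated vertices and maximum degree below $2k$ has a matching of size at least $|V|/(4k)$ (in a maximal matching every unmatched vertex sees only saturated ones, so each matching edge accounts for at most $4k$ vertices). Let $V_L$ be the set of vertices outside~$H$ that still have a neighbour outside~$H$; these induce a subgraph with no isolated vertices and maximum degree below~$2k$, so if $|V_L|\ge 4k^2$ then $\nu(G)\ge |V_L|/(4k)\ge k$ and the circuit again outputs a trivial yes-instance. In the remaining case $|H|<2k$ and $|V_L|<4k^2$. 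The only vertices left unaccounted for are those outside~$H$ all of whose (at least one) neighbours lie in~$H$; these can be matched only to~$H$, so at most $|H|<2k$ of them are ever used. The kernel is then the subgraph induced by $H\cup V_L$ together with, for each $h\in H$, its $|H|$ lowest-indexed such neighbours, giving $|H|+|V_L|+|H|^2=O(k^2)$ vertices.

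Correctness of this last step, which I expect to be the main obstacle, rests on a Hall-type argument that discarding all but these few neighbours does not change the matching number. Given an optimal matching of~$G$, its edges split into those inside $H\cup V_L$ and those joining~$H$ to the discarded side; the latter saturate some $X\subseteq H$ into the discarded vertices. Because we keep $|H|\ge|X|$ neighbours of each heavy vertex, one checks that $|N'(S)|\ge|S|$ holds in the reduced graph for every $S\subseteq X$ (if some $x\in S$ has $\ge|H|$ kept neighbours this is immediate, and otherwise all of $N(S)$ survived), so $X$ can be re-saturated among the kept neighbours while the block of edges inside $H\cup V_L$ (which uses only $H\setminus X$ on the heavy side) is left untouched. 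The two parts are vertex-disjoint, so the induced kernel has matching number exactly $\nu(G)$, and since it is a subgraph the reverse inequality is trivial.

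Finally I would verify that all steps are $\Class{TC}^0$: deciding membership in~$H$ is a single threshold gate counting to~$2k$; the tests $|H|\ge 2k$ and $|V_L|\ge 4k^2$ are thresholds whose counts may be capped at $O(k^2)$; and selecting, for a fixed $h$, its $|H|$ lowest-indexed $H$-only neighbours amounts to checking, for each candidate, whether fewer than $|H|$ such neighbours precede it, a count that need only run up to~$|H|$. Since every threshold thus counts only up to a polynomial in the parameter, the same construction runs in $\Class{AC}^0$ whenever $k\le\log^\delta n$ by the threshold simulation used in Lemma~\ref{lemma:vcexp}; this is precisely the corollary that yields the $2^{\sqrt[\delta]{k}}$ bound recorded for $\PLang{matching}$.
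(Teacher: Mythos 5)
Your proposal is correct and follows essentially the same route as the paper's proof: identify the high-degree vertices (outputting a trivial yes-instance if there are too many, via the greedy matching argument), retain only a bounded set of neighbours for each of them, and use the maximal-matching lower bound for graphs of bounded maximum degree and no isolated vertices to argue that the remaining low-degree part is either small or already certifies a size-$k$ matching. The only substantive difference is the final correctness step, where you re-match the heavy vertices into their retained neighbours via Hall's condition whereas the paper argues greedily that each already-placed edge blocks at most two of the $2k$ retained mates; both arguments work, and yours is if anything the more carefully justified variant.
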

\begin{proof}
  The circuit first computes a set $S=\{\,v\in V\mid |N(v)|>2k\,\}$ of
  ``high-degree'' vertices. If we have $|S|\geq k$, the circuit
  can output a trivial yes-instance since for such a
  set~$S$ we can greedily match any vertex $v\in S$ with a vertex
  $u\in N(v)\setminus S$, reducing the available matching mates of
  all other vertices in $S$ by at most two~--~and since they have
  degree at least $2k$, there are still enough mates left to match every
  vertex of~$S$.

  If the circuit has not finished yet, we compute a set $S'$ consisting
  of~$S$ and $2k$ arbitrary neighbors of every vertex in~$S$ (take the
  lexicographically first for each $v \in S$, for instance). Note that
  we have $|S'|\leq 2k^2$. Consider the graph $G'=G[V\setminus
  S']$. Since $S$ was the set of high-degree vertices, $G'$ has
  maximum-degree $d\leq 2k$. Our circuit now removes
  all isolated vertices from~$G'$, resulting in~$G''$, and then checks
  if we have $|V(G'')|\geq k\cdot 2d$. If so, we can output a trivial
  yes-instance since a graph with  maximum degree~$d$ and minimum
  degree~$1$ always contains a matching of size $|V(G'')| / 2d \ge k$. If, on
  the other hand, we have $|V(G'')|\leq k\cdot 2d\leq 4k^2$, the
  circuit outputs $G[S'\cup V(G'')]$ together with the unchanged
  number~$k$. 

  The output clearly always has size at most $O(k^2)$. To see that
  $G[S' \cup V(G'')]$ is a kernel, we clearly only have to show that
  if $G$ has a size-$k$ matching~$M$, so does $G[S' \cup V(G'')]$ (the
  other direction is trivial). To see this, first note that any edge
  in~$M$ that does not have an endpoint in $S$ must lie in $G''$ and,
  hence, is also present in $G[S' \cup V(G'')]$. Next, all other
  edges in~$M$ must have an endpoint in~$S$ and, thus, there can be at
  most $|S|$ many such edges. While not all of these edges need to be
  present in $G[S']$, we can greedily construct a matching of size
  $|S|$ in $G[S']$ (by the same argument as the one of the beginning
  of this proof for $|S| \ge k$). This means that we find a matching
  of size $|M|$ also in $G[S' \cup V(G'')]$.
\end{proof}
The circuits of Lemma~\ref{lemma:matchingTC} need their threshold gates
``only'' to count up to $k$. We can thus deduce the
following corollary (the proof argument is the same as for Lemma~\ref{lemma:vcexp}):
\begin{corollary}
  For every $\delta\in\mathbb{N}$ there is a \textsc{dlogtime}-uniform family of
  $\Class{AC^0}$-circuits that, on
  input of a tuple $(G,k)$, outputs a $\PLang{matching}$
  kernel with at most $O(2^{\sqrt[\delta]{k}})$ vertices.
\end{corollary}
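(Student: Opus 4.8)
The plan is to follow the proof of Lemma~\ref{lemma:vcexp} almost verbatim, only substituting the matching kernelization of Lemma~\ref{lemma:matchingTC} for the vertex-cover one. Let $I=(G,k)$ be the input and $n=|I|$ the length of its encoding. First I would have the circuit test whether $k\le\log^{\delta}(n)$; this is an $\Class{AC^0}$ computation since $\kappa$ is $\Class{AC^0}$-computable and the comparison only involves a quantity of $O(\log\log n)$ bits. If the test fails, then $2^{\sqrt[\delta]{k}}>n\ge|V|$, so $G$ already has fewer than $2^{\sqrt[\delta]{k}}$ vertices and the circuit may output $I$ unchanged as the desired kernel.

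In the remaining case $k\le\log^{\delta}(n)$ I would invoke the fact that $\Class{AC^0}$-circuits can simulate threshold gates whose threshold value is polylogarithmic in~$n$~\cite{NewmanRW1990}. The key point is that the $\Class{TC^0}$-circuit of Lemma~\ref{lemma:matchingTC} never counts beyond a polynomial in~$k$: the high-degree test compares each $|N(v)|$ against the fixed bound $2k$, the checks $|S|\ge k$ and $|V(G'')|\ge k\cdot 2d$ use thresholds of value at most $4k^2$, and the construction and the size bound $|S'|\le 2k^2$ are likewise $O(k^2)$. Under the assumption $k\le\log^{\delta}(n)$, each of these thresholds is polylogarithmic in~$n$ and can therefore be replaced by an $\Class{AC^0}$-subcircuit, so the whole $\Class{TC^0}$-circuit is simulated in $\Class{AC^0}$ while preserving \textsc{dlogtime}-uniformity.

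Consequently, the constructed family outputs exactly the kernel of Lemma~\ref{lemma:matchingTC}, which has $O(k^2)$ vertices; since $k^2\in O(2^{\sqrt[\delta]{k}})$ for every fixed~$\delta$, this is in particular a $\PLang{matching}$ kernel with at most $O(2^{\sqrt[\delta]{k}})$ vertices, as claimed. The step I expect to require the most care is checking that the quantities which are naively unbounded can be decided with bounded thresholds: a literal count of $|N(v)|$, or of the rank of a neighbour used when selecting the lexicographically first $2k$ neighbours, would need a threshold up to~$n$, so each must be reformulated as a bounded decision (``does $v$ have more than $2k$ neighbours?'', ``are there at least $2k$ smaller neighbours?'') of threshold value $O(k)$. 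Once this is verified, the remainder is a mechanical transcription of the argument for Lemma~\ref{lemma:vcexp}.
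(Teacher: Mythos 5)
Your proposal is correct and follows exactly the route the paper takes: the paper's justification for this corollary is literally ``the proof argument is the same as for Lemma~\ref{lemma:vcexp},'' i.e.\ test whether $k\leq\log^{\delta}(n)$, output the instance unchanged if not, and otherwise simulate the $\Class{TC}^0$-circuit of Lemma~\ref{lemma:matchingTC} in $\Class{AC}^0$ because all its threshold gates only need to count up to $O(k)$ (equivalently, up to a polylogarithmic value). Your additional care in rephrasing the degree tests as bounded threshold decisions is exactly the point the paper compresses into the phrase that the circuits ``only count up to the parameter,'' so there is nothing to add.
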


\section{Parallel Kernels for the Feedback Vertex Set Problem}\label{section:fvs}

The input for $\PLang{feedback-vertex-set} = \PLang{fvs}$ is an undirected multigraph
$G=(V,E)$ and an integer~$k$, the question is whether it is possible
to delete $k$ vertices such that the remaining graph is a forest. The
problem is well-known to be fixed-parameter tractable. Concerning the
parallel complexity, it is known that membership in $\Class{FPT}$ can
be witnessed by a machine that uses ``$\Class{FPT}$ time and $\Class{XL}$
space''~\cite{ElberfeldST15} and the problem was recently shown to lie in
$\Para\Class{NC}^{2+\epsilon}\subseteq\Para\Class{NC^3}$~\cite{BannachT16}.

A lot of effort has been put into the design of sequential kernels for
this problem, ultimately resulting in a kernel with  
$O(k^2)$ vertices~\cite{BurrageEFLMR06, BodlaenderD10,
  Thomasse10, Iwata17}. Much less is known concerning parallel
kernels. Since the $k=0$ slice of $\PLang{fvs}$ is exactly the $\Class
L$-complete~\cite{CookM87} problem whether a given graph is a forest,
we get as a lower bound that no kernel of any size can be computed for
$\PLang{fvs}$ by any circuit class~$C$ unless $\Class{L} \subseteq C$
and the smallest $\Class{AC}$-class for which this is known is
$\Class{AC}^1$. On the other hand, the mentioned membership in
$\Para\Class{NC}^{2+\epsilon}$ 
together with Theorem~\ref{theorem:kernel} yield an $\Class{NC}^{2+\epsilon}$
kernel. In summary:
\begin{lemma}\label{lemma:fvs-up-down}
  There is a \textsc{dlogtime}-uniform family of
  $\Class{NC}^{2+\epsilon}$-circuits that, on
  input of a tuple $(G,k)$, outputs a $\PLang{fvs}$
  kernel with at most $f(k)$ vertices. There is no such family of
  $\Class{AC}^{1-\epsilon}$-circuits, unless $\Class L \subseteq \Class{AC}^{1-\epsilon}$.
\end{lemma}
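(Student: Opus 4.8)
The plan is to prove the two halves separately, since each reduces to a result already available in the surrounding text. For the upper bound, I would start from the membership $\PLang{fvs}\in\Para\Class{NC}^{2+\epsilon}$ established in~\cite{BannachT16} and then invoke Theorem~\ref{theorem:kernel}. Although that theorem is stated for integer exponents~$i$, the remark following it notes that it also holds with $\Class{NC}^i$ in place of $\Class{AC}^i$, and an inspection of its proof shows that nothing there depends on $i$ being an integer: the definition of the threshold $\tilde k$ and the binary-search argument bounding its computation go through verbatim once $\log^i n$ is read as $(\log n)^{2+\epsilon}$. Applying the $\Class{NC}$-variant of the theorem to $\PLang{fvs}$ thus directly yields a \textsc{dlogtime}-uniform family of (polynomial-size) $\Class{NC}^{2+\epsilon}$-circuits that outputs a kernel of size $f(k)$ for some computable~$f$, which is exactly the claimed upper bound.

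For the lower bound, the plan is to exploit the $k=0$ slice of $\PLang{fvs}$. An instance $(G,0)$ lies in $\PLang{fvs}$ precisely when $G$ is already a forest, so the $0$-slice is exactly the forest-recognition problem, which is $\Class{L}$-complete by~\cite{CookM87}. Now suppose that some \textsc{dlogtime}-uniform family of $\Class{AC}^{1-\epsilon}$-circuits computes a $\PLang{fvs}$ kernel $K$ (of any size); I will show that this forces $\Class{L}\subseteq\Class{AC}^{1-\epsilon}$. Feeding $(G,0)$ to this family produces $K(G,0)$ with $|K(G,0)|\leq f(0)$, a constant independent of~$G$; since $K$ preserves membership, $G$ is a forest if and only if $K(G,0)\in\PLang{fvs}$. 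Because the kernel has constant size, the final test $K(G,0)\in\PLang{fvs}$ is a lookup over finitely many fixed strings and is therefore computable by a constant-depth, constant-size circuit, that is, in $\Class{AC}^0$. Composing this with the $\Class{AC}^{1-\epsilon}$ kernel circuit keeps us in $\Class{AC}^{1-\epsilon}$, so forest recognition would lie in $\Class{AC}^{1-\epsilon}$. As forest recognition is $\Class{L}$-hard under $\Class{AC}^0$-reductions and $\Class{AC}^{1-\epsilon}$ is closed under such reductions, this gives $\Class{L}\subseteq\Class{AC}^{1-\epsilon}$, the stated conclusion.

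I expect the main obstacle to be bookkeeping rather than mathematical depth. The two genuinely delicate points are (i) confirming that the proof of Theorem~\ref{theorem:kernel} survives the passage to the non-integer exponent $2+\epsilon$ --- in particular that $\tilde k$ is still computable in time $O(\log n)$ when the depth bound is $(\log n)^{2+\epsilon}$ --- and (ii) checking that the $\Class{L}$-completeness of forest recognition is available under a reduction notion weak enough ($\Class{AC}^0$, equivalently first-order) that $\Class{AC}^{1-\epsilon}$ is closed under it, so that hardness really transfers. Both are routine, but they are exactly the places where the argument could silently break if the completeness in~\cite{CookM87} were only under logspace reductions; I would therefore state explicitly which reduction class is used. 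Everything else --- uniformity of the composed circuit and the constant-size brute-force postprocessing --- follows the same pattern as in the proofs of Theorem~\ref{theorem:kernel} and Lemma~\ref{lemma:work}.
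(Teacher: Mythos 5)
Your proposal is correct and follows the same route as the paper: the upper bound comes from $\PLang{fvs}\in\Para\Class{NC}^{2+\epsilon}$ combined with Theorem~\ref{theorem:kernel}, and the lower bound from the $k=0$ slice being the $\Class{L}$-complete forest-recognition problem, so that any kernelization circuit plus a constant-size lookup would place $\Class{L}$ inside $\Class{AC}^{1-\epsilon}$. Your two flagged checkpoints (non-integer exponents in Theorem~\ref{theorem:kernel} and the reduction notion for $\Class{L}$-hardness) are sensible refinements that the paper leaves implicit.
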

A natural first question arising from this lemma is: Can we improve
the bounds? It turns out that we can lower the upper bound from
$\Class{NC}^{2+\epsilon}$ to $\Class{AC}^{1+\epsilon}$ by observing the 
reduction rules used in sequential kernels for $\PLang{fvs}$ can, in
certain cases, be applied in parallel. In detail, the known sequential
kernels for $\PLang{fvs}$ all repeatedly apply 
(at least) the below rules, whose correctness is very easily
seen. We will show that each of the first three rules can individually
be applied exhaustively in $\Class{AC^1}$. Based on this, we show
$\PLang{fvs}\in\Para\Class{AC}^{1\uparrow}$.  
\begin{description}
  \def\myskip{\hphantom{Flower Rule}}
\item[\rlap{Leaf Rule}\myskip] Delete a vertex $v$ of degree $1$.
\item[\rlap{Chain Rule}\myskip] Contract a vertex $v$ of degree $2$ to one of its
  neighbors.
\item[\rlap{Loop Rule}\myskip] Delete a vertex $v$ with $v\in N(v)$, reduce $k$ by~$1$.
\item[\rlap{Flower Rule}\myskip] Delete a vertex~$v$ that appears in more then $k$
  cycles that only share the vertex~$v$, reduce $k$ by~$1$.
\end{description}
\clearpage
\begin{lemma}\label{lemma:fvs:rulesInAC}
  There is a \textsc{dlogtime}-uniform family of
  $\Class{AC^1}$-circuits that, on
  input of a tuple $(G,k)$, outputs a tuple $(G',k')$ that results
  from repeatedly applying (only) the Leaf Rule as long as possible. The same holds for
  the Chain Rule and for the Loop Rule.
\end{lemma}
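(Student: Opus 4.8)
The plan is to avoid simulating the rules round by round, since on a long path or chain this takes $\Theta(n)$ rounds, far more than the $O(\log n)$ depth an $\Class{AC^1}$-circuit affords. Instead, for each rule I would give a \emph{static} description of its fixed point that can be evaluated by connectivity, counting, and list-ranking computations, all of which lie in $\Class{AC^1}$: undirected connectivity and the identification of connected components (in $\Class{L}\subseteq\Class{AC^1}$, e.g.\ via transitive closure); iterated addition and hence counting up to~$n$ (in $\Class{TC^0}\subseteq\Class{AC^1}$); and pointer jumping / list ranking on a disjoint union of paths ($O(\log n)$ rounds, hence in $\Class{AC^1}$). The Loop Rule is the warm-up: deleting a vertex carrying a self-loop never creates a new self-loop elsewhere, so the rule is confluent and all self-loop vertices may be deleted in a single parallel round; the circuit identifies them in constant depth, counts them (say there are~$s$), outputs $G$ minus these vertices, and sets $k'=k-s$. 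No iteration is needed here at all.

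For the Leaf Rule I would first establish the structural claim that the fixed point is exactly the $2$-core of~$G$ together with one representative vertex of each tree component. The key observations are that inside a connected component containing a cycle the process never isolates a vertex (the cycle keeps the component connected), so there the Leaf Rule coincides with iterated removal of vertices of degree~$\le 1$, whose fixed point is the $2$-core; whereas a tree component collapses to a single isolated vertex, which I would canonicalize to the lexicographically smallest vertex of the component so that the output is well defined (this also resolves the confluence subtlety that an isolated edge must keep one endpoint, not lose both). Components and their tree/cyclic status are obtained in $\Class{AC^1}$ from connectivity plus the edge--vertex count, a component being a tree iff $|E|=|V|-1$. The one genuinely new ingredient is computing the $2$-core in $\Class{AC^1}$, for which I would use the characterization that a non-bridge edge (one lying on a cycle) is always in the $2$-core, while a bridge~$e$ is in the $2$-core iff \emph{both} components of $G-e$ contain a cycle. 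Bridges are detected in $\Class{AC^1}$ (for every edge test in parallel whether its endpoints remain connected after its removal), and ``contains a cycle'' is again an edge-count test on a connected piece, so the whole computation stays in $\Class{AC^1}$; a vertex survives iff it is incident to a surviving edge. As the rule does not touch the parameter, $k'=k$.

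For the Chain Rule I would exploit that suppressing a degree-$2$ vertex $w$ with neighbors $a,b$ (contracting $w$ into~$a$, thereby replacing the length-two path through~$w$ by the edge $\{a,b\}$) leaves the degrees of $a$ and~$b$ unchanged. Hence the set of vertices ever suppressed is precisely the set of \emph{original} degree-$2$ vertices, and the exhaustive rule application amounts to replacing each maximal chain of degree-$2$ vertices by a single edge between its two external endpoints. The maximal chains are the connected components of $G$ induced on the degree-$2$ vertices; each is a path or a cycle and is found in $\Class{AC^1}$. For each path-chain I would locate its two end vertices and their external neighbors $a,b$ and emit the edge $\{a,b\}$ (a self-loop if $a=b$), deleting the chain; matching up the two ends of a long chain is exactly a list-ranking problem and thus in $\Class{AC^1}$. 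Edges between two surviving (non-degree-$2$) vertices are kept verbatim, pure degree-$2$ cycle components are handled as an explicit special case, and again $k'=k$.

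I expect the Leaf Rule to be the main obstacle, for two reasons: pinning down the correct fixed point (the $2$-core-plus-representatives description, including the confluence subtleties around tree components and isolated edges) requires care, and staying within the tight $\Class{AC^1}$ depth budget forces the move away from round-based peeling to the static bridge characterization above. Once the $2$-core computation is in place, the remaining bookkeeping for all three rules~--~re-indexing the surviving vertices, assembling the new edge multiset, and updating~$k$~--~is routine and clearly within $\Class{AC^1}$, and \textsc{dlogtime}-uniformity follows because every subcircuit used (connectivity, counting, list ranking) is itself \textsc{dlogtime}-uniform.
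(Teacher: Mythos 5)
Your proposal is correct and follows essentially the same route as the paper: the Loop Rule is handled in a single parallel round, the Leaf Rule's fixed point is characterized statically via connectivity tests evaluable in logspace (the paper tests, for each vertex and each edge, whether deleting that edge disconnects the graph and leaves the vertex in a tree component, which is equivalent to your bridge/$2$-core criterion), and the Chain Rule is handled by marking the degree-$2$ vertices and connecting the external endpoints of each maximal chain, all within $\Class{L}\subseteq\Class{AC}^1$. Your write-up is somewhat more explicit about the corner cases (tree components collapsing to a single representative, pure degree-$2$ cycles, self-loops from coinciding chain endpoints) that the paper's terser argument leaves implicit.
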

\begin{proof}
  The claim follows immediately for the Loop Rule as we may delete all
  such vertices in parallel and since the deletion of a vertex cannot
  create new vertices with a self-loop. For the other two rules
  observe that an ``exhaustive application'' equals either the
  deletion of attached trees (for the Leaf Rule), or the contraction
  of induced paths (for the Chain Rule). For the first case, the
  circuit must be able to detect if a vertex $v$ becomes a leaf at
  some point of the computation (of course, the circuit cannot sequentially delete
  degree one vertices). The following observation provides a locally
  testable property that allows precisely such a detection: A vertex
  $v$ is contained in an attached tree if, and only if, it is possible to delete a
  single edge such that (a) the graph decomposes into two components
  and such that (b) the component of $v$ is a
  tree~\cite{ElberfeldST15}. Both properties can be tested in logspace
  (and hence in $\Class{AC}^1$), and an $\Class{AC}^1$-circuit can
  test them for all vertices and all edges in parallel. Finally, for
  the Chain Rule, observe that an $\Class{AC}^1$-circuit can mark all degree two vertices in parallel and that such a circuit,
  afterwards, only has to connect the two endpoints of highlighted
  paths~--~which is a again a logspace task.
\end{proof}
\begin{theorem}\label{theorem:fvs:ac1up}
  $\PLang{fvs}\in\Para\Class{AC}^{1\uparrow}$.
\end{theorem}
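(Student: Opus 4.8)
The plan is to interleave the exhaustive rule applications of Lemma~\ref{lemma:fvs:rulesInAC} with a bounded-depth branching step. The first point to realize is that reducing with the Leaf, Chain, and Loop Rules alone cannot produce a kernel of size $f(k)$: a single hub vertex joined to all vertices of a long cycle has minimum degree three, a feedback vertex set of size two, and arbitrarily many vertices, yet none of the three rules applies to it. So reduction must be combined with branching. My algorithm therefore repeats the following \emph{phase} at most $k+1$ times along any computation: first apply the Leaf, Chain, and Loop Rules in rounds until none of them fires; then, if the remaining graph is a forest, accept (provided the current budget is nonnegative); otherwise branch as described below.

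For the branching I would exploit the structure the reduction produces. Once a phase reaches its fixed point, every non-isolated vertex has degree at least three and there are no self-loops, so the graph still contains a cycle and, by the standard Moore bound on the girth of minimum-degree-three graphs, its girth is at most $2\log_2 n+1$ (multi-edges created by the Chain Rule only shorten cycles). A shortest cycle $C$ can be located by computing pairwise distances, which is a logspace and hence $\Class{AC}^1$ task. Since every feedback vertex set contains at least one vertex of $C$, I branch into $|C|\le 2\log_2 n+1$ subinstances, the $i$-th of which deletes the $i$-th vertex of $C$ and decrements $k$. Every branch thus lowers the budget by one, so the resulting search tree has height at most $k$.

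The heart of the argument is the resource accounting. Along any root-to-leaf path there are at most $k$ branchings, and the total number of reduction rounds is also $O(k)$: each round that is not the last of its phase fires the Loop Rule and hence lowers the budget, while each phase contributes one extra fixed-point-confirming round, so all rounds are charged against the parameter. As each round and each branching step is an $\Class{AC}^1$ circuit of depth $O(\log n)$, the depth along any path is $O(k\log n)$, and combining the subinstances by \Lang{or}-gates adds only $O(k)$ further layers; the total depth is $O(k\log n)=f(k)\log n$, exactly what $\Para\Class{AC}^{1\uparrow}$ permits. For the size, the search tree has at most $(2\log_2 n+1)^{k}$ leaves, and since every node is a polynomial-size circuit the whole family has size $(2\log_2 n+1)^{k}\cdot n^{O(1)}$. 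Although the branching factor is $\Theta(\log n)$ rather than constant, the term $(\log n)^{k}$ is bounded by $f(k)\cdot n$ for a suitable computable $f$ (for every fixed $k$ it is subpolynomial, and the finitely many small $n$ are absorbed into $f$), so the size stays within the $f(k)\cdot n^{c}$ budget; uniformity is inherited from the uniform subcircuits.

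The main obstacle I expect is precisely this accounting together with the structural claim behind it. I must verify that the three rules really drive the graph to minimum degree three in $O(k)$ rounds despite the multi-edges and self-loops that contracting degree-two vertices can create -- these feed exactly into the Loop Rule and are therefore paid for out of the budget -- and I must be careful that the non-constant, $\Theta(\log n)$ branching factor is allowed to enter the \emph{size} but never the \emph{depth} multiplicatively. Once these two points are pinned down, correctness follows from the (easily checked) soundness of the four rules and the fact that a forest has an empty feedback vertex set.
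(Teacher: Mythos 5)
Your proof is correct and shares its skeleton with the paper's: both interleave $O(k)$ layers, each consisting of the exhaustive rule applications of Lemma~\ref{lemma:fvs:rulesInAC} followed by a branching step, and both charge every productive layer against the budget~$k$, giving depth $O(k\log n)$. The difference is the branching rule. The paper branches on the two endpoints of a multi-edge if one exists, and otherwise uses the fact that in a multigraph of minimum degree three any size-$k$ feedback vertex set must contain one of the $3k$ highest-degree vertices; the branching factor is then $3k$, so the size $(3k)^k\cdot n^{O(1)}$ is of the form $f(k)\cdot n^c$ with no further work. You instead branch on a shortest cycle, of length $O(\log n)$ by the Moore bound once minimum degree three is reached (multi-edges, being $2$-cycles, are subsumed). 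This is also sound, but it shifts the burden to the size analysis: you need the interleaving bound $(\log n)^k\le k^{2k}+n^{O(1)}$, which you correctly supply, and you need to extract a shortest cycle inside a single $\Class{AC}^1$ layer. On the latter point your justification is slightly off: computing distances in undirected graphs is not known to be a logspace task (only $s$-$t$ connectivity is, by Reingold); however, ``$\mathrm{dist}(u,v)\le d$'' is decidable in $\Class{NL}\subseteq\Class{AC}^1$ for all triples in parallel, so the conclusion stands. Your round-accounting claim -- that a round which does not fire the Loop Rule is the last of its phase -- is also correct, since Chain-Rule contractions leave all other degrees unchanged and the exhaustive Leaf Rule leaves minimum degree at least two among non-isolated vertices, so only Loop-Rule deletions can re-enable the other rules. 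In sum, the paper's parameter-bounded branching set buys an immediate size bound, while your cycle-based branching is the more classical feedback-vertex-set argument and works equally well here at the cost of the extra $(\log n)^k$ bookkeeping.
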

\begin{proof}
  We have to construct a family of $\Class{AC}$-circuits of depth
  $f(k)\cdot\log n$ and size $f(k)\cdot n^c$. The circuits will
  consist of $k$ layers such that every layer finds a set of at
  most $3k$ vertices to branch on (which will be done for the next
  layer). Note that layer~$i$ contains at most $3k$ as many
  subcircuits as layer $i-1$.

  Each layer consists of multiple $\Class{AC^1}$-circuits that work
  independently of each other on different possible graphs (depending
  on the branches of the previous layer). Each of these circuits first
  checks if the input is a yes-instance (input
  is a tree and $k\geq 0$), or a no-instance ($k<0$)~--~in the first case it just globally
  signals this circumstance and in the second case it truncates this
  path of the computation. If the subcircuit has not decided yet, it
  applies first the Leaf Rule exhaustively, and then Chain Rule
  exhaustively~--~both are possible due to
  Lemma~\ref{lemma:fvs:rulesInAC}. The circuit now applies the
  Loop Rule (again, using Lemma~\ref{lemma:fvs:rulesInAC}), if the
  rule has an effect (that is, $k$ was reduced by at least one) the
  circuit is done and just pipes the result to the next layer. If not,
  the circuit tests in parallel if there are two vertices $v$ and $u$
  that are connected by a multi-edge (that is, by at least two
  edges). If this is the case, any feedback vertex set must contain
  either $v$ or $u$ and, hence, the circuit branches on these two
  vertices and pipes the two resulting graphs to the next
  layer. Otherwise, we know that we have no vertex with a self-loop,
  no vertices with multi-edges, and a minimum vertex-degree of at
  least three. The circuit then uses the simple fact that any size $k$
  feedback vertex set in such graph must contain at least one vertex
  of the $3k$ vertices of highest degree, and hence, may simple branch
  over these~\cite{Cygan:2015fr}.

  Since each layer reduced $k$ in each branch by at least one, after
  at most $k$ layers every branch has decided if it deals with a yes-
  or a no-instance. Since each layer is implemented by an
  $\Class{AC^1}$-circuit, the claim follows.
\end{proof}
\begin{corollary}
  There is a \textsc{dlogtime}-uniform family of
  $\Class{AC}^{1+\epsilon}$-circuits that, on
  input of a tuple $(G,k)$, outputs a $\PLang{fvs}$
  kernel with at most $f(k)$ vertices.
\end{corollary}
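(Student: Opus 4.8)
The plan is to obtain the corollary as a cheap consequence of Theorem~\ref{theorem:fvs:ac1up} and Theorem~\ref{theorem:kernel}, rather than building a kernelization by hand. The pivotal observation is that $\Para\Class{AC}^{1\uparrow}$, whose circuits have depth $f(k)\cdot\log n$, is contained in $\Para\Class{AC}^{1+\epsilon}$, whose circuits may have depth $g(k)+c\log^{1+\epsilon} n$. Granting this inclusion, Theorem~\ref{theorem:fvs:ac1up} immediately places $\PLang{fvs}$ in $\Para\Class{AC}^{1+\epsilon}$, and Theorem~\ref{theorem:kernel}, applied with exponent $i=1+\epsilon$, then delivers a kernelization of size $f(k)$ computable by $\Class{AC}^{1+\epsilon}$-circuits, which is exactly the claim.

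First I would verify the inclusion $\Para\Class{AC}^{1\uparrow}\subseteq\Para\Class{AC}^{1+\epsilon}$ by a purely numerical comparison of the depth bounds, leaving the circuits, their sizes, and the uniformity machine untouched. Given depth $f(k)\cdot\log n$ I would split into two regimes. When $\log n\ge f(k)^{1/\epsilon}$ we have $f(k)\le\log^\epsilon n$, hence $f(k)\cdot\log n\le\log^{1+\epsilon} n$, so the depth already fits the polylogarithmic budget. When $\log n<f(k)^{1/\epsilon}$ the product $f(k)\cdot\log n$ is bounded by $f(k)^{1+1/\epsilon}$, a quantity depending only on the parameter, which is absorbed into the additive term. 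Taking $g(k)=f(k)^{1+1/\epsilon}$ thus witnesses $f(k)\cdot\log n\le g(k)+\log^{1+\epsilon} n$ and establishes the inclusion.

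With $\PLang{fvs}\in\Para\Class{AC}^{1+\epsilon}$ in hand, I would invoke Theorem~\ref{theorem:kernel} for the non-integer exponent $1+\epsilon$. The single point that needs re-reading is that the proof of that theorem never relied on $i$ being an integer: the value $c\log^{1+\epsilon} n$ is still an $O(\log\log n)$-bit number, the binary search for the threshold $\tilde k$ still ranges over only $O(\log\log n)$ candidates, and the per-branch depth and size estimates $f(j)+c\log^{1+\epsilon} n\le c'\log^{1+\epsilon} n$ and $f(j)\cdot n^c\le n^{c'}$ for $j\le\tilde k$ go through verbatim. Hence the theorem applies unchanged and yields the desired $\Class{AC}^{1+\epsilon}$-computable kernelization, whose kernels are bounded by a computable function $f(k)$ (and in particular have at most $f(k)$ vertices).

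The main obstacle here is shallow: it is simply the bookkeeping around the real exponent $1+\epsilon$ in both the inclusion calculation and the reuse of Theorem~\ref{theorem:kernel}. As a self-contained alternative that avoids quoting the theorem, one could inline its construction: pick $\tilde k$ as the largest parameter with $f(\tilde k)\le\log^\epsilon n$, run the $\Para\Class{AC}^{1\uparrow}$-circuits $C_{n,j}$ for every $j\le\tilde k$ in parallel (each of depth $f(j)\cdot\log n\le\log^{1+\epsilon} n$ and polynomial size), output a trivial yes- or no-instance whenever some branch detects $\kappa(x)=j$, and otherwise output the input unchanged. The latter is legitimate because $\kappa(x)>\tilde k$ forces $\log^\epsilon n<f(\kappa(x))$ and hence $n<2^{f(\kappa(x))^{1/\epsilon}}$, so the input is already bounded by a computable function of the parameter and is itself a valid kernel.
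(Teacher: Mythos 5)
Your proof is correct and takes essentially the same route as the paper: the paper's own proof is a one-line appeal to Theorem~\ref{theorem:kernel} combined with the inclusion $\Para\Class{AC}^{1\uparrow}\subseteq\Para\Class{AC}^{1+\epsilon}$, which it cites from the literature rather than re-deriving. Your regime-splitting verification of that inclusion (absorbing $f(k)\cdot\log n$ into $f(k)^{1+1/\epsilon}+\log^{1+\epsilon}n$) and your check that Theorem~\ref{theorem:kernel} tolerates the non-integer exponent are both sound; they simply spell out what the paper leaves to a citation.
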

\begin{proof}
  Follows by Theorem~\ref{theorem:kernel} and by the fact that $\Para\Class{AC}^{i\uparrow}\subseteq\Para\Class{AC}^{i+\epsilon}$~\cite{BannachST15}.
\end{proof}

We now have rather tight bounds (an upper bound of
$\Class{AC}^{1+\epsilon}$ and a conditional lower bound of
$\Class{AC}^{1-\epsilon}$) on how quickly we can compute 
\emph{some} kernel for $\PLang{fvs}$ in parallel. However, there is a
natural second question arising from Lemma~\ref{lemma:fvs-up-down}: Can we also compute a polynomial kernel in parallel?

We claim that progress towards such a kernel cannot
solely be based on the presented reduction rules.  
In the proof of Theorem~\ref{theorem:fvs:ac1up} we may need to branch
after the exhaustive application of one of the rules 
Leaf Rule, Chain Rule, or Loop Rule. If we seek to implement
a polynomial kernel for $\PLang{fvs}$ in $\Class{NC}$, we have to
implement these rules without branching and have to apply the
rules exhaustively \emph{together} while they may influence
each other. Figure~\ref{figure:fvs} provides an intuition why this interplay 
is ``very sequential,'' and Theorem~\ref{theorem:fvs:ruleP}
provides evidence that it is in fact very unlikely that there exists a
parallel algorithm that computes the result of jointly applying all rules exhaustively.

\clearpage
\begin{figure}[ht]
  \begin{center}
    \includegraphics{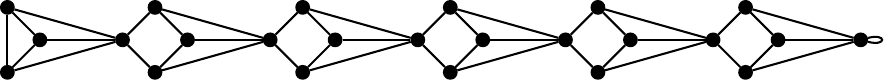}     
  \end{center}
  \caption{A graph that is fully reduced by the Chain Rule and the
    Loop Rule in $k=6$ rounds. In every round, the Chain Rule can only
    be applied after the Loop Rule was used exhaustively.}
  \label{figure:fvs}
\end{figure}
\begin{theorem}\label{theorem:fvs:ruleP}
  The problem of deciding whether a specific vertex of a given graph
  will be removed by an exhaustive application of the Leaf Rule, the
  Chain Rule, and the Loop~Rule is $\Class{P}$-hard under $\Class{NC^1}$-reduction. 
\end{theorem}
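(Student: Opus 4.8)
The plan is to reduce from the monotone circuit value problem (MCVP), which is classically $\Class{P}$-complete under $\Class{NC}^1$-reductions, and which is the right source problem because the three reduction rules are ``deletion-monotone'' in spirit: removing a vertex can only enable, never disable, later rule applications, so the rules naturally simulate \Lang{and}- and \Lang{or}-gates but not negation. Before building the reduction I would first argue that the target problem is even well posed, i.e.\ that the graph obtained by exhaustively applying the Leaf, Chain, and Loop Rules is independent of the order of application. Termination is immediate, since each of the three rules strictly decreases the number of vertices, so at most $|V|$ rule applications ever occur. Local confluence follows from a finite critical-pair analysis: two independent deletions commute; contracting a degree-$2$ vertex to either of its neighbours yields the same graph (both neighbours survive and an edge between them is added); and a deletion that lowers a neighbour's degree into the regime of another rule only ever enables that rule later. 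By Newman's Lemma, termination together with local confluence gives a unique normal form, so ``whether a specific vertex survives'' is a determinate question.

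For the reduction itself I would encode the Boolean value on a wire of the circuit by whether a designated \emph{wire vertex} of the corresponding gadget is removed during the reduction, with \emph{true} represented by removal. An input gate set to $1$ is realised by a gadget that is immediately reducible (for instance a vertex carrying a self-loop, triggering the Loop Rule), while an input set to $0$ is a rigid, self-loop-free gadget of minimum degree~$3$ that no rule can touch. The key mechanism, already visible in Figure~\ref{figure:fvs}, is that removing a vertex lowers the degree of its neighbours, which can push a downstream wire vertex to degree~$1$ or~$2$ (enabling the Leaf or Chain Rule) or, via a contracted double edge, create a self-loop (enabling the Loop Rule). An \Lang{or}-gate gadget is built so that its wire vertex drops to degree~$1$ as soon as \emph{one} input wire vertex has been removed; an \Lang{and}-gate gadget carries enough ``supporting'' edges, anchored to rigid degree-$3$ scaffolding, that its wire vertex becomes reducible only once \emph{all} of its input wire vertices are gone. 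The distinguished vertex whose removal we ask about is the wire vertex of the output gate.

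The correctness proof then proceeds by induction on the topological layers of the circuit, establishing that a gate evaluates to $1$ if and only if its wire vertex is eventually removed. The subtle point, and the reason the process is inherently sequential exactly as Figure~\ref{figure:fvs} illustrates, is \emph{timing}: a gate gadget must reach its triggering degree condition only \emph{after} all of its inputs have been resolved and must never fire prematurely on a partially evaluated input. I would enforce this by anchoring every gadget to rigid degree-$\geq 3$ structure that keeps its wire vertex above the critical degree until precisely the intended inputs have been removed, so that no Leaf or Chain Rule acts out of order and no spurious self-loop is created.

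Finally I would verify that the whole construction is an $\Class{NC}^1$-reduction (in fact an $\Class{AC}^0$ one): each gate and wire of the input circuit is replaced by a constant-size gadget, and gadgets are glued according to the circuit's adjacency, which is a purely local, parallel transformation. The main obstacle is the gadget design of the two preceding paragraphs: one must realise \Lang{and} and \Lang{or} faithfully while making the gadgets robust against the Chain Rule, whose contraction of degree-$2$ vertices is the rule most likely to fire unexpectedly and corrupt a signal, and while simultaneously guaranteeing the layer-by-layer timing that is what makes the simulation sequential and hence yields $\Class{P}$-hardness.
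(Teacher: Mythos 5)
Your overall plan coincides with the paper's: reduce from \Lang{MCVP}, replace each gate by a constant-size gadget, encode ``true'' as ``this vertex gets removed,'' realise a true input by a self-loop (Loop Rule) and a false input by a rigid clique, and prove correctness by induction along a topological order while guarding against premature firing. However, the substantive content of the proof is precisely the gadget construction and its case analysis, and you explicitly defer this (``the main obstacle is the gadget design''), so the argument as written has a genuine gap. Worse, the one concrete mechanism you do sketch is the wrong one: you propose that an \Lang{or}-gadget's wire vertex ``drops to degree~$1$'' and is removed by the Leaf Rule once one input is resolved. This is inconsistent with your own anchoring requirement (a vertex glued to a rigid degree-$\geq 3$ scaffold can never reach degree~$1$ or~$2$), and it cannot support fan-out: the output vertex must retain one outgoing edge per successor gadget at the moment it is removed, so any rule that only fires at degree~$1$ or~$2$ is unusable for it. The paper resolves this by making \emph{every} output vertex permanently of degree greater than two (it is fully connected to a clique of size three), so that the \emph{only} rule that can ever remove it is the Loop Rule, triggered when the Chain Rule contracts a cycle of degree-$2$ vertices inside the gadget into a self-loop; the Leaf and Chain Rules act only on the low-degree interior vertices of the gadgets. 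Your induction also needs the converse safety property, which you do not address beyond ``never fire prematurely'': one must check that rule applications inside a gadget cannot propagate \emph{backwards} and create a self-loop on the output vertex of a predecessor gadget. The paper handles this with an explicit case distinction over the four input assignments of each gadget, showing no such back-propagating cycle of degree-$2$ vertices can arise.

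Two smaller remarks. First, your confluence discussion via Newman's Lemma is a reasonable (and not unwelcome) addition that the paper omits, but your critical-pair analysis is asserted rather than carried out, and in the multigraph setting there are nontrivial overlaps (e.g.\ a degree-$2$ vertex both of whose edges go to the same neighbour, where the Chain Rule manufactures a self-loop and hands control to the Loop Rule); the paper sidesteps the issue by showing the simulation works even under a fixed alternating application order. Second, the claim that the reduction is $\Class{NC}^1$-computable is fine and matches the paper, but it is contingent on gadgets that do not yet exist. Until you exhibit concrete \Lang{and}- and \Lang{or}-gadgets and verify the table of behaviours for the assignments $(\text{true},\text{false})$, $(\text{false},\text{true})$, $(\text{true},\text{true})$ together with the no-back-propagation check, the proof is not complete.
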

\begin{proof}
  We will reduce from the 
  monotone circuit value problem (\Lang{MCVP}), which is known to be 
  $\Class{P}$-complete under
  $\Class{NC}^1$-reduction~\cite{Greenlaw95}. The input to this
  problem is 
  a monotone circuit (it consists only of \Lang{and}-gates and \Lang{or}-gates of
  indegree~$2$, and it has a single gate marked as output) and an assignment of the input gates, the question is
  whether or not the output gate evaluates to
  true. We will transform the input circuit into a
  multi-graph by replacing any gate with a small gadget. Every gadget
  will have two vertices marked as ``input'' and one marked as
  ``output''. The ``input'' vertices are incident to exactly one edge
  outside of the gadget (which connects them to the ``output'' vertex
  of another gadget), the ``output'' vertex of the gadget may have edges
  to an arbitrary number of other ``input'' vertices.
  The semantic then is as follows: The edge of an ``input'' vertex
  that leaves the gadget will be removed by the reduction rules 
  when the corresponding wire of the
  circuit would have the value true for the given assignment of
  the input gates; similarly the ``output'' vertex of the gadget will
  be removed if, and only if, the corresponding gate would evaluate to
  true under the given assignment (this in turn removes the edges to
  other ``input'' vertices and propagates the computation of the circuit). By induction, the ``output'' vertex of the gadget
  corresponding to the output gate will then be removed if, and only
  if, the circuit evaluates to true~--~which completes the proof.

  We start with a description of the transformation. For
  clarity, we stipulate that a self-loop contributes two to the
  degree of a vertex, similarly multi-edges increase the degree by
  their multiplicity. Therefore, the Chain Rule may not be applied to a
  leaf with a self-loop. We further stipulate that the Chain Rule may
  not be applied to a self-loop, i.\,e., it has to contract two
  distinct vertices (and hence, self-loops may only be handled by the
  Loop Rule). 
  For the input gates, we use the gadgets
  \raisebox{-.3ex}{\includegraphics{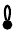}}
  and
  \raisebox{-3.55508pt}{\includegraphics{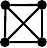}}
  for gates with assignment true or false, respectively. For
  \Lang{and}-gates, we use the gadget
  \hspace{-1ex}\raisebox{-2ex}{\includegraphics{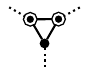}}\hspace{-1ex},
  and for \Lang{or}-gates
  \hspace{-1ex}\raisebox{-2ex}{\includegraphics{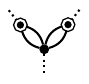}}\hspace{-1ex}.
  In these figures, the two top circled vertices are
  the ones we call ``input'', while the bottom vertex is the
  ``output'' vertex. The dotted lines indicate edges that leave the
  gadget. For every ``input'' vertex there will be exactly one
  outgoing edge, as any gate has exactly to incoming wires. The
  ``output'' vertex may have edges to an arbitrary number of successor
  gates; to ensure that there is at least some edge, we fully connect such
  vertices to cliques of size three (that is, the ``output'' vertex is
  part of a clique of size four)~--~this ensures that the degree of
  ``output'' vertices is always greater then two.

  We first prove that these gadgets work locally as
  intended, that is, that they perform as an input gate that is set to
  true or   false, or as an \Lang{and}- or \Lang{or}-gate. Observe
  that the gadget for input gates that are true contains a self-loop,
  that is, it gets removed by 
  the Loop~Rule; and observe that the gadget for a false input gate is
  a clique to which no rule can be applied. Now for the internal
  gates, observe that all vertices have degree at least three and no
  self-loop, that is, no rule can be applied unless one ``incoming'' edge
  gets removed (recall that each ``output'' vertex is fully connected to a
  clique of size three). In other words, the
  gadgets simulate the corresponding gates correctly for the
  assignment ``(false, false)'', a case distinction shows that this is
  also the case for the other assignments:
  \begin{center}
    \def\myskip{\vspace{0.55ex}}
    \begin{tabular}[t]{ccc}
      \toprule
      Assignment &\hbox to0pt{ Behaviour of the \hss}\\
      & \Lang{and}-gadget & \Lang{or}-gadget\\
      \cmidrule(rl){1-3}
      (true, false) &
\(
  \raisebox{-2ex}{\includegraphics{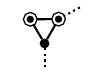}}\mapsto
  \raisebox{-2ex}{\includegraphics{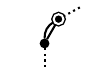}}
\) & \(
     \raisebox{-2ex}{\includegraphics{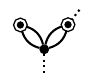}}\mapsto
     \raisebox{-2ex}{\includegraphics{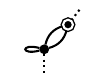}}
     \)\\
      (false, true) & \(
   \raisebox{-2ex}{\includegraphics{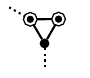}}\mapsto
   \raisebox{-2ex}{\includegraphics{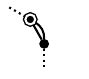}}
 \) & \(
 \raisebox{-2ex}{\includegraphics{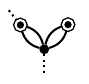}}\mapsto
 \raisebox{-2ex}{\includegraphics{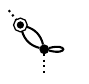}}                      
\)\\
      (true, true) & \(
\raisebox{-2ex}{\includegraphics{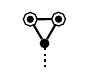}}\mapsto
\raisebox{-2ex}{\includegraphics{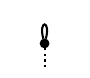}}
\) &  \(
\raisebox{-2ex}{\includegraphics{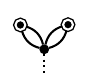}}\mapsto
\raisebox{-2ex}{\includegraphics{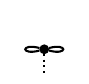}}
 \)\\  
      \bottomrule
    \end{tabular}
  \end{center}
  Observe that the ``output'' vertex obtains a self-loop (and hence
  gets removed by the Loop~Rule) if, and only
  if, the corresponding gate evaluates to true; note that the degree
  of a ``output'' vertex is always greater than two and,
  hence, it is never affected by the Leaf Rule or the Chain Rule.

  We now show the
  correctness of the construction by an induction over the gates of
  the input circuit in topological order. The induction hypothesis is
  that the gadget corresponding to the current gate gets modified by
  the Leaf Rule, the Loop Rule and the Chain Rule in the same way as the gate gets
  evaluated. The base case is given as this is true for the input
  gates by construction. For the inductive step consider the gadget
  corresponding to any gate $g$, and let it have the vertices $x$, $y$,
  and $z$ where $z$ is the ``output'' vertex. By the induction hypothesis the vertices
  marked as ``input'' (i.\,e., $x$ and $y$) lose an incident edge for input wires that
  evaluate to true (as the gates corresponding to these gadgets
  precede $g$ in the topological order), the above table then states that the gadget works
  correctly. The only pitfall we need to address is that the simulation
  does not ``work backwards'', that is, that reduction rules in $g$ trigger a
  reduction rule for the ``output'' vertex~$v$ of a gadget that
  corresponds to a gate that precedes $g$ in the topological
  order. The only way in which the described scenario appears is
  when $v$ obtains a self-loop (as $v$ is an ``output'' vertex it is
  connected to a clique of size three and, hence, the only rule that can delete
  $v$ is the Loop~Rule). The only way to generate a self-loop
  is to contract a cycle of degree-2 vertices to $v$. With out loss
  of generality, we may assume that $v$ is connected to exactly one of
  $x$ and $y$ and we may assume that it is $x$. Therefore, to generate
  a self-loop on $v$ the reduction rules have to modify the gadget
  such that $x$ has exactly two incident edges which both are connected
  to the vertex~$v$~--~the case distinction in the table shows that
  this can not happen.
  
  The induction completes the proof. A full example of the
  construction is provided in Figure~\ref{figure:fvsExample}. It is worth mentioning that the
  construction almost never generates vertices to which the Leaf Rule
  can be applied, the sole exception are \Lang{or}-gates in which
  only one input is set to true (after the deletion of the ``output''
  vertex, the second input vertex becomes a leaf). However, in this
  case the application of the Leaf Rule has no effect on the behaviour
  of the gadgets. Also note that the simulation still works if we
  alternatingly apply the Chain~Rule and the Loop Rule exhaustively
  (i.\,e., exhaustively apply the Chain Rule, exhaustively apply the
  Loop Rule, exhaustively apply the Chain Rule, and so forth). These
  observations yield Remark~\ref{remark:fvs:ruleP}.
\end{proof}
\begin{figure}[h]
  \begin{center}
    \includegraphics{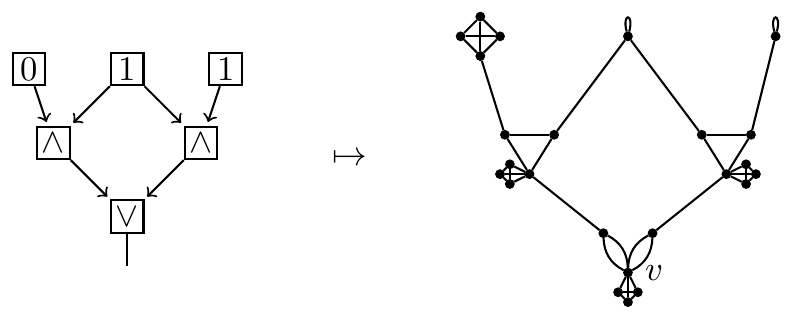}
  \end{center}
  \caption{An example of the construction from the proof of
    Theorem~\ref{theorem:fvs:ruleP}. The vertex labeled $v$ is the
  ``output'' vertex of the gadget corresponding to the output gate of
  the circuit, that is, $v$ gets removed by the reduction rules if, and
  only if, the circuit evaluates to true. This is the case in this
  example, but if we change, for instance, the third input gate to
  false (replace the self-loop with a clique), $v$ would not be removed.}
  \label{figure:fvsExample}
\end{figure}
\begin{bremark}\label{remark:fvs:ruleP}
  The proof of Theorem~\ref{theorem:fvs:ruleP} shows that the problem
  remains $\Class{P}$-hard restricted to the Chain Rule and the Loop
  Rule, even if they are alternatingly executed exhaustively.
\end{bremark}
We close this section with the observation that also the last rule,
the Flower Rule, is unlikely to yield a parallel algorithm.
\begin{theorem}\label{theorem:flowerRule}
  Unless $\Lang{matching}\in\Class{NC}$, there is no
  \textsc{dlogtime}-uniform family of $\Class{NC}^i$\-circuits for any~$i$ that
  determines, give a graph $G=(V,E)$, an integer~$k$, and a vertex~$v$,
  whether the Flower Rule can be applied to~$v$.
\end{theorem}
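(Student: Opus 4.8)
The plan is to prove the contrapositive by giving a trivial \textsc{dlogtime}-uniform $\Class{NC}^1$-reduction (in fact an $\Class{AC}^0$-reduction) from the problem of computing the size of a maximum matching in bipartite graphs to the flower-detection problem. If the latter were decidable by a family of $\Class{NC}^i$-circuits, then running these circuits would let us compute maximum matching sizes of bipartite graphs in $\Class{NC}$, which is the longstanding open problem of whether $\Lang{matching}\in\Class{NC}$. The reduction itself is almost nothing: given a bipartite graph $B=(X\cup Y,E)$, I would build $G$ by adding one fresh vertex~$v$ and joining it to every vertex of $X\cup Y$, while leaving the edges of $B$ untouched. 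Then $G-v=B$, every vertex of $B$ is a neighbor of~$v$, and this transformation clearly lies in $\Class{AC}^0$.

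The heart of the argument is the claim that the maximum number of cycles through~$v$ that pairwise meet only in~$v$ equals the size $\nu(B)$ of a maximum matching of~$B$. For $\nu(B)$ as a lower bound, each matching edge $\{x,y\}\in E$ yields the triangle on $\{v,x,y\}$ (using the edges $\{v,x\}$, $\{x,y\}$, $\{y,v\}$), and distinct matching edges use disjoint endpoints, so a matching of size~$t$ produces $t$ cycles meeting only in~$v$. For the matching side as an upper bound, I would take any family of cycles through~$v$ that pairwise meet only in~$v$; deleting~$v$ from each cycle leaves a path inside~$B$ with at least one edge (as $G$ is simple, each cycle has length at least three), and picking one such edge per cycle gives pairwise vertex-disjoint edges, i.e.\ a matching whose size equals the number of cycles. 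The two inequalities together give equality; in particular one does not need to argue separately that triangles are optimal, since that already follows.

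Given the claim, the Flower Rule is applicable to~$v$ for the parameter~$k$ if, and only if, $\nu(B)>k$. Hence I would run the hypothetical flower-detection circuit on $(G,k,v)$ in parallel for every $k\in\{0,1,\dots,|V(B)|\}$; the number of positive answers equals $\nu(B)$ exactly. This uses only polynomially many copies of the circuit together with an $O(\log n)$-depth count on top, so a flower-detection family in $\Class{NC}^i$ would place the computation of maximum matching sizes of bipartite graphs in $\Class{NC}$. Since this is $\Class{NC}$-equivalent to computing a maximum matching in bipartite graphs (equivalently, to the bipartite perfect-matching decision problem, which one reaches directly by choosing $k=|X|-1$ on balanced instances), we obtain $\Lang{matching}\in\Class{NC}$, which is the desired conclusion.

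The routine parts are the reduction and the parallel search over~$k$. The one step I would check most carefully is the equality between the maximum number of cycles sharing only~$v$ and the matching number of~$B$, and in particular that no self-loops or multi-edges are created so that the ``cycle of length at least three'' argument is clean. The only genuinely external ingredient is the \emph{equivalence} between the size and the search versions of bipartite matching; as noted, this is exactly the open problem referenced throughout this section, so appealing to it (or reducing instead from perfect-matching decision) does not weaken the statement.
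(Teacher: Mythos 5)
Your proposal is correct and takes essentially the same route as the paper: add a universal vertex $v$, observe that each matching edge yields a triangle through $v$ and that each cycle in a family sharing only $v$ contributes one edge of the original graph disjoint from the others, so the Flower Rule applies to $v$ with parameter $k-1$ exactly when a size-$k$ matching exists. The extra machinery (parallel search over all $k$, the search-versus-decision discussion) is harmless but unnecessary since a single call decides the matching problem, and your restriction to bipartite inputs is immaterial because nothing in the argument uses bipartiteness.
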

\begin{proof}
  Assume we have access to an $\Class{NC}$-circuit that determines if
  the Flower Rule can be applied to some vertex $v$. We construct an
  $\Class{NC}$-circuit for \Lang{matching}. On input $G=(V,E)$ and
  $k\in\mathbb{N}$, the circuit constructs a new graph $G'$ by adding
  a vertex~$s$ to~$G$, which is connected to all vertices of~$V$. It
  then uses the $\Class{NC}$-circuit for the Flower Rule on~$s$ and
  the value $k-1$. We claim $G$ has a matching of size~$k$ if, and only
  if, the Flower Rule can be applied to~$s$. For the first direction
  observe that, if $s$ is in $k$ disjoint cycles, we may take one edge of
  every cycle and, hence, have a matching of size $k$. Now assume $G$
  contains a matching of size $k$, say $M=\{\, (m_1,m_2),
  (m_3,m_4),\dots,(m_{2k-1},m_{2k})\,\}$. Then $s$ is contained in $k$
  disjoint cycles, namely 
  $s - m_i - m_{i+1}-s$ for $i\in \{1,3,5,\dots,2k-1\}$.
\end{proof}

\section{Parallel Kernels for Structural Parameterizations}\label{section:structural}

It is known that 
$\Class{NP}$-hard graph parameters that are closed under taking
disjoint union do not allow a polynomial kernel unless
$\Class{NP}\subseteq\Class{coNP}/\mathrm{poly}$~\cite{Cygan:2015fr}. Famous problems that suffer from this result are the decision versions of tree width, path width, and tree
depth, which has led to a growing body of research that
considers structural parameters for these
problems~\cite{BodlaenderJK13, BodlaenderJK12, KobayashiT16}. A
commonly used parameter in this line of research is the vertex cover
number of the input graph 
and in this section we extend the
cited results by proving that the corresponding kernels can be
computed in small circuit classes. 

We use the following standard definitions: A \emph{tree decomposition}
of a graph $G=(V,E)$ is a tuple $(T,\iota)$ where $T$ is a tree and
$\iota$ a mapping from the nodes of~$T$ to subsets of~$V$ (which we
call \emph{bags}) such that for every $u\in V$ and every $\{v,w\}\in
E$ there is (1) a node $n$ with $u\in\iota(n)$, (2) a node $m$ with
$\{v,w\}\subseteq\iota(m)$, and (3) the set $\{\,n \mid u \in \iota(n)\,\}$ is connected in~$T$. The
\emph{width} of a tree decomposition is the maximum size of the bags
minus one. For a graph~$G$, its \emph{tree width $\tw(G)$} is the
minimum width of all tree decompositions of~$G$, its \emph{path width $\pw(G)$}
is the minimum width of all tree decompositions of~$G$ that are paths,
and its \emph{tree depth $\td(G)$} is the minimum width of all tree
decompositions $(T,\iota)$ of~$G$ that can be rooted in such a way
that for all $n,m\in V(T)$ we have $\iota(n)\subsetneq\iota(m)$ if $m$ is an descendant of $n$.
The following two facts will be useful, where $N(v) = \{\,u
\mid \{u,v\} \in E\,\}$ is the neighborhood of~$v$, $N[v] = N(v) \cup
\{v\}$, and where we call a vertex $v$ \emph{simplicial} if $N(v)$ is a clique:

\begin{fact}[\cite{BodlaenderJK13, BodlaenderJK12, KobayashiT16}]\label{fact:nighborrule}
  Let $G=(V,E)$ be a graph with tree width, path width, or tree depth
  at most $k$ and with $u,v\in V$, $\{u,v\}\not\in E$, and
  $|N(u)\cap N(v)|>k$. Then adding the edge $\{u,v\}$ to~$G$ will not increase
  the tree width, path width, or tree depth of~$G$, respectively.
\end{fact}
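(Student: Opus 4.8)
The plan is to reduce all three parameters to a single structural lemma about tree decompositions, namely: in \emph{any} tree decomposition $(T,\iota)$ of $G$ of width at most $k$, there must be a bag containing both $u$ and $v$. Once this is established, the claim follows at once for all three parameters, because adding the single edge $\{u,v\}$ to $G$ only affects condition~(2) in the definition of a tree decomposition, and a bag witnessing $\{u,v\}\subseteq\iota(m)$ already exists; conditions~(1) and~(3) and, crucially, the structural restrictions that distinguish path decompositions ($T$ is a path) and tree-depth decompositions (the nesting $\iota(n)\subsetneq\iota(m)$ along root-to-descendant pairs) refer only to the decomposition itself and not to the edge set of $G$, so the very same decomposition remains a valid witness for $G+\{u,v\}$ without any modification. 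Hence the width, and with it $\tw$, $\pw$, or $\td$, cannot increase.

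To prove the structural lemma, I would argue by contradiction and use the standard fact that each vertex $w$ induces a connected subtree $T_w=\{\,n\mid w\in\iota(n)\,\}$ of $T$. Suppose no bag contains both $u$ and $v$, so $T_u$ and $T_v$ are disjoint subtrees. Since $T$ is a tree, there is a unique path in $T$ joining them; let $p$ be its endpoint lying in $T_u$ and let $\{p,q\}$ be the first edge of this path, so that deleting $\{p,q\}$ separates $T$ into a component containing $T_u$ (the $p$-side) and a component containing $T_v$ (the $q$-side). Now consider any common neighbour $w\in N(u)\cap N(v)$: because $\{u,w\}\in E$ some bag contains both $u$ and $w$, so $T_w$ meets $T_u$ on the $p$-side, and symmetrically $T_w$ meets $T_v$ on the $q$-side. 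As $T_w$ is connected it must cross the edge $\{p,q\}$, which forces $p\in T_w$, i.e.\ $w\in\iota(p)$.

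Consequently every one of the more than $k$ common neighbours lies in the single bag $\iota(p)$, and none of them equals $u$ since $u\notin N(u)$. Together with $u\in\iota(p)$ itself, the bag $\iota(p)$ then contains at least $(k+1)+1=k+2$ distinct vertices, contradicting $|\iota(p)|\le k+1$, which is what width at most $k$ guarantees. This proves the lemma. The only point that requires genuine care is the uniform treatment of the three variants: I expect the main obstacle to be verifying that the \emph{structural} side conditions for path width and tree depth really are insensitive to the added edge, so that the unmodified decomposition can be reused; for tree width this is immediate, and for the other two it holds precisely because those conditions constrain only $T$ and $\iota$ and never the edges of $G$. Since the argument never alters $T$ or $\iota$, it applies verbatim to path decompositions and to rooted nested decompositions, yielding the statement for $\pw$ and $\td$ as well.
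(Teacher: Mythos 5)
The paper does not prove this statement at all---it is imported as a \emph{Fact} with citations to the literature---so there is no in-paper argument to compare against. Your proof is correct and is essentially the standard one from the cited works. The key lemma (if $u$ and $v$ share more than $k$ neighbours, then every width-$k$ tree decomposition has a bag containing both) is argued soundly: if the connected subtrees $T_u$ and $T_v$ were disjoint, the edge $\{p,q\}$ separating them would force every common neighbour's subtree to contain $p$, putting $u$ together with more than $k$ further vertices into $\iota(p)$ and exceeding the bag-size bound $k+1$. Your reduction of all three parameters to this single lemma is also the right move \emph{for the definitions used in this paper}: since path width and tree depth are defined here as widths of tree decompositions satisfying extra conditions that constrain only $(T,\iota)$ and never the edge set of $G$, the unmodified decomposition indeed remains a witness after the edge $\{u,v\}$ is added. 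One small caveat worth being aware of: this uniform treatment leans on the paper's (nonstandard) tree-decomposition-based definition of tree depth; under the usual elimination-forest definition one would either have to invoke the equivalence of the two definitions or give a separate (equally easy) argument. Within the framework of this paper, your proof is complete.
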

\begin{fact}[\cite{Bodlaender96}]\label{fact:simplicial}
  Let $G=(V,E)$ be a graph and $v\in V$ be a simplicial vertex, then
  we have $\tw(G)\geq|N(v)|$.
\end{fact}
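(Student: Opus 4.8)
The plan is to reduce the statement to the standard fact that the tree width of a graph is at least one less than the size of its largest clique. First I would observe that simpliciality of~$v$ turns its closed neighborhood into a clique: by definition $N(v)$ is a clique, and since $v$ is adjacent to every vertex of $N(v)$, the set $N[v]=N(v)\cup\{v\}$ is a clique of size $|N(v)|+1$. It therefore suffices to show that any graph containing a clique of size~$t$ has tree width at least $t-1$, which applied here with $t=|N(v)|+1$ gives exactly $\tw(G)\geq|N(v)|$.

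Next I would establish the clique-containment lemma: in any tree decomposition $(T,\iota)$ of~$G$, every clique $C\subseteq V$ is contained in a single bag. For each $u\in C$ set $T_u=\{\,n\mid u\in\iota(n)\,\}$. Condition~(3) of the tree-decomposition definition guarantees that $T_u$ induces a connected subtree of~$T$, and condition~(2) guarantees that for any two distinct $u,u'\in C$ the edge $\{u,u'\}$ forces some bag to contain both of them, so $T_u\cap T_{u'}\neq\emptyset$. Hence the subtrees $\{T_u\}_{u\in C}$ pairwise intersect.

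The key step is the Helly property for subtrees of a tree: a family of pairwise intersecting subtrees of a tree shares a common node. I would prove this by induction on $|C|$ (or simply cite it as folklore), where the crucial inductive ingredient is that three pairwise intersecting subtrees of a tree already share a node~--~this follows because the unique paths between their pairwise intersection points meet at a median vertex lying in all three subtrees. Granting the Helly property, we get $\bigcap_{u\in C}T_u\neq\emptyset$, so some node~$m$ satisfies $C\subseteq\iota(m)$.

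Finally I would assemble the pieces: applying the clique lemma to $C=N[v]$ yields a bag of size at least $|N(v)|+1$, so every tree decomposition of~$G$ has width at least $|N(v)|$; taking the minimum over all tree decompositions gives $\tw(G)\geq|N(v)|$. The main obstacle is the Helly property, which is the only genuinely nontrivial ingredient~--~everything else is a direct unwinding of the tree-decomposition axioms~(2) and~(3). Since the argument uses only $\tw$ together with $N(v)$ and $N[v]$ exactly as introduced in the excerpt, no further machinery is required.
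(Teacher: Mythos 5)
Your proof is correct. The paper does not prove this statement at all~--~it is imported as a known fact with a citation to Bodlaender's survey~--~so there is no in-paper argument to compare against; your route (observe that $N[v]$ is a clique of size $|N(v)|+1$, then invoke the clique-containment lemma via the Helly property for subtrees of a tree) is the standard textbook derivation, and every step, including the median-vertex argument for three pairwise intersecting subtrees, is sound.
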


\subparagraph*{Computing a Kernel for Tree Width.}

For the problem
$\PLang[vc]{tree-width}$ we are given a graph $G=(V,E)$, an integer $k$,
and a vertex cover $S\subseteq V$ of $G$; the parameter is $|S|$ and
the question is whether $\tw(G)\leq k$ holds.
\begin{theorem}\label{theorem:twvs}
  There is a \textsc{dlogtime}-uniform family of
  $\Class{TC^0}$-circuits that, on
  inputs of a triple $(G,k,S)$, outputs a $\PLang[vc]{tree-width}$
  kernel with at most $O(|S|^3)$ vertices.
\end{theorem}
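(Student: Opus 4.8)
The plan is to realise the known $O(|S|^3)$ kernelization for $\PLang[vc]{tree-width}$ \cite{BodlaenderJK13, BodlaenderJK12, KobayashiT16} by a bounded-depth circuit. That kernelization rests on three ingredients. First, a trivial rule: since a vertex cover is an upper bound on tree width, $\tw(G)\le|S|$, so if $k\ge|S|$ the circuit outputs a fixed yes-instance; from now on $k<|S|$. Second, the edge-addition rule justified by Fact~\ref{fact:nighborrule}: for non-adjacent $u,v\in S$ with $|N(u)\cap N(v)|>k$ we insert the edge $\{u,v\}$, which changes neither the answer nor the tree width (in the yes-case by the fact, in the no-case because adding edges cannot decrease tree width). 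Third, a marking rule: for each of the $\binom{|S|}{2}$ pairs $\{u,v\}\subseteq S$ we keep at most $k+1$ of their common neighbours inside the independent set $I=V\setminus S$ and delete every unmarked vertex of $I$. Because $k<|S|$, the retained graph has at most $|S|+\binom{|S|}{2}(k+1)=O(|S|^3)$ vertices.

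The key point for the circuit bound is that no \emph{iterated} application of these rules is needed: a single simultaneous round suffices. I would add all edges at once, reading common-neighbourhoods off the \emph{original} graph $G$. This is sound because common-neighbour counts are monotone under the insertion of edges inside $S$, so every edge flagged on $G$ still has more than $k$ common neighbours when it is actually inserted, and Fact~\ref{fact:nighborrule} applies to each. Crucially, this one shot already turns the neighbourhood of every deleted vertex into a clique: if $w\in I$ is unmarked, then every pair $\{u,v\}\subseteq N(w)$ has, besides $w$, at least $k+1$ further common neighbours in $I$, hence more than $k$ common neighbours in $G$, so the edge $\{u,v\}$ is inserted. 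Thus $N(w)$ is a clique in the augmented graph $G^+$, and $w$ can be reinserted into any width-$\le k$ tree decomposition of the kernel through a single fresh leaf bag $N(w)\cup\{w\}$ hung below a node whose bag contains the clique $N(w)$ -- provided $\deg(w)\le k$, and all deleted vertices can be reinserted in this way simultaneously since each touches only its own neighbourhood in $S$. The remaining danger, an unmarked vertex of degree larger than $k$, is caught by a separate no-check: such a vertex is simplicial in $G^+$, so by Fact~\ref{fact:simplicial} it forces $\tw(G^+)>k$ and the circuit outputs a fixed no-instance. Together these observations give $\tw(\text{kernel})\le k\iff\tw(G)\le k$.

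It then remains to see that one such round lives in $\Class{TC}^0$. Every decision is a count compared against $k$: whether $k\ge|S|$; whether a pair $u,v\in S$ has more than $k$ common neighbours (a threshold over the $\le n$ candidate vertices $w$, each test $w\in N(u)\cap N(v)$ being an $\Class{AC}^0$ adjacency lookup); the degree of each vertex; and whether $N(w)$ is a clique in $G^+$ (an \Lang{and} over the $\le n^2$ pairs in $N(w)$, each pair being adjacent in $G$ or flagged for edge addition). Selecting the $k+1$ representatives per pair is a ranking step: $w$ is kept for $\{u,v\}$ iff fewer than $k+1$ common neighbours of $\{u,v\}$ in $I$ have a smaller index, which is again a single threshold comparison. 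All of this is carried out in parallel over the $O(|S|^2)$ pairs and the $O(n)$ vertices at constant depth and polynomial size, and the family is \textsc{dlogtime}-uniform because its wiring is a fixed pattern indexed by pairs and triples of vertices. Finally the circuit emits the induced subgraph $G^+[S\cup M]$, where $M$ is the marked set, together with the unchanged value $k$.

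The main obstacle is not the circuit engineering but the combinatorial claim of the second paragraph: that replacing the usual ``apply the rules until nothing changes'' by a single simultaneous round still produces a correct kernel. Establishing this amounts to verifying that the only structural property the deletion step relies on -- each deleted vertex being simplicial in $G^+$ -- is already secured by the one-shot edge additions, which is exactly what the threshold $k+1$ in the marking rule buys. I also note that every threshold here only ever ``counts up to $k$'': once $k<|S|$ is enforced, each comparison is against a value bounded by the parameter, so in the regime $k\le\log^{\delta}n$ the same circuits can be simulated in $\Class{AC}^0$ by the polylogarithmic-threshold technique of Newman et al.~\cite{NewmanRW1990}, which is precisely the route to the $2^{\sqrt[\delta]{|S|}}$ bound recorded for this problem.
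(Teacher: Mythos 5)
Your proposal is correct and follows essentially the same route as the paper: a one-shot, simultaneous application of the edge-addition rule of Fact~\ref{fact:nighborrule} (justified by monotonicity of common-neighbour counts under edge insertion inside $S$), followed by the parallel removal of vertices that are simplicial in the augmented graph, with Fact~\ref{fact:simplicial} supplying the no-output for a high-degree simplicial vertex, and the same $O(|S|^2)\cdot O(k)$ count with $k<|S|$. The only difference is organisational: you delete the complement of an explicitly marked set of $k+1$ witnesses per pair and then prove that every unmarked vertex is simplicial in $G^+$, whereas the paper deletes the simplicial vertices directly and bounds the survivors by the same pairs argument.
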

\begin{proof}
  On input $(G,k,S)$ the circuit can easily check if $S$ is actually a
  vertex cover and if we have $k<|S|$. If not, it outputs a trivial no-instance in the first
  case and a yes-instance in the second case (a tree decomposition
  of width $|S|$ can easily be obtained from~$S$).

  The circuit now checks in parallel for every pair $u,v\in S$ with
  $\{u,v\}\not\in E$ if we have $|N(u)\cap N(v)\cap (V\setminus S)|>k$, that is, if
  the two vertices have more than $k$ common neighbors in
  $V\setminus S$. If this is the case, the circuit adds the edge
  $\{u,v\}$. Note that this operation is safe by
  Fact~\ref{fact:nighborrule} and can be applied in parallel as
  we consider only neighbors in $V\setminus S$ while we only add edges
  in $S$.
  Finally, the circuit considers all simplicial vertices $v\in
  V\setminus S$ in
  parallel: if $|N(v)|>k$, the circuit safely outputs a trivial no-instance by
  Fact~\ref{fact:simplicial}, otherwise the circuit safely removes $v$ from
  the graph by standard arguments~\cite{BodlaenderJK12}.

  We now argue that, if the circuit has not decided yet, the remaining
  graph has at most $O(|S|^3)$ vertices: the remaining graph consist
  of the vertices in $S$, and the nonsimplicial vertices $I\subseteq
  (V\setminus S)$. We have $|I|\leq |S|^3$ as any vertex $u\in I$ must
  have at least two neighbors $v,w$ in $S$ with $\{v,w\}\not\in E$ (as
  otherwise $u$ would be simplicial), however, every pair of
  nonadjacent vertices in $S$ can have at most $k$ common neighbors
  (as otherwise the circuit would have added the edge). Since we have
  at most $|S|^2$ such pairs, the claim follows by $k\leq|S|$.
\end{proof}
\begin{corollary}\label{corollary:twvsac}
  For every $\delta\in\mathbb{N}$ there is a \textsc{dlogtime}-uniform family of
  $\Class{AC^0}$-circuits that, on
  input of a triple $(G,k,S)$, outputs a $\PLang[vc]{tree-width}$
  kernel with at most $2^{\sqrt[\delta]{|S|}}$ vertices.
\end{corollary}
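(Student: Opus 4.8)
The plan is to apply exactly the same ``count only up to the parameter'' trick that was used in the proof of Lemma~\ref{lemma:vcexp}, now transferring the $\Class{TC}^0$-kernel of Theorem~\ref{theorem:twvs} into $\Class{AC}^0$. The key observation is that the threshold gates in the circuit of Theorem~\ref{theorem:twvs} only ever need to count up to the parameter $|S|$: the comparison $|N(u)\cap N(v)\cap(V\setminus S)|>k$ together with the check $k<|S|$ means we only care whether this count exceeds a value bounded by $|S|$, and likewise the simplicial-vertex test $|N(v)|>k$ with $k\le|S|$ only requires counting up to $|S|$. So first I would have the $\Class{AC}^0$-circuit check whether $|S|\le\log^\delta(n)$, where $n=|(G,k,S)|$; if not, then $2^{\sqrt[\delta]{|S|}}>n$ and the input is already a kernel of the desired size, so the circuit simply outputs its input.

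In the remaining case $|S|\le\log^\delta(n)$, I would invoke the fact~\cite{NewmanRW1990} that $\Class{AC}^0$-circuits can simulate threshold gates whose threshold is polylogarithmic in the input length. Since all thresholds in the $\Class{TC}^0$-circuit from Theorem~\ref{theorem:twvs} are bounded by $k\le|S|\le\log^\delta(n)$, each such threshold gate is polylogarithmically bounded and can be replaced by an equivalent constant-depth $\Class{AC}^0$-subcircuit. Substituting every threshold gate in this way turns the whole $\Class{TC}^0$-kernelization into a $\Class{AC}^0$-kernelization, which still outputs a kernel of size $O(|S|^3)\le 2^{\sqrt[\delta]{|S|}}$ whenever $|S|$ is large enough (and the cubic bound is dominated by the exponential bound for all sufficiently large $|S|$, while the finitely many small cases can be absorbed into the uniform construction).

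I do not expect a genuine obstacle here, since the statement is explicitly flagged in the paper as following ``the same proof argument as for Lemma~\ref{lemma:vcexp}.'' The only point requiring a little care is bookkeeping: one must confirm that \emph{every} threshold used in Theorem~\ref{theorem:twvs} is genuinely bounded by $|S|$ rather than by $n$ — this holds because the relevant comparisons involve $k$, and the circuit has already branched on $k<|S|$ versus $k\ge|S|$, so in the undecided branch $k\le|S|$. Given that, the hashing/simulation of~\cite{NewmanRW1990} applies uniformly, and the $\textsc{dlogtime}$-uniformity of the construction is inherited from that of the $\Class{TC}^0$-family together with the $\Class{AC}^0$-computability of the threshold simulation. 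Hence the corollary follows, and I would simply remark that it is obtained from Theorem~\ref{theorem:twvs} exactly as Lemma~\ref{lemma:vcexp} was obtained from the $\Class{TC}^0$-kernel of Elberfeld et al.
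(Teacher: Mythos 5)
Your proposal is correct and is exactly the argument the paper intends: the corollary is stated without proof precisely because, as announced after Lemma~\ref{lemma:vcexp}, the same ``count only up to the parameter'' trick applies, and you correctly verify the one substantive point, namely that every threshold in the circuit of Theorem~\ref{theorem:twvs} is bounded by $k<|S|\le\log^\delta(n)$ in the undecided branch, so the simulation of~\cite{NewmanRW1990} applies.
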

\begin{corollary}
  $\PLang[vc]{tree-width}\in\Para\Class{AC}^0$.
\end{corollary}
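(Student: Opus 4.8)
The plan is to obtain this membership result essentially for free from the kernelization just constructed, by invoking the general equivalence between parameterized circuit classes and kernel-computing circuit families. Concretely, I would combine Corollary~\ref{corollary:twvsac} with Theorem~\ref{theorem:kernel} specialised to $i=0$.

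First I would observe that Corollary~\ref{corollary:twvsac} already supplies exactly what Theorem~\ref{theorem:kernel} requires on the ``kernelization'' side: a \textsc{dlogtime}-uniform family of $\Class{AC}^0$-circuits that maps every input triple $(G,k,S)$ to an equivalent instance whose number of vertices~--~and hence whose encoding length~--~is bounded by $2^{\sqrt[\delta]{|S|}}$, i.e., by a fixed computable function of the parameter $|S|$. I would then briefly check that this map is a genuine kernelization in the formal sense. It is a self-reduction into the same problem, since the output is again a $\PLang[vc]{tree-width}$ instance (the reduced graph together with the unchanged $k$ and the set $S$, which remains a vertex cover because the reductions of Theorem~\ref{theorem:twvs} only delete vertices outside $S$); the equivalence of yes- and no-answers is guaranteed by the soundness of the edge-addition and simplicial-vertex rules established in Theorem~\ref{theorem:twvs}, which the $\Class{AC}^0$ simulation inherits; and the bound $2^{\sqrt[\delta]{|S|}}$ is the required $f(\kappa(x))$ bound.

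Second, since $\PLang[vc]{tree-width}$ is decidable~--~deciding $\tw(G)\le k$ is a decidable predicate~--~the hypotheses of Theorem~\ref{theorem:kernel} are met, and its $i=0$ case (treated explicitly in that theorem's proof, where the subcircuits are bounded by size rather than by depth) yields $\PLang[vc]{tree-width}\in\Para\Class{AC}^0$ directly.

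I do not expect any genuine obstacle: the statement is a one-line consequence of the two cited results. The only point deserving a sentence of care is to confirm that Theorem~\ref{theorem:kernel} is indeed available at $i=0$ for $\Class{AC}$-circuits (it is, being the final case of that theorem's proof) and that the parameter $\kappa=|S|$ fits the framework's requirement of being computable within the ambient class, so that asserting $\Para\Class{AC}^0$ membership is well-posed.
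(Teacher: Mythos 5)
Your proposal is correct and is exactly the paper's (implicit) derivation: the corollary is stated without proof precisely because it follows immediately from Corollary~\ref{corollary:twvsac} together with the kernelization-to-membership direction of Theorem~\ref{theorem:kernel}. One tiny inaccuracy that does not affect correctness: the ``bounded by size rather than by depth'' treatment of $i=0$ in that theorem's proof concerns the \emph{converse} direction (membership $\Rightarrow$ kernelization), whereas the direction you need (kernel circuit as a black box followed by a brute-force \Lang{or}-gate over all short elements of $Q$) works for $i=0$ without any modification.
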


\subparagraph*{Computing a Kernel for Path Width.}

We define the problem $\PLang[vc]{path-width}$ analogously
to $\PLang[vc]{tree-width}$ and the aim of this section is to reformulate
Theorem~\ref{theorem:twvs} in terms of path width. The main difference
is that we cannot simply delete simplicial vertices as this would,
for instance, eliminate trees completely. We can, however, use the
following weaker result:
\begin{fact}[\cite{BodlaenderJK12}]\label{fact:simplicialpw}
  Let $G=(V,E)$ be a graph, $k\in\mathbb{N}$, and $v\in V$ be a simplicial vertex. If the
  degree $|N(v)|$ of $v$ is $1$ and the neighbor of~$v$ has another degree-$1$
  neighbor, or if we have $2\leq|N(v)|\leq k$ and for each pair
  $x,y\in N(v)$ there is a simplicial vertex $w\in N(x)\cap N(y)$ with
  $w\not\in N[v]$, then $\pw(G)\leq k$ if, and only if, $\pw(G[V\setminus\{v\}])\leq k$.
\end{fact}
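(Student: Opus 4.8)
The plan is to prove the two directions separately, with essentially all of the work in the backward implication. The forward direction is pure monotonicity: $G[V\setminus\{v\}]$ is an induced subgraph of $G$, and path width cannot increase under taking subgraphs (restrict every bag of an optimal path decomposition of $G$ to $V\setminus\{v\}$), so $\pw(G)\le k$ immediately yields $\pw(G[V\setminus\{v\}])\le k$, using none of the stated hypotheses.

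For the converse I would start from a width-$\le k$ path decomposition $\mathcal P=(B_1,\dots,B_m)$ of $G'=G[V\setminus\{v\}]$ and reinsert $v$. The key structural tool is the clique-containment property: in any path decomposition the index sets of the vertices are intervals, and for the vertices of a clique these intervals are pairwise intersecting, so by the one-dimensional Helly property a single bag contains the whole clique. Since $v$ is simplicial, $N(v)$ is a clique in $G$, and deleting $v$ destroys no edge inside $N(v)$, so $N(v)$ remains a clique in $G'$; hence some bag $B_i$ of $\mathcal P$ contains all of $N(v)$. I would then splice a new bag $B_i'=N(v)\cup\{v\}$ into the sequence right after $B_i$. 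Because the hypotheses bound $|N(v)|$ by $k$ (it equals $1$ in the first case and lies in $\{2,\dots,k\}$ in the second), this bag has size at most $k+1$, so the width is preserved.

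The step I expect to be the real obstacle is verifying that the spliced sequence is still a path decomposition, that is, that the contiguity axiom survives. Each $u\in N(v)$ already lies in $B_i$, so inserting the adjacent bag $B_i'\ni u$ only lengthens $u$'s index interval by one contiguous position; the fresh vertex $v$ occupies the single bag $B_i'$; no other index set changes. Every former edge is still covered and each edge $\{u,v\}$ is covered by $B_i'$, giving $\pw(G)\le k$ and closing the biconditional.

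Finally I would explain the role of the remaining hypotheses, which the construction does not literally consume but which sharpen the statement at the boundary and for repeated application. The requirement that the neighbor of a degree-$1$ vertex have a \emph{second} degree-$1$ neighbor guarantees that $G'$ still contains an edge, so $\pw(G')\ge 1$; this rules out the degenerate failure at $k=0$, where deleting the unique pendant would leave an edgeless remainder and wrongly push $\pw$ below the threshold. The requirement that every pair $x,y\in N(v)$ share a simplicial witness $w\notin N[v]$ ensures that the relevant clique structure is still present after the deletion, which is what allows the rule to be applied exhaustively with a controlled kernel size in the kernelization of~\cite{BodlaenderJK12}.
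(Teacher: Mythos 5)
The paper states this as a cited fact from~\cite{BodlaenderJK12} and gives no proof of its own, so the only question is whether your argument stands on its own. It does not: the backward direction has a genuine gap, and the statement you actually prove there (``any simplicial vertex of degree at most $k$ can be reinserted without raising the path width above $k$'') is false. The flaw is the splicing step. In a \emph{tree} decomposition you may attach the bag $N(v)\cup\{v\}$ as a new leaf hanging off $B_i$, and no other vertex's set of bags is affected; that is the standard simplicial-vertex argument for tree width (cf.\ Fact~\ref{fact:simplicial}). In a \emph{path} decomposition you must insert the new bag into the linear order, and then any vertex $x$ with $x\in B_i\cap B_{i+1}$ but $x\notin N(v)\cup\{v\}$ has its index set split into two pieces separated by the new bag --- the interval (contiguity) axiom fails. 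Your claim that ``no other index set changes'' overlooks exactly this. Repairing it by inserting $B_i\cup\{v\}$ instead would cost a bag of size up to $k+2$. This is precisely why the extra hypotheses are needed and why the paper cannot simply reuse the tree-width rule for path width (as it notes before Fact~\ref{fact:simplicialpw}).

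A concrete counterexample to your unconditional claim: let $G$ be the spider with center $c$ and three legs $c\text-x_i\text-y_i$ ($i=1,2,3$), let $v=y_1$ and $k=1$. Then $v$ is simplicial with $|N(v)|=1\leq k$, and $G[V\setminus\{v\}]$ has path width $1$ (e.g.\ the decomposition $\{y_2,x_2\},\{x_2,c\},\{c,x_1\},\{c,x_3\},\{x_3,y_3\}$), but $\pw(G)=2$. Running your construction on this decomposition inserts $\{x_1,v\}$ after $\{c,x_1\}$ and leaves $c$ occurring in bags $2$, $3$ and $5$ --- not an interval. Note that here the hypothesis of the fact fails ($x_1$ has no second degree-$1$ neighbor), so the fact itself is not contradicted. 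This also shows that your closing interpretation of the hypotheses is off the mark: the degree-$1$ condition is not a degenerate $k=0$ safeguard, and the simplicial-witness condition is not merely about exhaustive application; both are load-bearing ingredients of the correctness proof (the witnesses force any width-$k$ decomposition of $G[V\setminus\{v\}]$ to contain a bag that can absorb $v$ without breaking contiguity), and a correct proof has to use them.
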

\begin{theorem}\label{theorem:pwvs}
  There is a \textsc{dlogtime}-uniform family of
  $\Class{TC^0}$-circuits that, on
  input of a triple $(G,k,S)$, outputs a $\PLang[vc]{path-width}$
  kernel with at most $O(|S|^3)$ vertices.
\end{theorem}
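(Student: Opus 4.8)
The plan is to follow the proof of Theorem~\ref{theorem:twvs} as closely as possible and to change only the treatment of simplicial vertices, which is the sole place where path width genuinely differs from tree width. First I would let the circuit check that $S$ is a vertex cover and compare $k$ with $|S|$: if $S$ fails, output a trivial no-instance, and if $k\geq|S|$, output a trivial yes-instance, since a graph with vertex cover $S$ has path width at most $|S|$ (arrange the bags $S\cup\{u\}$ for $u\in V\setminus S$ along a path). Next, exactly as for tree width, the circuit adds in parallel the edge $\{u,v\}$ for every non-adjacent pair $u,v\in S$ with $|N(u)\cap N(v)\cap(V\setminus S)|>k$; Fact~\ref{fact:nighborrule} is stated for path width as well, and the parallel application is safe for the same reason as before, because the decisive common neighbourhood lies in the independent set $V\setminus S$ and is therefore not affected by the edges we insert into $S$. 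All simpliciality tests below are performed in the resulting graph. If some simplicial $v\in V\setminus S$ has $|N(v)|>k$, the circuit outputs a trivial no-instance, which is correct since $\pw(G)\geq\tw(G)\geq|N(v)|$ by Fact~\ref{fact:simplicial}.

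The heart of the proof is the removal of simplicial vertices, where we may no longer discard all of them. Instead I would have the circuit compute a small \emph{keep-set} and delete only the simplicial vertices of $V\setminus S$ that lie outside it. For degree-one simplicial vertices the circuit keeps, for every $x\in S$, the lexicographically first degree-one neighbour of $x$. For simplicial vertices with $2\leq|N(v)|\leq k$ it keeps, for every edge $\{x,y\}$ of the current $G[S]$, the lexicographically first simplicial vertex $r(x,y)$ of $V\setminus S$ whose neighbourhood contains both $x$ and $y$. Isolated vertices are removed outright, which never changes the path width. The output is the induced subgraph on $S$ together with the kept vertices (carrying the added edges) and the unchanged parameter $k$. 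Each required test -- clique-checking a neighbourhood, computing a degree and comparing it with $k$, and taking a lexicographic minimum -- is first-order or uses threshold gates that count only up to $k$, so the construction stays in $\Class{TC^0}$; as in Lemma~\ref{lemma:vcexp}, counting only up to $k$ also yields the corresponding $\Class{AC^0}$ corollary.

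The main obstacle is proving that deleting all non-kept simplicial vertices \emph{simultaneously} is correct, since Fact~\ref{fact:simplicialpw} licenses only the removal of a single vertex and demands surviving witnesses. I would resolve this by exhibiting a sequential order of single deletions, each justified by Fact~\ref{fact:simplicialpw}, that yields the same graph. The key point is that every witness I invoke lies in the keep-set and is hence never deleted, and that, because $V\setminus S$ is independent, deleting vertices of $V\setminus S$ alters neither the neighbourhood nor the simpliciality of any surviving vertex, so a witness stays valid throughout the whole process. Concretely, if a degree-one vertex $v$ with neighbour $x$ is to be deleted, then $v$ is not the kept representative of $x$, so that representative survives as the second degree-one neighbour required by case~A. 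If a vertex $v$ with $2\leq|N(v)|\leq k$ is to be deleted, then $v\notin$ keep-set, so for every pair $\{x,y\}\subseteq N(v)$ the representative $r(x,y)$ is defined ($v$ itself witnesses non-emptiness) and differs from $v$; it is a simplicial vertex of $V\setminus S$, hence lies outside $N[v]$ and contains $x,y$, which is exactly the witness demanded by case~B. Each single deletion preserves ``$\pw\leq k$'', and therefore so does the simultaneous deletion.

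It remains to bound the kernel, which proceeds as in Theorem~\ref{theorem:twvs}. Every non-simplicial vertex of $V\setminus S$ is a common neighbour of some pair of non-adjacent vertices of $S$, and each such pair has at most $k$ common neighbours after the edge rule, giving at most $k\binom{|S|}{2}<|S|^3$ non-simplicial vertices. The kept degree-one vertices number at most $|S|$, one per $x\in S$. The kept vertices with $|N(v)|\geq2$ form the image of the map $\{x,y\}\mapsto r(x,y)$ from edges of $G[S]$ to vertices, so there are at most $\binom{|S|}{2}$ of them. Adding $|S|$ for the vertices of $S$ itself, the kernel has $O(|S|^3)$ vertices, as claimed.
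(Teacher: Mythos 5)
Your proof is correct and follows essentially the same route as the paper: reuse the tree-width circuit and replace the simplicial-vertex step by marking a small set of surviving witnesses (one degree-one representative per vertex of $S$, one simplicial representative per pair in $S$) so that Fact~\ref{fact:simplicialpw} can be applied to all unmarked simplicial vertices in parallel. Your choice of one representative per edge of $G[S]$ differs cosmetically from the paper's per-vertex marking (which excludes $N[v]$ and lets a vertex mark itself), and your explicit sequentialization argument for the simultaneous deletion is a welcome elaboration of a step the paper only asserts; both yield the same $O(|S|^2)$ bound on kept vertices and hence the $O(|S|^3)$ kernel.
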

\begin{proof}
  The circuit works as in Theorem~\ref{theorem:twvs} and differs only
  in the last step, that is, the handling of simplicial vertices. 
  We have to identify the vertices for which
  Fact~\ref{fact:simplicialpw} applies in constant parallel time,
  which is not trivial since we have dependencies between these
  vertices. The circuit \emph{marks} simplicial vertices to which
  Fact~\ref{fact:simplicialpw} does not apply or which we will use as
  conditions when applying the fact to other vertices as follows:
  The circuit first marks for every $v\in S$ the lexicographically
  smallest degree-$1$ 
  neighbor of $v$. Then for every simplicial vertex $v\in V\setminus
  S$ of degree at least~$2$, the circuit marks for every pair of
  neighbors $x,y$ of $v$ the lexicographically smallest simplicial
  vertex $w\in (N(x)\cap   N(y))\setminus N[v]$. If for any pair such
  a vertex does not exist,   $v$ marks itself.  
  Note that all simplicial vertices that are not marked can safely be
  removed by Fact~\ref{fact:simplicialpw} and since, furthermore, the
  safeness is witnessed by marked vertices, the circuit can remove
  them all in parallel.

  We are left with the task to show that there are at most $O(|S|^3)$
  marked vertices left (the other vertices can be counted as in
  Theorem~\ref{theorem:twvs}). We have at most $|S|$ marked vertices
  of degree~$1$ (one for each vertex in $S$), and at most $|S|^2$
  marked vertices of degree greater than~$1$: each such vertex~$v$ has a
  pair of neighbors in~$S$ that has $v$ as sole simplicial neighbor.
\end{proof}
\begin{corollary}
  For every $\delta\in\mathbb{N}$ there is a \textsc{dlogtime}-uniform family of $\Class{AC^0}$-circuits
  that, on input of a triple
  $(G,k,S)$, outputs a $\PLang[vc]{path-width}$
  kernel with at most $2^{\sqrt[\delta]{|S|}}$ vertices.
\end{corollary}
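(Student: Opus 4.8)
The plan is to mimic exactly the argument of Lemma~\ref{lemma:vcexp} and Corollary~\ref{corollary:twvsac}: I would take the $\Class{TC^0}$-circuit of Theorem~\ref{theorem:pwvs}, which produces a kernel with $O(|S|^3)$ vertices, and turn it into an $\Class{AC^0}$-circuit by exploiting that its threshold gates never need to count beyond the parameter. The only genuine work is to confirm that this ``count up to the parameter'' property really holds for the path-width circuit, and then to handle the regime in which the parameter is too large for the simulation to be worthwhile.

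First I would let $n$ be the size of the encoding of the input triple $(G,k,S)$ and have the circuit test whether $|S|\leq\log^{\delta}(n)$. If this test fails, then $2^{\sqrt[\delta]{|S|}}>n$, so the entire input already has at most $2^{\sqrt[\delta]{|S|}}$ vertices and the circuit simply outputs $(G,k,S)$ unchanged as the desired kernel. This dichotomy is identical to the one used in the proof of Lemma~\ref{lemma:vcexp}.

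In the remaining regime we have $|S|\in O(\log^{\delta} n)$, so every threshold appearing in the circuit of Theorem~\ref{theorem:pwvs} is polylogarithmic in~$n$. Since $\Class{AC^0}$-circuits can simulate polylogarithmic threshold gates by the hashing technique of Newman, Ragde, and Wigderson~\cite{NewmanRW1990}, the entire $\Class{TC^0}$-circuit can then be simulated in $\Class{AC^0}$, which completes the construction.

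The one point that needs checking---and the part I expect to be the main obstacle---is that all threshold gates in the path-width circuit really do count only up to $O(|S|)$. The tree-width portion (the common-neighbour edge-addition test $|N(u)\cap N(v)\cap(V\setminus S)|>k$ and the simplicial detection $|N(v)|>k$) inherits this property directly from Theorem~\ref{theorem:twvs}, because the circuit has already ensured $k\leq|S|$ and hence only needs to distinguish counts up to $k+1$. The genuinely new ingredient is the simplicial-vertex marking derived from Fact~\ref{fact:simplicialpw}: here I would verify that the degree tests ``$|N(v)|=1$'' and ``$2\leq|N(v)|\leq k$'', as well as the search for a witnessing simplicial vertex $w\in(N(x)\cap N(y))\setminus N[v]$, all involve counting only up to $k\leq|S|$, so that the marking step, too, uses only parameter-bounded thresholds.
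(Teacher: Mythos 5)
Your proposal is correct and matches the paper's intended argument exactly: the paper states this corollary without a separate proof, relying on the remark after Lemma~\ref{lemma:vcexp} that the same ``count up to the parameter'' trick applies, which is precisely the dichotomy on $|S|\leq\log^{\delta}n$ plus the polylogarithmic-threshold simulation you describe. Your extra verification that the marking step from Fact~\ref{fact:simplicialpw} only needs parameter-bounded thresholds is the right thing to check and is indeed the only point where the path-width case differs from Lemma~\ref{lemma:vcexp} and Corollary~\ref{corollary:twvsac}.
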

\begin{corollary}
$\PLang[vc]{path-width}\in\Para\Class{AC}^0$
\end{corollary}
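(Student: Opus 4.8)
The plan is to derive the statement as a direct consequence of Theorem~\ref{theorem:kernel}, used in its constant-depth ($i=0$) form. The only hypothesis of that theorem requiring separate verification is decidability, and this is immediate: given a triple $(G,k,S)$ one can decide $\pw(G)\le k$ by a finite combinatorial search over path decompositions of~$G$ (it suffices to consider decompositions with at most $|V|$ bags, each a subset of~$V$), so $\PLang[vc]{path-width}$ is certainly decidable.

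The essential ingredient is already supplied by the preceding corollary, which furnishes a \textsc{dlogtime}-uniform family of $\Class{AC^0}$-circuits that, on input $(G,k,S)$, outputs a $\PLang[vc]{path-width}$ kernel. For the membership claim the exact kernel size is irrelevant---Theorem~\ref{theorem:kernel} only asks for \emph{some} kernelization computable by constant-depth $\Class{AC}$-circuits---so the $2^{\sqrt[\delta]{|S|}}$ bound from that corollary is more than sufficient. Feeding this $\Class{AC^0}$-computable kernelization into the ``if'' direction of Theorem~\ref{theorem:kernel} for $i=0$ then yields $\PLang[vc]{path-width}\in\Para\Class{AC}^0$ at once, exactly mirroring the corresponding corollary for $\PLang[vc]{tree-width}$.

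There is essentially no obstacle here; the corollary is simply the repackaging of the preceding $\Class{AC^0}$-kernel result through Theorem~\ref{theorem:kernel}. The single point worth flagging is that one must invoke the theorem in its $i=0$ incarnation, where the cutoff parameter $\tilde k$ is chosen via the \emph{size} bound $f(\tilde k)\le n^c$ rather than a depth bound; this is precisely the case dispatched in the final line of the proof of Theorem~\ref{theorem:kernel}, and it carries over verbatim.
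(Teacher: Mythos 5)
Your argument is correct and is precisely the route the paper intends: the preceding corollary supplies a \textsc{dlogtime}-uniform $\Class{AC}^0$-computable kernelization, decidability is clear, and the ``if'' direction of Theorem~\ref{theorem:kernel} (in its $i=0$ form) immediately gives membership in $\Para\Class{AC}^0$, exactly as for the tree-width analogue.
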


\subparagraph*{Computing a Kernel for Tree Depth.}

The last problem we consider is tree depth, and, as for path width,
we prove a version of Theorem~\ref{theorem:twvs} for it. The main problem is once more that we cannot simply remove
simplicial vertices. However, by the following fact of Kobayashi and
Tamaki there are still enough simplicial vertices that are safe to
remove:   
\begin{fact}[\cite{KobayashiT16}]\label{fact:tdsimplicial}
  Let $G=(V,E)$ be a graph, $k\in\mathbb{N}$, and let $v\in V$ be a simplicial vertex
  with $1\leq|N(v)|\leq k$.  If every neighbor of $v$ has degree at
  least $k+1$, then we have $\td(G)\leq k$ if, and only if, $\td(G[V\setminus\{v\}])\leq k$.
\end{fact}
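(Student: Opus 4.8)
The plan is to argue through the standard elimination-forest characterization of tree depth: $\td(G)\le k$ holds exactly when there is a rooted forest $F$ on the vertex set $V$, of height at most $k$ (the height being the maximum number of vertices on a root-to-leaf path), such that every edge of $G$ joins some vertex to one of its ancestors in $F$. This is precisely the forest underlying the nested-bag decompositions used to define $\td$ -- the bag of a node being the node together with all of its ancestors -- so it suffices to reason about elimination forests, and I will phrase $\td(G)\le k$ as the existence of such a forest of height at most $k$.

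The direction from left to right is immediate and uses none of the hypotheses, so I would dispose of it first: tree depth is monotone under vertex deletion, since deleting $v$ from an elimination forest of $G$ and reattaching the children of $v$ to the parent of $v$ yields an elimination forest of $G[V\setminus\{v\}]$ of no greater height.

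For the converse I would fix an elimination forest $F$ of $G[V\setminus\{v\}]$ of height at most $k$ and reinsert $v$ without exceeding height $k$. Because $v$ is simplicial, $N(v)$ is a clique, and it remains a clique in $G[V\setminus\{v\}]$, as no edge inside $N(v)$ is incident to $v$; hence the vertices of $N(v)$ are pairwise comparable in $F$ and therefore all lie on a single root-to-leaf path (here I use $|N(v)|\ge 1$ to know this set is nonempty). Let $u\in N(v)$ be the deepest of them, at depth $d\le k$, so that every other vertex of $N(v)$ is an ancestor of $u$. The natural move is to hang $v$ as a new child of $u$: then every vertex of $N(v)$ is an ancestor of $v$, all new edges run from an ancestor to a descendant, and $F$ together with $v$ is an elimination forest of $G$ of height $\max(k,d+1)$.

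Everything therefore hinges on showing $d\le k-1$, and this is the step I expect to be the crux and the only place the degree hypothesis is needed. By assumption $\deg_G(u)\ge k+1$, so $u$ has at least $k$ neighbors in $G[V\setminus\{v\}]$; since $F$ is an elimination forest, each such neighbor is comparable to $u$, i.e.\ is an ancestor or a descendant. The number of ancestors of $u$ is $d-1$, so $u$ must have at least $k-(d-1)=k-d+1$ descendants in $F$. If $d$ were equal to $k$, then $u$ would sit at the maximum depth and hence be a leaf with no descendants, contradicting $k-d+1\ge 1$; therefore $d\le k-1$, and $v$ can be attached at depth $d+1\le k$, giving $\td(G)\le k$ and completing the equivalence. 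The main obstacle is exactly this slot argument: without the lower bound $\deg_G(u)\ge k+1$ the deepest neighbor $u$ could be forced to the very bottom of $F$, leaving no room below it for $v$; the hypothesis $|N(v)|\le k$ ensures that $N(v)$ fits on a path inside a height-$k$ forest, while $\deg_G(u)\ge k+1$ is what guarantees the crucial empty position beneath $u$.
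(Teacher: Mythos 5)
Your proof is correct, and it is worth noting that the paper itself offers no argument for this statement at all: it is imported as a black-box Fact from Kobayashi and Tamaki \cite{KobayashiT16}, so your elimination-forest argument is a genuine, self-contained proof of the cited result rather than a variant of anything in the paper. The argument is sound in every step: the left-to-right direction is indeed just monotonicity of tree depth under vertex deletion; in the converse, simpliciality correctly forces $N(v)$ onto a single root-to-leaf path, and the slot argument -- the deepest neighbor $u$ has at least $k$ neighbors in $G[V\setminus\{v\}]$, all comparable to $u$, of which at most $d-1$ are ancestors, so $u$ has a descendant and hence $d\le k-1$ -- is exactly the place where the degree hypothesis does its work. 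One small caveat: you assert that your height-$\le k$ elimination-forest characterization is ``precisely'' the paper's nested-bag definition, but the paper defines $\td$ as the minimum \emph{width} of a nested decomposition, which is one less than the elimination-forest height (under the paper's convention a clique $K_m$ has tree depth $m-1$, under yours it has $m$). Your convention is the standard one and the one under which the Kobayashi--Tamaki fact is true as stated -- indeed, under the paper's width convention the statement fails for $P_4$ with $k=1$ -- so this is an off-by-one quirk of the paper's definition rather than a flaw in your proof, but you should not claim the two formulations coincide.
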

\begin{theorem}\label{theorem:tdvs}
  There is a \textsc{dlogtime}-uniform family of
  $\Class{TC^0}$-circuits that, on
  input of a triple $(G,k,S)$, outputs a $\PLang[vc]{tree-depth}$
  kernel with at most $O(|S|^3)$ vertices.
\end{theorem}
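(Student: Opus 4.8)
The plan is to follow the proof of Theorem~\ref{theorem:twvs} verbatim up to the point where simplicial vertices are processed and to replace only that last step, exactly as the proof of Theorem~\ref{theorem:pwvs} did for path width. Concretely, the circuit first checks whether $S$ is a vertex cover, outputting a trivial no-instance otherwise, and whether $k\ge|S|$, in which case it outputs a trivial yes-instance: as in the tree-width case, a graph with vertex cover $S$ satisfies $\td(G)\le|S|$ (stack the vertices of $S$ into a chain of nested bags and hang every vertex of $V\setminus S$ as its own leaf below the chain), so we may assume $k<|S|$. Next the circuit applies the common-neighbour rule of Fact~\ref{fact:nighborrule} in parallel, adding the edge $\{u,v\}$ for every non-adjacent pair $u,v\in S$ with more than $k$ common neighbours in $V\setminus S$; this is safe and parallelisable for the same reason as in Theorem~\ref{theorem:twvs}. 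If afterwards some simplicial vertex $v\in V\setminus S$ has $|N(v)|>k$, the circuit outputs a trivial no-instance, since Fact~\ref{fact:simplicial} gives $\tw(G)\ge|N(v)|>k$ and, the nested tree decompositions defining tree depth being a subclass of all tree decompositions, $\td(G)\ge\tw(G)>k$.

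It remains to remove the simplicial vertices with $1\le|N(v)|\le k$, for which I would invoke Fact~\ref{fact:tdsimplicial}. The main obstacle is that this fact cannot be applied to all eligible vertices simultaneously: its hypothesis that every neighbour of $v$ has degree at least $k+1$ is not monotone under vertex deletion, so removing many simplicial vertices at once may drop the degree of a shared neighbour $s\in S$ below $k+1$ and thereby invalidate the removal of the remaining ones. The key to a correct parallel reduction is the following robustness observation: if $R$ is a set of simplicial vertices (each with $1\le|N(\cdot)|\le k$) such that every $s\in S$ adjacent to some vertex of $R$ still has degree at least $k+1$ in $G-R$, then $R$ can be peeled off in any order while the hypothesis of Fact~\ref{fact:tdsimplicial} holds at every step. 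This is because the neighbourhoods of these simplicial vertices lie entirely in $S$ (as $V\setminus S$ is independent), so deleting them keeps each remaining vertex simplicial and lets the intermediate degrees stay above their final value in $G-R$. Hence the whole set $R$ may be deleted in parallel, safely.

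To select such an $R$ in constant depth I would use a marking scheme in the spirit of Theorem~\ref{theorem:pwvs}: for each $s\in S$ the circuit marks the lexicographically smallest $\min(k+1,\deg(s))$ simplicial neighbours of $s$, and then removes exactly those simplicial vertices with $1\le|N(v)|\le k$ that are marked by none of their neighbours (isolated vertices of $V\setminus S$ are removed outright, as they do not affect whether $\td(G)\le k$ for $k\ge 1$). If a vertex $s\in S$ is adjacent to a removed vertex, then $s$ had more than $k+1$ simplicial neighbours and therefore retains at least $k+1$ of them, so $\deg_{G-R}(s)\ge k+1$ and the robustness condition above is met. For the kernel bound, each $s\in S$ marks at most $k+1\le|S|$ simplicial vertices, so at most $|S|^2$ simplicial vertices survive; together with the $|S|$ cover vertices and the at most $|S|^2\cdot k\le|S|^3$ surviving non-simplicial vertices (counted exactly as in Theorem~\ref{theorem:twvs}), the kernel has $O(|S|^3)$ vertices.

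Finally, every test used in the construction — sizes of neighbourhoods, numbers of common neighbours, degree thresholds at $k+1$, and the ``lexicographically smallest $k+1$'' selection — only compares counts against bounds of order $k+1\le|S|$. Hence the whole construction is realisable by \textsc{dlogtime}-uniform $\Class{TC^0}$-circuits, which is exactly what the statement requires; and since the relevant counts never need to exceed the parameter, the same argument as in Lemma~\ref{lemma:vcexp} will additionally yield the corresponding $\Class{AC^0}$ corollary with a $2^{\sqrt[\delta]{|S|}}$ kernel.
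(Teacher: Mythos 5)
Your proof is correct and follows essentially the same route as the paper: reuse the circuit of Theorem~\ref{theorem:twvs} and replace only the simplicial-vertex step by a constant-depth marking scheme that lets Fact~\ref{fact:tdsimplicial} be applied to all unmarked simplicial vertices simultaneously, with the same $O(|S|^3)$ count. Your variant (marking the $\min(k+1,\deg(s))$ lexicographically smallest \emph{simplicial} neighbours of each $s\in S$, which subsumes the paper's separate rule for simplicial vertices with a low-degree neighbour) and your explicit peeling argument for why the parallel removal is safe are welcome refinements, but not a different approach.
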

\begin{proof}
  We proceed again as in Theorem~\ref{theorem:twvs} and only differ in
  the way we handle simplicial vertices. In particular, we argue
  how we can apply Fact~\ref{fact:tdsimplicial} in parallel constant
  time. The circuit starts by marking for every vertex $v\in S$ with
  $|N(v)|>k$ the $k+1$ lexicographically smallest neighbors of
  $v$, then the circuit marks every simplicial vertex $v\in V\setminus
  S$ that has at least one neighbor of degree less than $k$.
  Note that every simplicial vertex that is not marked can safely be
  removed by Fact~\ref{fact:tdsimplicial} and, since this safeness is
  witnessed by marked vertices, these vertices can be removed in
  parallel.

  The amount of remaining vertices can be computed as in
  Theorem~\ref{theorem:twvs}, we will end the proof by counting the
  number of marked vertices. There are at most $|S|^2+|S|$ marked
  vertices that were marked in the first step, as every vertex in $S$
  marks only $k+1$ neighbors. Additionally we may have some simplicial
  vertices that are marked because they have a neighbor of degree at
  most $k$. Since every degree $k$ vertex in $S$ can produce at most
  $k$ such vertices, the number of these vertices can be bounded by
  $|S|^2$ as well.
\end{proof}
\begin{corollary}
  For every $\delta\in\mathbb{N}$ there is a \textsc{dlogtime}-uniform family of
  $\Class{AC^0}$-circuits that, on
  inputs of a triple $(G,k,S)$, outputs a $\PLang[vc]{tree-depth}$
  kernel with at most $2^{\sqrt[\delta]{|S|}}$ vertices.
\end{corollary}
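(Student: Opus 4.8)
The plan is to follow exactly the same strategy as in the proof of Lemma~\ref{lemma:vcexp}, now applied to the $\Class{TC^0}$-kernelization from Theorem~\ref{theorem:tdvs}. Let $I=(G,k,S)$ be the input and let $n=|I|$. First I would have the circuit test whether $|S|\leq\log^\delta n$; deciding this only requires counting the elements of~$S$ up to $\log^\delta n+1$, which is polylogarithmic and hence directly available in $\Class{AC^0}$. If the test \emph{fails}, then $\sqrt[\delta]{|S|}>\log n$ and therefore $2^{\sqrt[\delta]{|S|}}>n$, so the input already has fewer than $2^{\sqrt[\delta]{|S|}}$ vertices and the circuit can output~$I$ unchanged as the desired kernel. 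It thus remains to handle the case $|S|\leq\log^\delta n$.

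In this remaining case the idea is to simulate the $\Class{TC^0}$-circuit of Theorem~\ref{theorem:tdvs} by an $\Class{AC^0}$-circuit. By~\cite{NewmanRW1990}, $\Class{AC^0}$-circuits can simulate threshold gates whose threshold value is polylogarithmic in~$n$, so it suffices to argue that every threshold gate in the circuit of Theorem~\ref{theorem:tdvs} only ever needs to ``count up to the parameter''~$|S|$ (or a fixed polynomial thereof), which under the assumption $|S|\leq\log^\delta n$ is polylogarithmic in~$n$.

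The hard part will be to verify this bound step by step, so I would go through the operations of Theorem~\ref{theorem:tdvs} individually. Checking that $S$ is a vertex cover is a conjunction of local edge tests and needs no large threshold; the comparison $k<|S|$ only counts up to~$|S|$. The common-neighbor test $|N(u)\cap N(v)\cap(V\setminus S)|>k$ is thresholded at $k+1\leq|S|+1$. For the simplicial-vertex handling, note that every $v\in V\setminus S$ has all its neighbors in~$S$ (as $S$ is a vertex cover) and hence $|N(v)|\leq|S|$; moreover, the degree conditions ``$1\leq|N(v)|\leq k$'' and ``every neighbor of $v$ has degree at least $k+1$'' are decided by thresholding at $k+1\leq|S|+1$, so even the possibly large degrees of vertices in~$S$ need only be counted up to $|S|+1$. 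Marking the $k+1$ lexicographically smallest neighbors of a vertex likewise only requires, for each candidate neighbor~$w$, counting the neighbors smaller than~$w$ up to the threshold $k+1$. The resulting kernel bound of $O(|S|^3)$ vertices is an analytic consequence of these rules and is not itself computed by any threshold gate.

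Once each gate is seen to count at most to a fixed polynomial in $|S|\leq\log^\delta n$, and hence to $\mathrm{polylog}(n)$, the simulation of the entire circuit in $\Class{AC^0}$ follows from~\cite{NewmanRW1990}, completing the construction. The only point requiring genuine care is the marking of the lexicographically smallest neighbors, where one must phrase the selection as a bounded-rank test so that no counter ever exceeds the parameter; everything else reduces immediately to the observation, already exploited in Lemma~\ref{lemma:vcexp}, that the $\Class{TC^0}$ kernelization counts only up to the parameter.
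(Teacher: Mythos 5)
Your proposal is correct and follows exactly the route the paper intends: the corollary is obtained by the same trick as Lemma~\ref{lemma:vcexp}, namely outputting the input unchanged when $|S|>\log^{\delta}n$ and otherwise simulating the $\Class{TC}^0$-circuit of Theorem~\ref{theorem:tdvs} in $\Class{AC}^0$ because all its threshold gates count only up to $k+1\leq|S|+1$. Your gate-by-gate verification (including the bounded-rank phrasing of the lexicographically-smallest-neighbor marking) is more explicit than the paper, which leaves these checks implicit, but it is the same argument.
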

\begin{corollary}
$\PLang[vc]{tree-depth}\in\Para\Class{AC}^0$
\end{corollary}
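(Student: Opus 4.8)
The plan is to obtain membership in $\Para\Class{AC}^0$ as an immediate consequence of the kernelization already constructed, using the machine-to-kernel correspondence of Theorem~\ref{theorem:kernel}. First I would recall that the preceding corollary supplies, for every fixed $\delta$, a \textsc{dlogtime}-uniform family of $\Class{AC}^0$-circuits which, on input $(G,k,S)$, outputs an equivalent $\PLang[vc]{tree-depth}$ instance with at most $2^{\sqrt[\delta]{|S|}}$ vertices. Since the parameter of this problem is $|S|$ and the bound $2^{\sqrt[\delta]{|S|}}$ depends only on $|S|$, this family is precisely a kernelization in the sense used in the paper: a self-reduction whose output size is bounded by a computable function $f$ of the parameter.

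The key step is then to invoke the $i=0$ case of Theorem~\ref{theorem:kernel}, namely that a decidable parameterized problem lies in $\Para\Class{AC}^0$ if, and only if, it admits a kernelization computable by an $\Class{AC}^0$-circuit family. The problem $\PLang[vc]{tree-depth}$ is decidable (indeed fixed-parameter tractable), so the ``if'' direction of the theorem applies to the $\Class{AC}^0$ kernelization from the preceding corollary and yields $\PLang[vc]{tree-depth}\in\Para\Class{AC}^0$.

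I expect no genuine obstacle here, as the statement is a direct corollary; the only points worth verifying are bookkeeping. One must use the $\Class{AC}^0$ version of the kernelization rather than the $\Class{TC}^0$ kernel of Theorem~\ref{theorem:tdvs} (which would only place the problem in $\Para\Class{TC}^0$), and one must confirm that the size bound really has the shape $f(\kappa(\cdot))$ demanded by the definition of a kernelization. Both are immediate: the preceding corollary has already removed the threshold gates via the ``count only up to the parameter'' trick established in the proof of Lemma~\ref{lemma:vcexp}, and $2^{\sqrt[\delta]{|S|}}$ is manifestly a function of the parameter~$|S|$ alone.
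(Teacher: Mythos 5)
Your proposal is correct and matches the argument the paper intends (the corollary is stated without an explicit proof, but the same pattern is used for the analogous tree-width and path-width corollaries): apply the ``if'' direction of Theorem~\ref{theorem:kernel} with $i=0$ to the \textsc{dlogtime}-uniform $\Class{AC}^0$ kernelization of size $2^{\sqrt[\delta]{|S|}}$ from the preceding corollary. Your bookkeeping remarks --- using the $\Class{AC}^0$ rather than the $\Class{TC}^0$ kernel, and checking that the size bound is a computable function of the parameter $|S|$ --- are exactly the right points to verify.
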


\section{A Parallel Kernel for Point Line Cover}\label{section:tc0}

In this section we study a natural, well-known
problem for which we can prove (unconditionally) that we \emph{cannot}
compute a kernel using $\Class{AC}^0$-circuits while we \emph{can}
compute polynomially-sized kernels in $\Class{TC}^0$. 
In the $\PLang{point-line-cover}$ problem we are given distinct points
$p_1,\dots,p_n\in\mathbb{Z}^d$ for some dimension $d \ge 2$ and a
natural number $k\in\mathbb{N}$, the question is whether we can cover
all points by at most $k$ lines. This problem is $\Class{NP}$-hard in
general (even for $d=2$) and in $\Class{FPT}$ parameterized by
$k$~\cite{KratschPR16}.  There is a simple $k^2$ kernel, which is
essentially optimal~\cite{KratschPR16}: If any line covers at least
$k+1$ points, remove all points on this line and reduce $k$ by
one. This is safe since we would require at least $k+1$ different
lines if we would not use this line. Because no set of $k+1$ points lies
on the same line after the reduction, we have at most $k^2$ points
left or we deal with a no-instance.
\begin{lemma}\label{lemma:plc}
  There is a \textsc{dlogtime}-uniform family of
  $\Class{TC^0}$-circuits that, on
  input of a dimension~$d$, a set of distinct points
  $p_1,\dots,p_n\in\mathbb{Z}^d$, and an integer $k$, outputs a
  $\PLang{point-line-cover}$ kernel with at most $k^2$ points.
\end{lemma}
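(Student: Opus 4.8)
The plan is to implement the simple $k^2$ kernel described just before the lemma statement, where the single reduction rule is: if some line covers at least $k+1$ points, remove all those points and decrease $k$ by one. The main structural observation I would exploit is that this rule, although phrased sequentially, can actually be applied \emph{in one parallel shot} rather than iteratively, because the lines involved are essentially independent. First I would have the circuit enumerate, in parallel, all $\binom{n}{2}$ pairs of distinct points and, for each pair, compute the line through them. Deciding whether a third point $p_\ell$ lies on the line through $p_i$ and $p_j$ is a collinearity test: in $\mathbb{Z}^d$ this amounts to checking that the vectors $p_j - p_i$ and $p_\ell - p_i$ are parallel, i.e. that all $2\times 2$ minors of the matrix formed by these two vectors vanish. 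Each such minor is a difference of two products of integer coordinates, so it is computable in $\Class{TC}^0$ (iterated addition, multiplication, and comparison of integers are all in $\Class{TC}^0$), and an \Lang{and} over the $O(d^2)$ minors decides collinearity.

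Next I would, for every candidate line (represented canonically by its lexicographically smallest defining pair of points), use threshold gates to \emph{count} how many input points lie on it. A line is declared ``heavy'' exactly when this count is at least $k+1$. Here I must be slightly careful that the same geometric line is not processed twice under two different generating pairs; I would canonicalize by saying a point $p_\ell$ is ``killed'' if there exists \emph{any} pair $p_i, p_j$ spanning a heavy line with $p_\ell$ on it. Since all of this is a fixed constant number of \Lang{and}/\Lang{or}/threshold layers over the pairwise collinearity bits, it stays in $\Class{TC}^0$; crucially, all heavy lines are detected and all their points removed simultaneously, so no sequential ``apply the rule again'' loop is needed. The circuit then counts the number of surviving points. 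If this count exceeds $k^2$ (after having subtracted the number of heavy lines from $k$), the instance is a no-instance and the circuit outputs a fixed trivial no-instance; otherwise it outputs the surviving points together with the appropriately decremented~$k$.

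For correctness I would argue two things. Safeness of a single application is exactly the argument already given in the text: any line covering $k+1$ points must be used (any other covering would need $\ge k+1$ lines for those points alone), so removing it and decrementing $k$ preserves the answer. For the one-shot parallel application I would observe that distinct heavy lines can share at most one common point, so the removals do not interfere with the counting in a way that changes which lines are heavy relative to the \emph{original} instance; it then suffices to check that applying all original heavy lines at once yields the same result as applying them one by one. After removal, no line covers more than $k'$ of the remaining points (where $k'$ is the reduced parameter), and a standard double-counting then bounds the surviving points by $k'^2 \le k^2$, matching the claimed size.

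The step I expect to be the genuine obstacle is the canonicalization and the mutual-interference issue in the parallel application of the rule. Applied sequentially, removing one heavy line can lower the point count on another line below the threshold $k+1$ or can itself consume part of the parameter budget; I must confirm that detecting heaviness against the \emph{original} point set, and decrementing $k$ by the total number of heavy lines, is equivalent to the sequential process. The saving grace is that two distinct lines intersect in at most one point, so a point lying on two heavy lines is a rare, controllable event, and the final $k^2$ bound is robust to this. The remaining arithmetic — iterated integer multiplication for the minors and threshold counting up to $n$ — is standard $\Class{TC}^0$ machinery, so I would assert it rather than grind through it; note also that all thresholds here ``count only up to the parameter'' for the heaviness test (the final count up to $n$ being the one genuinely large threshold), which is precisely the structure that will later let us descend to $\Class{AC}^0$ via the hashing trick of Lemma~\ref{lemma:vcexp}.
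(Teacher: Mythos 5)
Your proof is correct and follows essentially the same route as the paper's: apply the reduction rule in a single parallel round, justified by the observation that two distinct lines share at most one point (the paper phrases this as ``removing all points from a line removes at most one point from any other line''), with collinearity decided by $\Class{TC}^0$ integer arithmetic over all pairs of points. The interference worry you flag resolves exactly as you suspect: an originally heavy line loses at most one point per previously removed heavy line while the threshold $k+1$ also drops by one, so removing all originally heavy lines at once simulates a valid sequential order, and afterwards every line covers at most $k$ surviving points, which yields the $k^2$ bound without further iteration. One genuine error, though only in a side remark: your closing claim that the thresholds ``count only up to the parameter'' and hence the construction will descend to $\Class{AC}^0$ via the hashing trick of Lemma~\ref{lemma:vcexp} is false~--~the paper proves unconditionally (Lemma~\ref{lemma:plctcc} and its corollary) that \emph{no} $\Class{AC}^0$ family computes a kernel of any size for $\PLang{point-line-cover}$, because the obstruction is not the counting but the collinearity test itself, whose arithmetic on $n$-bit coordinates is $\Class{TC}^0$-hard (by reduction from \Lang{division}).
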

\begin{proof}
  First observe that the reduction rule ``for a line covering at least 
  $k+1$ points, remove all points on this line and reduce $k$ by~$1$''
  can be applied in 
  parallel, as removing all points from a line removes at most one
  point from any other line.  To complete the proof, note that it is
  sufficient to check all $n^2$ line segments defined by pairs of points in parallel; and that
  a $\Class{TC}^0$-circuit can check if another point lies on such a
  line segment as it can multiply and divide binary
  numbers~\cite{Hesse01}.
\end{proof}
The lemma shows that the optimal kernel for $\PLang{point-line-cover}$
can be computed in $\Class{TC}^0$ and it is natural to ask if we can
do the same using a $\Class{AC}^0$-circuit or, failing that, to at
least compute \emph{some} kernel using  a $\Class{AC}^0$-circuit (as
we could for the problems in the previous sections).  We answer this
question in the negative, settling the complexity of the problem to
$\Para\Class{TC}^0$:
\begin{lemma}\label{lemma:plctcc}
  For every fixed~$k$, the $k$th slice of $\PLang{point-line-cover}$
  is $\Class{TC}^0$-complete under $\Class{AC}^0$-reduction.
\end{lemma}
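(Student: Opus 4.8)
The plan is to establish both directions of the completeness claim: membership of the $k$th slice in $\Class{TC}^0$, and $\Class{TC}^0$-hardness under $\Class{AC}^0$-reductions. For \textbf{membership}, I would exploit that for fixed~$k$ the bound $k^2$ is a \emph{constant}. By Lemma~\ref{lemma:plc} the $\Class{TC}^0$-kernelization produces, in $\Class{TC}^0$, an equivalent instance $(P',k')$ with $|P'|\le k^2=O(1)$ and $k'\le k$. A constant-size instance can then be decided by brute force: any line worth using either passes through two of the $O(1)$ kernel points or covers a single point, so there are only $O(1)$ candidate lines; I enumerate all $O(1)$ ways to select $k'$ of them and test coverage by a big \Lang{or} of \Lang{and}s of incidence tests. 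Each incidence test amounts to comparing cross products, hence is a $\Class{TC}^0$ operation since $\Class{TC}^0$ can multiply and divide binary numbers~\cite{Hesse01}. This places the slice in $\Class{TC}^0$.

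For \textbf{hardness}, I would reduce from a $\Class{TC}^0$-complete problem whose difficulty is arithmetic, namely integer multiplication (equivalently, majority via the standard ``spaced digits'' encoding of a count into a product). The crucial observation is that three points are collinear exactly when a product equation $p\cdot q'=p'\cdot q$ holds, so a single incidence among a constant number of points encodes a product comparison, and whether one line covers a designated test point records the outcome of a multiplication. I would therefore place a constant-size \emph{test gadget} whose coordinates are either fixed constants or copied verbatim from the input bits -- an $\Class{AC}^0$ step that \emph{defers} the actual multiplication to the covering decision -- arranged so that the gadget is coverable by one line iff the encoded relation holds, and otherwise needs two. To hit the $k$th slice precisely, I would add a rigid padding gadget of points in general position that forces exactly $k-1$ further lines, separated from the test gadget by a large coordinate offset (or in a disjoint coordinate region of dimension $d\ge 2$) so that no single line can serve both. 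The whole instance is then coverable by $k$ lines iff the test gadget is collinear, i.e.\ iff the source instance is a yes-instance; since all coordinates are $\Class{AC}^0$-computable and the only genuinely hard computation (the counting hidden in the multiplication) is performed by the point-line-cover decision itself, the reduction runs in $\Class{AC}^0$.

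The \textbf{main obstacle} is the hardness direction, and specifically guaranteeing that the covering number is governed \emph{solely} by the single encoded relation. I must verify that the padding truly forces $k-1$ lines and cannot be compressed, that no line simultaneously covers padding and test points, and that the configuration is in sufficiently general position that an adversarial cover cannot save a line through an unintended collinearity -- all while keeping coordinates $\Class{AC}^0$-computable and checking that the standard reduction of counting to a product maps cleanly onto collinearity. Large offsets in dimension $d\ge 2$ together with input-independent padding coordinates should neutralize the interference between gadgets, but certifying general position and the exact line count is where the real work lies.
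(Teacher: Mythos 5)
Your proposal is correct and matches the paper's proof in its essential idea: both establish hardness by encoding a $\Class{TC}^0$-complete arithmetic verification problem (the paper uses division, $x/y=z$; you use multiplication) as a collinearity test on three points whose coordinates are either constants or copied verbatim from the input, so that the ``hard'' arithmetic is deferred to the covering decision itself. The two minor divergences -- you obtain membership by composing the $\Class{TC}^0$ kernelization of Lemma~\ref{lemma:plc} with constant-size brute force, where the paper directly enumerates the $\binom{n^2}{k}\leq n^{2k}$ candidate line sets; and you explicitly build a padding gadget to lift hardness from the $1$st slice to the $k$th, where the paper merely remarks that larger $k$ and $d$ are ``generalizations'' -- do not change the substance, though your attention to verifying the padding and general-position conditions actually addresses a step the paper leaves implicit.
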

\begin{proof}
  We start with the case $k=1$ and $d=2$, which is clearly in
  $\Class{TC}^0$, as an instance is a yes-instance if, and only if,
  the input points are colinear. To see that the problem is
  $\Class{TC}^0$-hard we reduce 
  from $\Lang{division}$ defined as: Given three numbers $x$, $y$,
  and~$z$, is it true that $x/y = z$? This is a classical 
  $\Class{TC}^0$-complete problem~\cite{Hesse01}. For the reduction
  let $x$, $y$, $z$ be the $\Lang{division}$-instance, we construct
  the instance $a=(0,0)$, $b=(x,z)$, $c=(y,1)$ of
  $1\Lang{-point-line-cover}$. This is a yes-instance if the points
  are colinear, that is, if we have $(b-a)\cdot(c-a)=0$ or,
  equivalently: $ \frac{x-0}{y-0}=\frac{z-0}{1-0}\iff x/y=z$.
  Since the cases $k > 1$ and\,/\,or $d > 2$ are generalizations, they
  remain $\Class{TC}^0$-hard. To see that these cases are also in
  $\Class{TC}^0$, observe that we have to consider at most $n^2$ line segments
  from which we have to pick~$k$, that is, there are at most
  $\binom{n^2}{k}\leq n^{2k}$ solution candidates. For fixed $k$, these
  candidates can be checked in parallel by a $\Class{TC}^0$-circuit and
  can be evaluated as in the case of $k=1$.
\end{proof}
\begin{corollary}
$\PLang{point-line-cover}$ is $\Para\Class{TC}^0$-complete under $\Class{AC^0}$-reduction.
\end{corollary}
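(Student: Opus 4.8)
The plan is to prove the two halves of completeness separately: membership in $\Para\Class{TC}^0$ and $\Para\Class{TC}^0$-hardness.

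For membership I would invoke Lemma~\ref{lemma:plc} together with the $\Class{TC}^0$-variant of Theorem~\ref{theorem:kernel} (the variant noted in the remark immediately following that theorem). Lemma~\ref{lemma:plc} provides a \textsc{dlogtime}-uniform family of polynomial-size $\Class{TC}^0$-circuits that compute a kernel with at most $k^2$ points, and $\PLang{point-line-cover}$ is decidable, so the theorem directly yields $\PLang{point-line-cover}\in\Para\Class{TC}^0$. It is worth noting that the naive slicewise bound $\binom{n^2}{k}\le n^{2k}$ used inside the proof of Lemma~\ref{lemma:plctcc} does \emph{not} give membership, since its exponent depends on~$k$ and is therefore not of the admissible form $f(k)\cdot n^c$; routing membership through the $\Class{TC}^0$-kernel is precisely what keeps the size bound legal.

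For hardness the idea is to lift the slicewise $\Class{TC}^0$-completeness of Lemma~\ref{lemma:plctcc} to the parameterized world. Let $(Q',\kappa')\in\Para\Class{TC}^0$ be arbitrary, witnessed by a uniform family of constant-depth threshold circuits of size $f(j)\cdot n^c$. The key observation is that for every \emph{fixed} parameter value~$j$ the slice $\{\,x\mid\kappa'(x)=j\,\}\cap Q'$ is an ordinary $\Class{TC}^0$ language, because $f(j)$ is then a constant and the slice circuit has constant depth and polynomial size. By Lemma~\ref{lemma:plctcc} the $1$-slice of $\PLang{point-line-cover}$ is $\Class{TC}^0$-hard under $\Class{AC}^0$-reductions, so this slice admits an $\Class{AC}^0$-reduction~$r_j$ into it. The reduction I would build first computes $j=\kappa'(x)$—possible in $\Class{AC}^0$ since $\kappa'$ is $\Class{AC}^0$-computable by the standing assumption on parameterizations—and then applies~$r_j$, sending~$x$ to a $\PLang{point-line-cover}$ instance whose parameter is the constant~$1$. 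Because the circuits of a parameterized reduction are allowed to depend on the parameter, this is a legitimate $\Class{AC}^0$-reduction in the parameterized sense, and its output parameter~$1$ is trivially bounded by a computable function of~$\kappa'(x)$.

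The step I expect to be the main obstacle is turning the non-uniform collection $(r_j)_j$ into a single \textsc{dlogtime}-uniform parameter-dependent circuit family. This requires that the reduction~$r_j$ be constructible uniformly from~$j$: concretely, one uses that the witnessing $\Para\Class{TC}^0$-family for $(Q',\kappa')$ is itself uniform and that $f$ is computable, so the $\Class{TC}^0$-circuit for the $j$-slice, and hence the $\Class{AC}^0$-reduction of that circuit's language to the $\Class{TC}^0$-complete $1$-slice, can be produced from~$j$ within the uniformity budget available to a circuit family whose description may depend on~$j$. Once uniformity is secured, combining membership and hardness yields the claimed $\Para\Class{TC}^0$-completeness.
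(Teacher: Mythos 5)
Your proposal is correct and follows essentially the route the paper intends: the corollary is stated without an explicit proof precisely because membership follows from the $\Class{TC}^0$-kernel of Lemma~\ref{lemma:plc} via the $\Class{TC}^0$-variant of Theorem~\ref{theorem:kernel}, and hardness follows by lifting the slicewise $\Class{TC}^0$-completeness of Lemma~\ref{lemma:plctcc} to the parameterized setting, exactly as you do. Your additional observations — that the $n^{2k}$ slicewise bound does not by itself give $\Para\Class{TC}^0$-membership, and that the parameter-indexed reductions $r_j$ must be assembled uniformly from the witnessing circuit family — are the right points to flag and are handled correctly.
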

Now assume there would be a uniform family of $\Class{AC}^0$-circuits
computing a kernel of arbitrary size for
$\PLang{point-line-cover}$. Then by Theorem~\ref{theorem:kernel} the
problem is in $\Para\Class{AC}^0$, which on the other hand implies
that for every fixed $k$ the problem must be in $\Class{AC}^0$. This
contradicts Lemma~\ref{lemma:plctcc} as it is known that
$\Class{AC}^0\subsetneq\Class{TC}^0$~\cite{FurstSS84}. Therefore, no family of
$\Class{AC}^0$-circuits can compute such a kernel. 
\begin{corollary}
$\Para\Class{AC}^0\not\ni\PLang{point-line-cover}\in\Para\Class{TC}^0$
\end{corollary}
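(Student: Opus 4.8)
The plan is to establish the two halves of the corollary separately, each following quickly from the machinery already in place. For the membership $\PLang{point-line-cover}\in\Para\Class{TC}^0$, I would combine Lemma~\ref{lemma:plc}, which provides a \textsc{dlogtime}-uniform family of $\Class{TC}^0$-circuits computing a kernel of size at most $k^2$, with the $\Class{TC}^0$-variant of Theorem~\ref{theorem:kernel} (applicable because $\Class{TC}^0$ can compute the parameterization~$\kappa$). Since any problem admitting a $\Class{TC}^0$-computable kernelization lies in $\Para\Class{TC}^0$, this settles the positive direction; it also coincides with the statement of the preceding corollary.

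For the non-membership $\PLang{point-line-cover}\notin\Para\Class{AC}^0$, I would argue by contradiction. Assume the problem were in $\Para\Class{AC}^0$, witnessed by a family $(C_{n,k})_{n,k}$ of constant-depth, polynomial-size circuits. The central observation is the slicing step: fixing the parameter to any constant~$k$ and letting only~$n$ range yields a \textsc{dlogtime}-uniform family $(C_{n,k})_n$ of ordinary $\Class{AC}^0$-circuits deciding the $k$th slice of $\PLang{point-line-cover}$. Thus membership in $\Para\Class{AC}^0$ forces every fixed slice into $\Class{AC}^0$. (One may alternatively route this through Theorem~\ref{theorem:kernel}: an $\Class{AC}^0$-computable kernel of any size would, after a constant-depth brute-force evaluation on the bounded-parameter kernel, place each slice in $\Class{AC}^0$.)

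This collides with Lemma~\ref{lemma:plctcc}, which shows that for every fixed~$k$ the $k$th slice is $\Class{TC}^0$-complete under $\Class{AC}^0$-reductions. Because the classical separation $\Class{AC}^0\subsetneq\Class{TC}^0$ holds~\cite{FurstSS84}, no $\Class{TC}^0$-hard problem can lie in $\Class{AC}^0$, so the assumption is untenable and the non-membership follows, completing the proof.

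I do not anticipate a genuine obstacle, since the statement is essentially a bookkeeping consequence of Lemmas~\ref{lemma:plc} and~\ref{lemma:plctcc} together with the $\Class{AC}^0$/$\Class{TC}^0$ separation. The only point that demands care is the slicing argument itself: one must check that specializing the uniform parameterized family to a constant parameter genuinely produces a \emph{uniform} ordinary $\Class{AC}^0$-family, so that Lemma~\ref{lemma:plctcc} applies to that family and the lower bound of~\cite{FurstSS84} can legitimately be invoked.
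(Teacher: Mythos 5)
Your proposal is correct and follows essentially the same route as the paper: membership in $\Para\Class{TC}^0$ via Lemma~\ref{lemma:plc} and the $\Class{TC}^0$-variant of Theorem~\ref{theorem:kernel}, and non-membership in $\Para\Class{AC}^0$ by slicing (equivalently, via Theorem~\ref{theorem:kernel}) to contradict the $\Class{TC}^0$-completeness of each slice from Lemma~\ref{lemma:plctcc} together with $\Class{AC}^0\subsetneq\Class{TC}^0$. No issues.
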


\section{Problems for Which Computing a Kernelization is Inherently Sequential}\label{section:seq}

As surprisingly many problems have $\Class{NC}$-computable, in
fact often even $\Class{AC}^0$-computable, kernelizations, we may ask
which problems do not have this
property. We would like to find problems for which the computation of
any kernel is $\Class{P}$-complete or, equivalently, 
which are $\Class{FPT}$-complete under
$\Class{AC}^0$- or $\Class{NC^1}$-reductions. While it is easy to find
artificial problems with this property -- such as any
$\Class{P}$-complete problem (like $\Lang{cvp}$) with the trivial
parametrization ($\kappa(x) \equiv 1$) --, no \emph{natural} problems
that are $\Class{FPT}$-complete for sensible parametrizations can be
found in the literature. We remedy this situation in the following;
but must caution the reader that in all our results the hardness of
the parameterized problem for $\Class{FPT}$ 
stems from the fact that some slice of the problem is (essentially) a
known $\Class P$-complete problem. Unfortunately, it is known~\cite{FlumG06}
that this ``cannot be helped'' since all $\Class{FPT}$-complete
problems have this property. Our main contribution here lies, thus, in
the assembly of a diverse body of relevant, non-trivial $\Class{FPT}$-problems
that will serve as starting points for further studies of the limits
of parameterized parallelization. 

\subparagraph*{Strong Backdoors to Satisfiability.}  A \emph{strong
  backdoor set} of a propositional formula $\phi$ is a set of
variables such that under any assignment of these variables the
resulting formula $\phi'$ belongs to a certain class of formulas~\cite{GaspersS12}. In the
$\PLang{strong-backdoor-\{horn,2cnf\}-sat}$ problems, we are given a
formula~$\phi$ and an integer~$k$, the question is whether $\phi$ is
satisfiable and has
a strong backdoor set of size~$k$ to Horn- or 2\textsc{cnf}-formulas,
respectively. Solving such
problems is usually done in two phases: first \emph{detect} the
backdoor set and, second, \emph{solve} the satisfiability problem of
the formula for every assignment of the backdoor set. While the first
part might seem harder in general, it is not from a parameterized
point of view: (1) A strong backdoor set to Horn
formulas is exactly a vertex cover of size $k$ in the positive primal
graph of~$\phi$, that is, the graph that has a vertex for each
variable and an edge between any two variables appearing together
positively in a clause; (2) strong backdoor sets to
\textsc{2cnf}-formulas are exactly the hitting sets of the hypergraph
that has the variables of $\phi$ as vertices and that connects three
vertices by a hyperedge if they appear together in a clause. Since
$\PLang{vertex-cover}\in\Para\Class{AC}^0$ and also
$\PLang{3-hitting-set}\in\Para\Class{AC}^0$~\cite{BannachST15,
  ChenFH17}, we can conclude:

\begin{corollary}\label{corollary:backdoor}
  There is a \textsc{dlogtime}-uniform family of
  $\Para\Class{AC^0}$-circuits that, on input of a propositional formula~$\phi$ and an integer~$k$,
  either outputs a size-$k$ strong backdoor set to $\{\text{Horn},
  \textsc{2cnf}\}$-formulas, or concludes that no such set exists.
\end{corollary}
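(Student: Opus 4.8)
The plan is to leverage the two characterizations of strong backdoor sets provided in the surrounding text, together with the already-established membership results $\PLang{vertex-cover}\in\Para\Class{AC}^0$ and $\PLang{3-hitting-set}\in\Para\Class{AC}^0$. The corollary asserts not mere decidability but that a constant-depth circuit family can actually \emph{produce} a witnessing backdoor set (or correctly report that none exists). The key subtlety is therefore that membership in $\Para\Class{AC}^0$ gives us a \emph{decision} procedure, whereas the statement demands a \emph{search} (constructive) output; bridging this gap is where the real work lies.

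First I would reduce the two cases to their graph-theoretic cores. For Horn formulas, I construct in $\Class{AC}^0$ the positive primal graph $H_\phi$ — one vertex per variable, an edge between two variables iff they co-occur positively in some clause — which is a purely local, first-order-definable transformation. By the cited fact, a size-$k$ strong backdoor set to Horn is exactly a size-$k$ vertex cover of $H_\phi$. For $\textsc{2cnf}$, I build the $3$-uniform hypergraph whose hyperedges are the triples of variables appearing together in a clause; a strong backdoor set is exactly a size-$k$ hitting set. Both reductions are $\Class{AC}^0$-computable, so it remains to show that a \emph{search} version of $\PLang{vertex-cover}$ and $\PLang{3-hitting-set}$ is in $\Para\Class{AC}^0$.

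To get the constructive output, I would invoke Theorem~\ref{theorem:kernel}: since both problems lie in $\Para\Class{AC}^0$, each admits an $\Class{AC}^0$-computable kernelization producing an instance of size at most $f(k)$. On a kernel of size bounded solely by $f(k)$, the number of candidate solutions ($k$-subsets of at most $f(k)$ vertices) is itself a function of $k$ only, so a constant-depth circuit can, by a single wide \Lang{or}/\Lang{and} combination, test all candidates in parallel, select (say) the lexicographically first valid one, and then lift it back through the kernelization's reduction map to a solution on the original instance. The only remaining piece is to handle the satisfiability conjunct in the problem definition: after fixing a backdoor set, the formula must still be satisfiable, but for a size-$k$ strong backdoor to Horn or $\textsc{2cnf}$ this is checkable on the $2^k$ reduced formulas, each of which is Horn- or $\textsc{2cnf}$-solvable; since this reduces to an $f(k)$-bounded computation it, too, folds into $\Para\Class{AC}^0$.

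The main obstacle I anticipate is the lift-back step: a kernelization map $K$ need not preserve solutions in a syntactically transparent way, so recovering an \emph{original} backdoor set from a solution on $K(\phi)$ requires that the reduction rules underlying the $\Para\Class{AC}^0$ kernels for $\PLang{vertex-cover}$ and $\PLang{3-hitting-set}$ be invertible in constant depth. I would argue this by tracing the specific $\Class{AC}^0$ kernelizations cited in \cite{BannachST15, ChenFH17} — which delete low-degree or dominated vertices and record the mapping locally — so that the inverse correspondence is again first-order definable. Once invertibility is secured, everything composes within constant depth, and since $\kappa$ is $\Class{AC}^0$-computable by hypothesis, the whole pipeline stays inside $\Para\Class{AC}^0$, yielding the corollary.
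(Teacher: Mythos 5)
Your core argument is exactly the paper's: the corollary is stated without a separate proof and follows directly from the two characterizations (strong backdoors to Horn are vertex covers of the positive primal graph; strong backdoors to \textsc{2cnf} are hitting sets of the $3$-uniform co-occurrence hypergraph), both of which are $\Class{AC}^0$-constructible, combined with the cited memberships $\PLang{vertex-cover}\in\Para\Class{AC}^0$ and $\PLang{3-hitting-set}\in\Para\Class{AC}^0$. Your worry about decision versus search is reasonable and your kernel-then-brute-force-then-lift-back resolution is a legitimate way to make the witness output explicit; the paper glosses over this, relying on the fact that the cited $\Para\Class{AC}^0$ algorithms (Buss-type kernels) are constructive.

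However, your final step is wrong and should be deleted. You claim that after fixing a size-$k$ backdoor set, checking satisfiability of the $2^k$ resulting Horn (resp.\ \textsc{2cnf}) formulas ``reduces to an $f(k)$-bounded computation'' and hence folds into $\Para\Class{AC}^0$. It does not: each of the $2^k$ reduced formulas is a Horn or \textsc{2cnf} formula over the \emph{entire} remaining variable set, not an object of size $f(k)$, and deciding its satisfiability is $\Class{P}$-complete (Horn) resp.\ $\Class{NL}$-complete (\textsc{2cnf}) under $\Class{AC}^0$-reductions. This is precisely the point of the paragraph following the corollary: it is why the paper proves $\PLang{strong-backdoor-horn-sat}$ is $\Class{FPT}$-complete and $\PLang{strong-backdoor-2cnf-sat}$ is $\Para\Class{NL}$-complete, i.e.\ \emph{not} in $\Para\Class{AC}^0$ under standard assumptions. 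Your claim, if true, would contradict those results. The corollary deliberately asserts only the \emph{detection} phase (output a backdoor set or report none exists) and says nothing about deciding satisfiability, so the offending step is unnecessary; with it removed, your proof is sound and matches the paper's.
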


The second step of solving
$\PLang{strong-backdoor-\{horn,2cnf\}-sat}$ is to solve the
satisfiability problem for $\phi$ on
every assignment to the variables of the backdoor set. While we can nicely
handle all assigments in parallel, checking if the
formulas are satisfiable in parallel is difficult. Indeed, it is known
that, under $\Class{AC}^0$-reductions,
the satisfiability problem is $\Class{NL}$-complete for
\textsc{2cnf}-formulas, and is even $\Class{P}$-complete for Horn
formulas~\cite{AllenderBISV09}.
\begin{corollary}
  $\PLang{strong-backdoor-2cnf-sat}$ is $\Para\Class{NL}$-complete
  under $\Class{AC}^0$-reduction.
\end{corollary}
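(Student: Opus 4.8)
The plan is to prove the two inclusions separately: that $\PLang{strong-backdoor-2cnf-sat}$ lies in $\Para\Class{NL}$, and that it is $\Para\Class{NL}$-hard under $\Class{AC}^0$-reductions. For membership I would follow the two-phase structure sketched before the statement. Phase one \emph{detects} a strong backdoor set~$B$ of size~$k$, which by Corollary~\ref{corollary:backdoor} is possible in $\Para\Class{AC}^0\subseteq\Para\Class{NL}$; if no such set exists we reject. Phase two \emph{solves}: since $B$ is a strong backdoor, $\phi$ is satisfiable if, and only if, there is an assignment $\alpha\colon B\to\{0,1\}$ for which the \Lang{2-sat} instance $\phi|_\alpha$ is satisfiable, and each such test lies in $\Class{NL}$. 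Because there are only $2^k=f(k)$ assignments, the existential over~$\alpha$ is handled by nondeterministically guessing a $k$-bit string, and we accept exactly when both phases succeed. The delicate point is that this composition must remain inside $\Para\Class{NL}$ and not degrade to $\Para\Class{AC}^1$: I would therefore \emph{not} store the backdoor as a list of $k$ indices (which would cost $k\log n$ bits), but instead keep only the guessed $k$-bit assignment (which fits into the $f(k)$ part of an $f(k)+O(\log n)$ space budget) and recompute the membership test ``$v\in B$'' on the fly from the detection circuit, using a logspace rank count to look up $\alpha(v)$. This is precisely a logspace composition of the $\Para\Class{AC}^0$ detection with the $\Class{NL}$ reachability test underlying \Lang{2-sat}, and it keeps the whole procedure in nondeterministic space $f(k)+O(\log n)$.

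For hardness I would argue through a single slice. Fixing $k=0$, the empty set is a strong backdoor exactly when $\phi$ is already a 2\textsc{cnf} formula, so the $0$th slice equals $\{\,\phi\mid\phi\text{ is 2\textsc{cnf} and satisfiable}\,\}$. Recognizing ``$\phi$ is 2\textsc{cnf}'' is in $\Class{AC}^0$ and the satisfiability test is \Lang{2-sat}, so this slice is in $\Class{NL}$; conversely the map $\psi\mapsto(\psi,0)$ is an $\Class{AC}^0$-reduction from \Lang{2-sat} to it, and since \Lang{2-sat} is $\Class{NL}$-complete under $\Class{AC}^0$-reductions~\cite{AllenderBISV09}, the slice is $\Class{NL}$-complete. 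To lift this to $\Para\Class{NL}$-hardness I would use the standard slice-wise argument: for any $(Q',\kappa')\in\Para\Class{NL}$ each fixed-parameter slice $Q'_j$ lies in $\Class{NL}$ and hence $\Class{AC}^0$-reduces to the $0$th slice, and collecting these reductions into one family $C_{n,j}$ indexed by the parameter yields a uniform, constant-depth, size-$f(j)\,n^c$ reduction sending every instance of $(Q',\kappa')$ to an instance of $\PLang{strong-backdoor-2cnf-sat}$ with parameter~$0$. The only thing to check here is that the family $(C_{n,j})_j$ is uniform, which follows because the slice machines are describable from~$j$ and the generic $\Class{NL}$-to-\Lang{2-sat} reduction is fixed.

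The main obstacle I anticipate is the membership composition, not the hardness. A naive argument that runs the $\Para\Class{AC}^0$ detector, writes down the backdoor, and then calls an $\Class{NL}$ solver would only certify membership in $\Para\Class{AC}^1$, since the backdoor list does not fit in the $f(k)+O(\log n)$ space of a para-logspace machine. The care required is exactly to recompute backdoor membership on demand and to carry only the $O(k)$ bits of the trial assignment, so that the combined algorithm is a genuine nondeterministic $f(k)+O(\log n)$-space procedure; verifying the uniformity and space accounting of this recomputation is where I expect to spend real effort. By contrast, once the $0$th slice has been identified with 2\textsc{cnf}-satisfiability, the hardness lifting is routine.
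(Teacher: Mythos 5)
Your proposal is correct and follows essentially the same route as the paper, which derives this corollary directly from the two-phase structure (backdoor detection in $\Para\Class{AC}^0$ via Corollary~\ref{corollary:backdoor}, then $2^k$ parallel \textsc{2cnf}-satisfiability tests) together with the $\Class{NL}$-completeness of \textsc{2cnf}-satisfiability under $\Class{AC}^0$-reductions cited from~\cite{AllenderBISV09}. Your additional care about the space-bounded composition (recomputing backdoor membership on the fly rather than storing $k\log n$ bits) makes explicit a point the paper leaves implicit, but it is the same argument.
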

\begin{corollary}\label{corollary:hronfpt}
  $\PLang{strong-backdoor-horn-sat}$ is $\Class{FPT}$-complete under
  $\Class{AC}^0$-reduction.
\end{corollary}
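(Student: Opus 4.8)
The plan is to establish the two halves of completeness separately: membership of $\PLang{strong-backdoor-horn-sat}$ in $\Class{FPT}$, and $\Class{FPT}$-hardness under $\Class{AC}^0$-reductions via a single hard slice.

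For membership I would follow the two-phase strategy already sketched for these backdoor problems. First, a size-$k$ strong backdoor set to Horn formulas is exactly a size-$k$ vertex cover of the positive primal graph of $\phi$, so it can be detected (and produced) in $\Para\Class{AC}^0\subseteq\Class{FPT}$ by Corollary~\ref{corollary:backdoor}; if none exists, we reject. Given such a backdoor set $B$, the formula $\phi$ is satisfiable if, and only if, for at least one of the $2^{|B|}=2^k$ assignments $\alpha$ to $B$ the reduced formula $\phi|_\alpha$ is satisfiable, and each $\phi|_\alpha$ is a Horn formula by definition of a strong backdoor set. Since Horn satisfiability lies in $\Class{P}$, checking all $2^k$ assignments amounts to $2^k$ polynomial-time computations, i.e.\ $\Class{FPT}$ time overall, which gives $(Q,\kappa)\in\Class{FPT}$. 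Note that we cannot hope to place the problem in a smaller class, since the satisfiability test itself is $\Class{P}$-complete.

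For hardness I would isolate the slice $k=0$. A formula has a strong backdoor set of size $0$ to Horn precisely when it is already a Horn formula, so the $0$th slice is exactly the set of satisfiable Horn formulas. This is $\Class{P}$-complete under $\Class{AC}^0$-reductions by~\cite{AllenderBISV09}: membership follows from Horn satisfiability being in $\Class{P}$ together with the $\Class{AC}^0$-testable property ``is a Horn formula,'' and hardness is the cited completeness of Horn-SAT. It then remains to lift this $\Class{P}$-complete slice to $\Class{FPT}$-hardness of the whole parameterized problem, which is exactly the characterization of $\Class{FPT}$-completeness alluded to in the introduction of this section and available from~\cite{FlumG06}.

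The step I expect to be the main obstacle is making this lifting explicit as an $\Class{AC}^0$-reduction. Concretely, for an arbitrary $(Q',\kappa')\in\Class{FPT}$ every slice $Q'_j=\{\,x\mid x\in Q',\ \kappa'(x)=j\,\}$ lies in $\Class{P}$ (run the $\Class{FPT}$ algorithm with the constant $f(j)$ folded into the polynomial bound), so the $\Class{P}$-completeness of our $0$th slice yields for each $j$ an $\Class{AC}^0$-reduction $r_j$ from $Q'_j$ into it. The reduction we want maps an input $x$ with $\kappa'(x)=j$ to the $\PLang{strong-backdoor-horn-sat}$ instance $(r_j(x),0)$, whose parameter is the constant $0\le g(j)$. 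Assembling the circuits $(r_j)_j$ into a single $\Para\Class{AC}^0$-family requires only that a description of the $\Class{AC}^0$-circuit $r_j$ be computable from $j$; this is a parameter-dependent but computable task, so the slack $f(j)$ in the uniformity condition of Definition~\ref{definition:paraac} absorbs it. The care needed lies precisely in checking that this per-slice reduction is uniform in the parameter and that mapping into the fixed slice $k=0$ respects the parameter bound — everything else is routine.
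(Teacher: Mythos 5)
Your proposal is correct and follows exactly the route the paper intends: membership via the two-phase algorithm (backdoor detection through $\PLang{vertex-cover}$ in $\Para\Class{AC}^0$ per Corollary~\ref{corollary:backdoor}, then $2^k$ Horn-\textsc{sat} tests in polynomial time), and hardness from the $k=0$ slice being satisfiable Horn-\textsc{sat}, which is $\Class{P}$-complete under $\Class{AC}^0$-reductions~\cite{AllenderBISV09}, lifted to $\Class{FPT}$-hardness by the slice-wise argument the paper attributes to~\cite{FlumG06}. Your explicit treatment of the lifting (uniformly assembling the per-slice reductions $r_j$ into a $\Para\Class{AC}^0$ reduction targeting the fixed slice) is exactly the detail the paper leaves implicit, and it is handled correctly.
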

The last corollary implies that there is no parallel kernelization
running in polylogarithmic
time for $\PLang{strong-backdoor-horn-sat}$ that
produces a kernel of any size, unless $\Class{NC}=\Class{P}$.

\subparagraph*{Mixed Integer Linear Programming.} The
$\Class{FPT}$-complete problem above is an intermediate problem
between a $\Class{P}$-complete problem 
(\Lang{horn-sat}) and a $\Class{NP}$-complete problem (\Lang{sat});
the transition between the problems is caused by the backdoor
variables. A similar intermediate problem is known for
\Lang{linear-programming} (another classical $\Class{P}$-complete
problem) and its integer variant (which is $\Class{NP}$-complete). The
intermediate version of these problems is called
$\PLang{mixed-integer-programming}$, which asks,
given a matrix $A\in\mathbb{Z}^{n\times n}$, vectors
$b\in\mathbb{Z}^{n}$, $c\in\mathbb{Z}^{n}$, and
integers $k$ and $w$, if there is a vector
$x\in\mathbb{R}^{n}$ such that $Ax\leq b$, $c^{\mathrm{T}}x\geq w$, and
such that $x[i]\in\mathbb{Z}$ for $0\leq i< k$. A celebrated result by Lenstra states that an
instance $I$ of this problem can be solved in time
$2^{O(k^3)}\cdot |I|^c$ for a suitable constant $c$, that is, the problem is in
$\Class{FPT}$. Therefore, every slice of the problem is in $\Class{P}$
and, as \Lang{linear-programming} trivially reduced to it, we get that
$k$-\Lang{mixed-integer-programming} is $\Class{P}$-complete for every
$k$ (under $\Class{NC}^1$-reductions~\cite{toran1993p}).
\begin{corollary}
  $\PLang{mixed-integer-programming}$ is $\Class{FPT}$-complete
  under $\Class{NC}^1$-reductions.
\end{corollary}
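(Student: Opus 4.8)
The plan is to assemble the statement from two ingredients that are, in essence, already in place: membership of $\PLang{mixed-integer-programming}$ in $\Class{FPT}$ and the existence of a $\Class{P}$-complete slice. For the first ingredient I would simply invoke the cited result of Lenstra: an instance $I$ with $k$ integer-constrained variables is solvable in time $2^{O(k^3)}\cdot|I|^c$, so the problem lies in $\Class{FPT}$ and, upon fixing any value of $k$, each slice is decidable in polynomial time and hence lies in $\Class{P}$.

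For the hardness direction I would exhibit a single $\Class{P}$-complete slice. The natural candidate is the slice $k=0$: with no integrality constraint the problem asks, given $A$, $b$, $c$, and $w$, for a real vector $x$ with $Ax\leq b$ and $c^{\mathrm{T}}x\geq w$, which is exactly $\Lang{linear-programming}$. This slice is $\Class{P}$-complete under $\Class{NC}^1$-reductions~\cite{toran1993p}, as already noted in the paragraph preceding the statement; since every slice is in $\Class{P}$, each slice is in fact $\Class{P}$-complete.

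To pass from ``has a $\Class{P}$-complete slice'' to ``is $\Class{FPT}$-complete'' I would use the characterization discussed at the start of this section (see~\cite{FlumG06}): a decidable problem in $\Class{FPT}$ is $\Class{FPT}$-complete under $\Class{NC}^1$-reductions precisely when one of its slices is $\Class{P}$-hard. Concretely, I would reduce the generic $\Class{FPT}$-complete problem $(\Lang{cvp},\kappa\equiv1)$ to $\PLang{mixed-integer-programming}$ by composing the $\Class{NC}^1$-reduction from $\Lang{cvp}$ to $\Lang{linear-programming}$ (available because $\Lang{cvp}\in\Class{P}$ and $\Lang{linear-programming}$ is $\Class{P}$-complete) with the embedding of $\Lang{linear-programming}$ as the $k=0$ slice. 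The constructed instance always carries parameter $0$, so the parameter of the image is bounded by the constant function $g\equiv0$, and the composition is a valid parameterized $\Class{NC}^1$-reduction; transitivity with the $\Class{FPT}$-completeness of $(\Lang{cvp},\kappa\equiv1)$ then yields $\Class{FPT}$-hardness.

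The step I expect to be most delicate is this last one, namely justifying that a single $\Class{P}$-complete slice already forces full $\Class{FPT}$-hardness, i.e.\ the converse of the fact quoted in the section's introduction. The point to be careful about is that the reduction must be a \emph{parameterized} reduction: it has to both preserve membership and keep the parameter of the produced instance bounded by a computable function of the source parameter. Here this is harmless because the target parameter is the constant $0$, but one must still verify that the underlying reduction from an arbitrary $\Class{FPT}$ problem to $(\Lang{cvp},\kappa\equiv1)$ is itself of the required parameterized shape: it encodes the tableau of an $f(k)\cdot n^c$-time computation as a circuit, which is uniform enough to be produced in parameterized $\Class{AC}^0$ (and hence $\Class{NC}^1$), with the $f(k)$-dependent size permitted by the parameterized reduction. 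Once this shape is confirmed, the composition goes through and the corollary follows.
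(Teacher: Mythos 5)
Your proposal matches the paper's own argument: membership in $\Class{FPT}$ via Lenstra's algorithm, $\Class{P}$-completeness of the slices because $\Lang{linear-programming}$ embeds trivially (you use the $k=0$ slice, the paper notes all slices), and the standard fact that an $\Class{FPT}$ problem with a $\Class{P}$-hard slice is $\Class{FPT}$-complete, which the paper invokes by citing the characterization from Flum and Grohe. Your extra care about the parameterized shape of the generic reduction is a correct elaboration of a step the paper leaves implicit, not a different route.
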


\subparagraph*{Maximum Flow with Minimum Quantities.}
The last problem we review in this section is the maximum flow problem
with minimum quantities: Inputs are directed graphs $G=(V,E)$ with
$s,t\in V$, two weight functions $u,l\colon E\to\mathbb{N}$, an
integer $w\in\mathbb{N}$, and a set of edges $B\subseteq E$; the
question is whether there is a set $A\subseteq B$ such that in
$G'=(V,E\setminus A)$ there is a valid $s$-$t$-flow $f$ of value at least $w$
that fulfills the flow conservation constraints and $l(e)\leq
f(e)\leq u(e)$ for all $e\in E\setminus A$. For $B=\emptyset$ the
problem boils down to classical maximum flow with lower bounds on the
edges, which can be solved in polynomial time~\cite{kleinbergT2006} and which is
known to be $\Class{P}$-hard under
$\Class{NC^1}$-reduction~\cite{toran1993p}. On the other hand, for
$B=E$ the problem becomes $\Class{NP}$-complete even on
serial-parallel graphs~\cite{HauglandEH11} and it is also
$\Class{NP}$-hard to 
approximate the problem within any positive
factor~\cite{ThielenW13}. The intermediate problem between this two
cases is the parameterized problem $\PLang{max-flow-quantities}$  where the cardinality of $B$ is the parameter.
\begin{lemma}\label{lemma:mfq}
  $\PLang{max-flow-quantities}$ is $\Class{FPT}$-complete under
  $\Class{NC^1}$-reduction.
\end{lemma}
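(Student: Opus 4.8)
The plan is to establish the two directions of $\Class{FPT}$-completeness separately, following the template already used for the other problems in this section: membership in $\Class{FPT}$ via a brute-force search over the parameter-bounded set~$B$, and $\Class{FPT}$-hardness by isolating a single slice that is already $\Class{P}$-complete.

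For membership, I would observe that the parameter is $k=|B|$, so there are only $2^k$ candidate sets $A\subseteq B$. For each fixed choice of~$A$, deciding whether $G'=(V,E\setminus A)$ admits an $s$-$t$-flow of value at least~$w$ subject to $l(e)\leq f(e)\leq u(e)$ and flow conservation is precisely the maximum flow problem with lower bounds on the edges, which is solvable in polynomial time via the standard super-source/super-sink reduction to ordinary maximum flow~\cite{kleinbergT2006} (first test feasibility against the lower bounds, then maximise the value and compare against~$w$). Trying all $2^k$ sets~$A$ and accepting if any of them yields a feasible flow of value at least~$w$ therefore decides the problem in time $2^{k}\cdot|I|^{c}$, placing $\PLang{max-flow-quantities}$ in $\Class{FPT}$.

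For hardness, the key point is that the $0$-slice of the problem~-- where $B=\emptyset$, forcing $A=\emptyset$ and $G'=G$~-- is exactly the maximum flow problem with lower bounds, which is $\Class{P}$-hard under $\Class{NC}^1$-reductions~\cite{toran1993p}. Since this slice uses the constant parameter value~$0$, any $\Class{NC}^1$-reduction witnessing that hardness composes into a parameterized $\Class{NC}^1$-reduction targeting $\PLang{max-flow-quantities}$: the trivially parameterized $\Class{P}$-complete problem $\Lang{cvp}$ (which is itself $\Class{FPT}$-complete) reduces under $\Class{NC}^1$ into the $\Class{P}$-hard slice, and the resulting instances all carry parameter~$0$, which is trivially bounded by a computable function of the source parameter. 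Invoking the general principle recalled in this section (cited from~\cite{FlumG06}), that a parameterized problem possessing a $\Class{P}$-complete slice is $\Class{FPT}$-hard, then gives $\Class{FPT}$-hardness, and together with membership the $\Class{FPT}$-completeness claimed in the lemma.

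The main obstacle I anticipate lies not in either direction individually but in making the hardness argument formally airtight: one must check that the instances arising for $B=\emptyset$ really coincide with the $\Class{P}$-complete flow problem of~\cite{toran1993p}~-- in particular that the lower bounds~$l$, the capacities~$u$, and the threshold~$w$ are faithfully preserved by that reduction~-- and that the constant-parameter composition genuinely satisfies the definition of a parameterized $\Class{NC}^1$-reduction. The polynomial-time solvability of flow with lower bounds and the $2^{|B|}$ enumeration are, by contrast, routine.
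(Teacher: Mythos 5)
Your proposal is correct and follows essentially the same route as the paper: membership in $\Class{FPT}$ by enumerating all $2^{|B|}$ subsets $A\subseteq B$ and solving the resulting maximum-flow-with-lower-bounds instances in polynomial time, and $\Class{FPT}$-hardness by noting that a slice of the problem (you use the $B=\emptyset$ slice, the paper notes every slice generalizes classical \Lang{max-flow}) is $\Class{P}$-complete under $\Class{NC^1}$-reductions, which by the general principle recalled in this section yields completeness.
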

\begin{proof}
  Containment in $\Class{FPT}$ follows by the simple algorithm that
  iterates over all $2^{|B|}$ possible sets $A\subseteq B$ and which
  computes a maximum flow in $G'=(V,E\setminus A)$ with, for instance,
  a variant of Ford-Fulkerson~\cite{kleinbergT2006}.
  The algorithm also implies that for every fixed $k$ the slice
  $k$-\Lang{max-flow-quantities} is in $\Class{P}$ and, since it is a
  generalization of classical $\Lang{max-flow}$, it is also $\Class{P}$-complete.
\end{proof}

\section{Conclusion and Outlook}\label{section:conclusion}

Kernelization is a fundamental concept of parameterized complexity and
we have studied its parallelization.
Since traditional descriptions of kernelization algorithms are inherently
sequential, we found it surprising how many parameterized problems
lie in $\Para\Class{AC}^0$ -- the smallest robust class in parallel
parameterized complexity theory. We found, furthermore, that for many
problems the equation ``smaller circuit class = larger kernel'' holds,
see Table~\ref{table:overview} for a summary of our results.

Apart from classifying more parameterized problems in the spirit of
this paper, namely according to how well small kernels can be computed
by small circuits, an interesting open problem is to improve any of
the $\Class{AC}^0$-kernelizations presented in the paper so
that they produce a \emph{polynomially sized} kernel (which we, at
best, can
currently do only in $\Class{TC}^0$). Perhaps even more challenging seems
to be the design of a framework for proving that polynomially sized
kernels for these problems can\emph{not} be computed in~$\Class{AC}^0$.

\bibliography{main}

\end{document}